\documentclass[journal]{IEEEtran}

\usepackage{cite}
\usepackage{amsmath,amssymb,amsfonts}
\usepackage{algorithmic}
\usepackage[ruled,lined]{algorithm2e}
\usepackage{graphicx}
\usepackage{textcomp}
\usepackage{xcolor}
\usepackage{mathtools}
\usepackage{siunitx}
\usepackage{bm}
\usepackage{braket}
\DeclareMathOperator{\Tr}{\mathrm{Tr}}
\newtheorem{theorem}{Theorem}

\newtheorem{lemma}[theorem]{Lemma}
\newtheorem{remark}{Remark}
\newtheorem{corollary}[theorem]{Corollary}

\newtheorem{proposition}[theorem]{Proposition}

\newcommand{\supp}{\operatorname{supp}}

\newcommand{\argmin}{\mathop{\rm argmin}\limits}

\newcommand{\dist}{\operatorname{dist}}

\newcommand{\nc}{\newcommand}
\nc{\ketbra}[2]{|#1\rangle\!\langle#2|}
\nc{\st}{{\text{ s.t. }}}
\nc{\tr}{\operatorname{Tr}}

\def\BibTeX{{\rm B\kern-.05em{\sc i\kern-.025em b}\kern-.08em
    T\kern-.1667em\lower.7ex\hbox{E}\kern-.125emX}}

\nc{\cM}{{\cal M}}
\nc{\cN}{{\cal N}}
\nc{\cS}{{\cal S}}
\nc{\cH}{{\cal H}}
\nc{\cG}{{\cal G}}
\nc{\cF}{{\cal F}}
\nc{\cI}{{\cal I}}
\nc{\cD}{{\cal D}}
\nc{\cB}{{\cal B}}
\nc{\cT}{{\cal T}}
\nc{\cL}{{\cal L}}

\def\Label#1{\label{#1}\ [\ \text{#1}\ ]\ }
\def\Label{\label}

\begin{document}
\title{Conditions for Global Optimality in Quantum Arimoto–Blahut Algorithms}
\author{Geng~Liu and Masahito~Hayashi%
\thanks{M. Hayashi was supported in part by the Guangdong Provincial Quantum Science Strategic Initiative under Grant GDZX2505003, the General R\&D Projects of the 1+1+1 CUHK--CUHK(SZ)--GDST Joint Collaboration Fund under Grant GRDP2025-022.}%
\thanks{Geng Liu is with the School of Science and Engineering, The Chinese University of Hong Kong, Shenzhen, China, and with the International Quantum Academy (SIQA), Shenzhen, China.}%
\thanks{Masahito Hayashi is with the School of Data Science, The Chinese University of Hong Kong, Shenzhen, China, with the International Quantum Academy (SIQA), Shenzhen, China, and with the Graduate School of Mathematics, Nagoya University, Nagoya, Japan (e-mail: hmasahito@cuhk.edu.cn).}}

\maketitle
\begin{abstract}
Generalized Arimoto--Blahut (AB) algorithms are widely used in information theory and quantum optimization, but monotonic objective decrease and numerical stabilization do not guarantee global optimality. 
We establish necessary and sufficient conditions for the global optimality of full-rank AB fixed points for convex differentiable objectives under linear constraints. 
An AB fixed point is globally optimal if and only if the AB update direction and the objective gradient differ by an element of the constraint normal space at that point. The same compatibility condition characterizes agreement between individual AB and mirror-descent (MD) updates, whereas pathwise equivalence requires it along the entire common trajectory. 
Thus, an AB algorithm may follow a trajectory different from MD and still reach the global optimum. We also develop a posteriori optimality certificates based on feasible directional derivatives, including finite-difference upper bounds on the objective gap that require only objective evaluations. 
For channel relative entropy between dephasing and depolarizing channels, we analytically identify the global minimizer as the unique full-rank AB fixed point, although pathwise equivalence fails. 
Numerical experiments illustrate convergence of AB and MD along different trajectories and validate the certificates. By contrast, an amplitude-damping example shows that a monotone AB iteration can stabilize at a suboptimal fixed point, whose nonoptimality is detected by the finite-difference certificate. These results provide structural and computable criteria for assessing global optimality in quantum AB algorithms.
\end{abstract}

\begin{IEEEkeywords}
Arimoto--Blahut algorithm, fixed points, global optimality, mirror descent, quantum information theory.
\end{IEEEkeywords}

\section{Introduction}

Arimoto--Blahut (AB) algorithms play a central role in information theory \cite{Arimoto,Blahut}. Originating from the computation of classical channel capacity and rate-distortion functions, these algorithms provide deterministic closed-form iterations and avoid explicit computation of the objective gradient. Beyond their classical origins, AB-type methods have also been extended in quantum information theory, for example in problems involving quantum relative entropy, channel divergences, channel capacities, and quantum information bottleneck \cite{Nagaoka,Dupuis,Sutter,Li-Cai,RISB,hayashi2024ab,Hay25,HY,HSF2,SCKW}. In these problems, the appeal of such algorithms is especially clear: many objective functions involve matrix logarithms, noncommutative relative entropy, or implicit optimization over density operators, for which direct computation of the objective gradient can be prohibitively expensive.

Despite these computational advantages, the global optimality of the states returned by generalized quantum AB algorithms requires careful analysis. Monotonic decrease of the objective and numerical stabilization do not by themselves establish optimality. Even for a convex objective, an AB fixed point need not satisfy the first-order optimality condition, because the AB update direction may differ from the objective gradient. It is therefore important to identify conditions under which an AB fixed point is globally optimal and to develop practical methods for assessing the accuracy of computed iterates.

A common modern viewpoint is to interpret AB updates as special instances of entropic mirror descent \cite{HSF1,HSF2}. This interpretation is attractive because mirror descent is well studied, and therefore one may hope to import its convergence guarantees directly to AB algorithms.
However, the equivalence results in \cite{HSF1,Hay25} require specific structural conditions on $\Omega(\rho)$. In many practical problems, two features make this relationship more delicate. First, the feasible set may be subject to additional linear constraints, in which case the update involves an $e$-projection and equivalence must be analyzed relative to the feasible geometry. Second, the function $\Omega(\rho)$ used in the generalized AB algorithm \cite{hayashi2024ab} is generally introduced through the representation $f(\rho):=\tr\rho\Omega(\rho)$ rather than defined as the gradient $\nabla f(\rho)$. Although $\Omega(\rho)$ can agree with $\nabla f(\rho)$ up to normalization terms under certain assumptions, those assumptions are restrictive and need not hold in general. Without these assumptions, the mirror-descent interpretation alone does not settle the optimality of an AB fixed point. This leads to the following question:

\begin{itemize}
    \item \textit{What conditions ensure the global optimality of AB fixed points, even when the AB and MD algorithms are not equivalent?}
\end{itemize}

We answer this question by comparing the fixed-point condition of AB with the first-order optimality condition of the objective. Under linear constraints, the AB direction at a full-rank fixed point has no component along feasible tangent directions. For a convex differentiable objective, global optimality at such a point is equivalent to the objective gradient having the same property. We therefore obtain the following characterization: a full-rank AB fixed point $\rho^*$ is a global minimizer over the feasible set $\cM$ if and only if $\Omega(\rho^*)-\nabla f(\rho^*)\in N_{\cM}$, where $N_{\cM}$ is the normal space generated by normalization and the linear constraints.

This characterization also explains the role of mirror descent in assessing AB optimality. We show that an AB update at a full-rank state $\rho$ coincides with the corresponding MD update if and only if $\Omega(\rho)-\nabla f(\rho)\in N_{\cM}$. Thus, pathwise equivalence requires this compatibility condition along the entire common trajectory, while global optimality of a full-rank AB fixed point requires it only at that point. The two requirements differ in where the condition must hold. Failure of pathwise equivalence therefore does not by itself rule out global optimality of an AB fixed point.

Recently, deterministic zeroth-order mirror descent driven by general vector fields has connected finite-difference oracles, generalized Arimoto--Blahut updates, and trajectory-wise a posteriori
certification \cite{hayashi2026deterministic}. Whereas \cite{liu2026posteriori} focuses on the a posteriori certification of iterates along a given AB trajectory, the present work focuses on the structural conditions characterizing global optimality of AB fixed points and their relationship to AB--MD equivalence.
Building on the certification viewpoint developed in \cite{liu2026posteriori,hayashi2026deterministic}, we reformulate the fixed-point condition, which involves the objective gradient $\nabla f$, in terms of feasible directional derivatives and develop practical optimality certificates. At a full-rank AB fixed point, global optimality is equivalent to the vanishing of all feasible first-order variations. More generally, the distance from $\nabla f(\rho)$ to $N_{\cM}$ provides a rigorous upper bound on the objective gap, and finite differences along feasible tangent directions yield a computable conservative bound using only objective evaluations. The certification procedure therefore retains the computational appeal of the AB iteration: it does not require explicitly computing $\nabla f(\rho)$.

To illustrate the global-optimality criterion and its practical verification, we study optimization problems involving the negative quantum relative entropy between channels. For dephasing and depolarizing channels, we analytically identify the global minimizer as the unique full-rank AB fixed point in both unconstrained and linearly constrained settings, although the AB and MD updates do not agree in general. In the reported numerical experiments, the two algorithms follow different trajectories and exhibit different empirical convergence behavior over the plotted iterations, while the proposed certificates correctly bound the objective gap. Conversely, an amplitude-damping example shows numerically that a monotonically decreasing AB iteration can stabilize at a suboptimal fixed point, which the finite-difference certificate detects.
Together, these examples show that failure of pathwise equivalence is compatible with either global success or global failure; the global optimality condition at fixed points distinguishes between the two cases.

The remainder of this paper is organized as follows. Section~\ref{sec: preliminary} gives brief preliminaries and introduces the mirror-descent algorithm and the generalized quantum Arimoto--Blahut algorithm. Section~\ref{sec: equivalence} characterizes update-level and pathwise equivalence, relates accumulation points to AB fixed points, establishes the global-optimality criterion for full-rank AB fixed points, and develops a posteriori optimality certificates. Section~\ref{sec: examples} applies these results to unconstrained and linearly constrained channel-relative-entropy problems and evaluates the certificates numerically. Section~\ref{sec: counterexample} shows that monotonic convergence of the AB iteration does not guarantee global optimality and uses the certificate to identify a suboptimal fixed point. Section~\ref{sec: discussion} concludes with a discussion and open directions.

\section{Preliminaries}\label{sec: preliminary}
\subsection{Optimization Problem and Mixture Family}
In this paper, we consider the minimization problem over a finite-dimensional Hilbert space $\cH$. Let $\cB(\cH)$ be the set of Hermitian matrices on $\cH$, and let $\cS(\cH)$ be the set of density matrices on $\cH$. Given a continuous function $\Omega: \cS(\cH)\to \cB(\cH)$, we consider the following minimization problem
\begin{align}
\label{problem: unconstrained}
    \min_{\rho\in\cS(\cH)} \tr\rho\Omega(\rho).
\end{align}
The objective function $f$ is defined as $f(\rho) = \tr \rho\Omega(\rho)$. We note that many objective functions in quantum information theory can be written in the form \eqref{problem: unconstrained}. For example, for a relative-entropy-type objective function, the function $\Omega$ can be composed of logarithmic, linear, and constant terms in $\rho$. Moreover, this problem formulation covers the optimization of a quantum operation, since one can transform an optimization problem over quantum operations into an optimization problem over quantum states on an enlarged Hilbert space by using purification \cite{hayashi2024ab}.

While the above problem considers optimization over $\cS(\cH)$, practical problems usually contain additional linear constraints. To handle these linear constraints, we use the concept of a mixture family \cite{Hmixture}. Given $k$ linearly independent Hermitian matrices $H_1,\ldots,H_k$ on $\cH$ and a constant vector $c = (c_1,\ldots,c_k)^T\in\mathbb{R}^k$, we define the mixture family $\cM$ as follows:
\begin{align}
    \cM :=\{\rho\in \cS(\cH)| \tr\rho H_j = c_j \quad \text{for } j = 1,\ldots,k\}.
\end{align}
With linear constraints, the minimization problem is then formulated as follows:
\begin{align}
    \min_{\rho\in\cM}\tr\rho\Omega(\rho).
\end{align}

\subsection{Mirror Descent Algorithm}
Although gradient descent is a powerful algorithm that assumes a Euclidean parameter space, mirror descent \cite{beck2003mirror} is more suitable for quantum information theory because it replaces the Euclidean norm with a Bregman divergence, which is more appropriate for the natural geometry of quantum states. Let \(f:\mathcal S(\mathcal H)\to \mathbb R\) be a differentiable objective function. Given the current
iterate \(\rho^{(t)}\in\cM\), the mirror descent update with
stepsize \(1/\gamma\) is defined by
\begin{align}
    \rho^{(t+1)}
=
\argmin_{\sigma\in\cM}
\left\{
\left\langle \nabla f(\rho^{(t)}),\sigma-\rho^{(t)}\right\rangle
+
\gamma D_{\varphi}(\sigma\|\rho^{(t)})
\right\},
\end{align}
where $D_{\varphi}(\cdot\|\cdot)$ is the Bregman divergence, and the most natural choice of Bregman divergence in quantum information theory is quantum relative entropy. Thus, the MD update is
\begin{align}
    \rho^{(t+1)}
=
\argmin_{\sigma\in\cM}
\left\{
\left\langle \nabla f(\rho^{(t)}),\sigma-\rho^{(t)}\right\rangle
+
\gamma D(\sigma\|\rho^{(t)})
\right\},
\end{align}
where $D(\sigma\|\rho) = \tr\sigma(\log\sigma - \log \rho)$.

Since
\(\langle \nabla f(\rho^{(t)}),\rho^{(t)}\rangle\) is independent of
\(\sigma\), the update can also be written as
\begin{align}
    \rho^{(t+1)}
=
\argmin_{\sigma\in\cM}
\left\{
\tr\sigma\nabla f(\rho^{(t)})
+
\gamma D(\sigma\|\rho^{(t)})
\right\}.
\end{align}
The mirror descent algorithm is shown in Algorithm~\ref{AL-MD}.

\begin{algorithm}
\caption{Mirror Descent Algorithm}
\Label{AL-MD}
\begin{algorithmic}
\STATE {Choose the initial value $\rho^{(0)} \in \mathcal{M}$;} 
\REPEAT 
\STATE Calculate
\STATE $\displaystyle \rho^{(t+1)}
=
\argmin_{\sigma\in\cM}
\left\{
\tr\sigma\nabla f(\rho^{(t)})
+
\gamma D(\sigma\|\rho^{(t)})
\right\}$;
\UNTIL{convergence.} 
\end{algorithmic}
\end{algorithm}

Mirror descent performs a proximal step with respect to the
quantum relative entropy and uses the objective gradient \(\nabla f(\rho^{(t)})\) as the update direction \cite{tsuda2005matrix}. The generalized Arimoto--Blahut algorithm has a similar updating rule, but its update direction is given by \(\Omega(\rho^{(t)})\) rather than by \(\nabla f(\rho^{(t)})\).

\subsection{Generalized Quantum Arimoto--Blahut Algorithm}
The Arimoto--Blahut algorithm is a well-known alternating algorithm with broad applications in quantum information theory. To describe the AB algorithm, we introduce the $e$-projection of $\rho$ onto ${\cal M}$, which is defined as
$\Gamma^{(e)}_{{\cal M}}[\rho]:=
\argmin_{ \sigma \in {\cal M}}
D(\sigma\|\rho)$ \cite{Amari-Nagaoka}.
Calculation of the $e$-projection $\Gamma^{(e)}_{{\cal M}}[\rho]$ relies on the solution of the following equations:
\begin{align}
\frac{\partial}{\partial \tau^i}
\log \Tr \exp
(\log \rho + \sum_{j=1}^k H_j \tau^j )
=c_i.
\end{align}
If the solution of the above equations is $\tau_*=(\tau_*^1, \ldots, \tau_*^k)$, then the $e$-projection $\Gamma^{(e)}_{{\cal M}}[\rho]$ is given by
$    \Gamma^{(e)}_{{\cal M}}[\rho] = C \exp (\log \rho + \sum_{j=1}^k H_j \tau_*^j )$,
where $C$ is a normalizing constant~\cite{H23}.
Instead of solving these equations directly, their solution can be obtained by considering the following minimization problem:
\begin{align}
\tau_*:=
\argmin_{\tau }
\log \Big(\Tr \Big(\exp
(\log \rho + \sum_{j=1}^k H_j \tau^j )\Big)\Big)
-\sum_{i=1}^k \tau^i c_i.\Label{BNB}
\end{align}
We note that the above function
is convex for $\tau$ \cite[Section III-C]{H23}.

The quantum Arimoto--Blahut algorithm is shown in Algorithm~\ref{AL1}. We employ
the conversion function $\cF_3[\cdot]:\cS(\cH)\to\cS(\cH)$, defined by
${\cal F}_3[\sigma]:= \frac{1}{\kappa[\sigma]}
\exp( \log \sigma -\frac{1}{\gamma} \Omega[\sigma])$,
where $\kappa[\sigma]$ is the normalization factor of
$\Tr \exp( \log \sigma -\frac{1}{\gamma} \Omega[\sigma])$.

\begin{algorithm}
\caption{Quantum AB Algorithm}
\Label{AL1}
\begin{algorithmic}
\STATE {Choose the initial value $\rho^{(0)} \in \mathcal{M}$;} 
\REPEAT 
\STATE Calculate
\STATE $\displaystyle \rho^{(t+1)}:=\Gamma^{(e)}_{{\cal M}}[{\cal F}_3[\rho^{(t)}]]$;
\UNTIL{convergence.} 
\end{algorithmic}
\end{algorithm}

\section{Global Optimality Conditions and Mirror-Descent Equivalence}\label{sec: equivalence}
This section develops conditions for the global optimality of full-rank AB fixed points and methods for certifying the accuracy of AB iterates. We begin with the geometry induced by normalization and linear constraints and use it to characterize agreement between AB and MD updates. We then relate full-rank accumulation points to AB fixed points, establish the global-optimality criterion for convex differentiable objectives, and derive a posteriori certificates. The analysis identifies how the condition for fixed-point optimality differs from the condition for pathwise equivalence to MD.

\subsection{Geometric Setting and Update Maps}

Throughout this section, let
\begin{align}
\cM:=\{\rho\in\cS(\cH):\tr(\rho H_j)=c_j,\ j=1,\ldots,k\},
\end{align}
where $\{I,H_1,\ldots,H_k\}$ is linearly independent and $\cM$ contains a positive-definite state.  We work on the relative interior
\begin{align}
\cM_{++}:=\{\rho\in\cM:\rho\succ0\},
\end{align}
which is the natural domain of the logarithmic updates. All results in this section concern full-rank iterates in
$\mathcal M_{++}$. With respect to the Hilbert--Schmidt inner product, the tangent and normal spaces of the affine hull of $\cM$ are
\begin{align}
T_{\cM}&:=\{X\in\cB(\cH):\tr X=0,\notag\\
&\qquad \tr(XH_j)=0,\ j=1,\ldots,k\},\label{eq:tangent-space}\\
N_{\cM}&:=T_{\cM}^{\perp} =\operatorname{span}\{I,H_1,\ldots,H_k\}.\label{eq:normal-space}
\end{align}
Assume that $f(\rho)=\tr[\rho\Omega(\rho)]$ is differentiable relative to the affine hull of $\cM$.  For simplicity, we denote the update maps of the AB algorithm and the MD algorithm by $\cT_{AB}(\cdot)$ and $\cT_{MD}(\cdot)$, respectively. Thus for $\gamma>0$, we have
\begin{align}
\cT_{AB}(\rho)&:=\Gamma_{\cM}^{(e)}[\cF_3[\rho]],\label{eq:AB-map}\\
\cT_{MD}(\rho)&:=\argmin_{\sigma\in\cM}
\{\tr[\sigma\nabla f(\rho)]+\gamma D(\sigma\|\rho)\}.\label{eq:MD-map}
\end{align}
Let $P_T$ be the Hilbert--Schmidt orthogonal projection onto $T_{\cM}$.


\subsection{Update-Level and Pathwise Equivalence}
In previous work, most analyses of the convergence and global optimality of Algorithm~\ref{AL1} rely on the following two conditions:
\begin{itemize}
    \item[(a1)] All pairs $(\rho^{(t+1)},\rho^{(t)})$ satisfy the following condition with $(\rho,\sigma) = (\rho^{(t+1)},\rho^{(t)})$
    \begin{align}
        D_{\Omega}(\rho\|\sigma):= \tr\rho[\Omega(\rho) - \Omega(\sigma)] \leq \gamma D(\rho\|\sigma), \label{eq: condition a1}
    \end{align}
    for a sufficiently large $\gamma$. \label{condition: convergence 1}

    \item[(a2)] For any $\rho,\sigma\in\cM$, we have $D_{\Omega}(\rho\|\sigma)\geq 0$.\label{condition: convergence 2} 
\end{itemize}

Under condition (a1), \cite[Theorem 3.3]{RISB} and \cite[Theorem 1]{hayashi2024ab} guarantee that the algorithm will iteratively decrease the value of the objective function. Thus, this property guarantees convergence of the objective values of Algorithm~\ref{AL1}. However, Algorithm~\ref{AL1} may become trapped in a local minimum if no additional conditions are imposed. To ensure the global optimality of Algorithm~\ref{AL1}, it has been shown in \cite{hayashi2024ab,HSF1,RISB} that the objective value converges to the global optimum if condition (a2) also holds. Under these conditions, \cite{HSF1} proposed the following proposition:

\begin{proposition}\cite{HSF1}\label{proposition: equivalence by Kerry}
    Consider a continuous function $\Omega: \cS(\cH)\to \cB(\cH)$ satisfying conditions (a1) and (a2). Define the function
    \begin{align}
        f: \cB(\cH)\to \mathbb{R},\quad f(\rho) = \tr\rho\Omega(\rho). 
    \end{align}
    Then $f$ is a convex and differentiable function on $\cB(\cH)$, and the AB iterates are identical to the MD iterates with step size $t_k = \frac{1}{\gamma}$.
\end{proposition}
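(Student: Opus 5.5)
The plan has two parts. First I show that $f$ is convex and differentiable with $\nabla f(\rho)=\Omega(\rho)$ up to a multiple of $I$ (or modulo $N_{\cM}$). Then I check that AB and MD updates coincide because their directions differ only by an element of $N_{\cM}$.

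\textbf{Step 1 (subgradient inequality).} Condition (a2) gives, for any $\rho,\sigma$,
\begin{align}
f(\rho)-f(\sigma)-\tr[(\rho-\sigma)\Omega(\sigma)]=\tr\rho[\Omega(\rho)-\Omega(\sigma)]=D_\Omega(\rho\|\sigma)\ge 0 .
\end{align}
The identity uses $f(\sigma)=\tr\sigma\Omega(\sigma)$. It says that the affine functional $\rho\mapsto f(\sigma)+\tr[(\rho-\sigma)\Omega(\sigma)]$ minorizes $f$ and is tight at $\sigma$. So $\Omega(\sigma)$ is a subgradient of $f$ at every $\sigma$. Since $f$ is then the supremum of affine functions, it is convex.

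\textbf{Step 2 (differentiability).} Condition (a1) supplies the matching upper bound. Swapping roles,
\begin{align}
f(\rho)-f(\sigma)-\tr[(\rho-\sigma)\Omega(\sigma)]=D_\Omega(\rho\|\sigma)\le\gamma D(\rho\|\sigma).
\end{align}
I read (a1) as holding for all pairs needed for this argument; this is how \cite{HSF1} uses it. Near a full-rank $\sigma$, $D(\sigma+tX\|\sigma)=O(t^2)$ for traceless $X$. The lower bound from Step 1 and this upper bound together give
\begin{align}
f(\sigma+tX)=f(\sigma)+t\tr[X\Omega(\sigma)]+O(t^2).
\end{align}
Hence $f$ is differentiable along the affine hull of the state space, with gradient $\Omega(\sigma)$ modulo $I$. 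On the boundary or off the trace-one hyperplane, I would instead argue via continuity of $\Omega$ and uniqueness of the subgradient. A convex function whose subgradient map admits a continuous selection is differentiable (a standard convex-analysis fact). The extension to all of $\cB(\cH)$ is by homogeneous extension, as in \cite{HSF1}. This step is the main obstacle, since (a1) is literally stated only along iterates. I would try first to use the sandwich $0\le D_\Omega\le \gamma D$ locally, and fall back on the continuity argument if that is not available.

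\textbf{Step 3 (equivalence of iterates).} Write the MD objective explicitly:
\begin{align}
\tr\sigma\nabla f(\rho)+\gamma D(\sigma\|\rho)=\gamma D\!\left(\sigma\,\middle\|\,\exp(\log\rho-\tfrac1\gamma\nabla f(\rho))\right)+\text{const}.
\end{align}
So $\cT_{MD}(\rho)=\Gamma^{(e)}_{\cM}[C\exp(\log\rho-\tfrac1\gamma\nabla f(\rho))]$. By Step 2, $\nabla f(\rho)=\Omega(\rho)+cI$ on the relevant hull. The shift by $cI$ is absorbed into the normalization $\kappa$, so the argument of $\Gamma^{(e)}_{\cM}$ is exactly $\cF_3[\rho]$. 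More generally, any component in $\operatorname{span}\{H_j\}$ would also be absorbed, since $\Gamma^{(e)}_{\cM}$ is invariant under adding $\sum_j\tau^jH_j$ to the logarithm. Therefore $\cT_{MD}(\rho)=\cT_{AB}(\rho)$ for every $\rho$. By induction from the common initial point $\rho^{(0)}$, the iterates agree with step size $1/\gamma$.
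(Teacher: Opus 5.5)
Your proposal is correct and follows the route the paper indicates. The paper does not reprove this cited result; it only remarks that (a1)--(a2) yield $\nabla f=\Omega$ on $\cM$, hence condition (b2), and your Step~3 is exactly the (b2)$\Rightarrow$(b1) direction of Theorem~\ref{theorem: equivalence of each step}, applied inductively along the common trajectory. Because (a1) is only assumed along iterate pairs, make your fallback the main differentiability argument: applying (a2) in both orders gives $0\le D_\Omega(\rho\|\sigma)\le\tr[(\rho-\sigma)(\Omega(\rho)-\Omega(\sigma))]=o(\|\rho-\sigma\|_2)$ by continuity of $\Omega$, so $f$ is differentiable on $\cM_{++}$, relative to the affine hull of $\cM$, with gradient $\Omega$ modulo $N_{\cM}$, and (a1) is not used at all.
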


Proposition~\ref{proposition: equivalence by Kerry} tells us that, when conditions (a1) and (a2) hold for $\Omega$, the AB algorithm has the same iterates as MD, making the two algorithms equivalent. However, although condition (a1) can be satisfied by choosing a sufficiently large value of $\gamma$ and can be easily verified by evaluating the inequality \eqref{eq: condition a1} along the computed trajectory, condition (a2) is almost impossible to verify numerically, and an analytical proof is frequently intractable in practical applications. In fact, condition (a2) does not necessarily hold in general, leading to the failure of the above equivalence result and global optimality analysis.

To understand the relationship between the AB algorithm and the MD algorithm, we reconsider this problem from a different perspective. Rather than deducing equivalence from global assumptions on $\Omega$, we compare the two proximal subproblems at a fixed state $\rho$. The next lemma shows that the $e$-projection formulation of the AB map admits exactly the same variational form, with $\Omega(\rho)$ replacing the objective gradient $\nabla f(\rho)$.

\begin{lemma}\label{lemma:variational-AB}
For every $\rho\in\cM_{++}$,
\begin{align}
\cT_{AB}(\rho)=\argmin_{\sigma\in\cM}
\{\tr[\sigma\Omega(\rho)]+\gamma D(\sigma\|\rho)\}.
\label{eq:AB-proximal-form}
\end{align}
\end{lemma}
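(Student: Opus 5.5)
The plan is to rewrite the proximal objective in \eqref{eq:AB-proximal-form} as a relative entropy with respect to the unnormalized operator $\exp(\log\rho-\frac{1}{\gamma}\Omega(\rho))$, and then recognize the minimization as the $e$-projection of $\cF_3[\rho]$. Fix $\rho\in\cM_{++}$, so $\log\rho$ is well defined, and set
\begin{align*}
\tilde\rho:=\cF_3[\rho]=\frac{1}{\kappa[\rho]}\exp\Bigl(\log\rho-\frac{1}{\gamma}\Omega(\rho)\Bigr).
\end{align*}
This is a full-rank density matrix, and it satisfies $\log\tilde\rho=\log\rho-\frac{1}{\gamma}\Omega(\rho)-(\log\kappa[\rho])I$.

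For any $\sigma\in\cM\subset\cS(\cH)$ with $\tr\sigma=1$, we compute
\begin{align*}
\tr[\sigma\Omega(\rho)]+\gamma D(\sigma\|\rho)
&=\gamma\tr\sigma\Bigl(\log\sigma-\log\rho+\frac{1}{\gamma}\Omega(\rho)\Bigr)\\
&=\gamma\tr\sigma(\log\sigma-\log\tilde\rho)-\gamma\log\kappa[\rho]\\
&=\gamma D(\sigma\|\tilde\rho)-\gamma\log\kappa[\rho].
\end{align*}
The second equality uses $\tr\sigma=1$ to absorb the constant $\log\kappa[\rho]$. If $\sigma$ is not full rank, we use the convention $0\log0=0$; both $D(\sigma\|\rho)$ and $D(\sigma\|\tilde\rho)$ are then finite because $\rho$ and $\tilde\rho$ are full rank, so the identity still holds.

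Since $\gamma>0$ and $-\gamma\log\kappa[\rho]$ does not depend on $\sigma$, the two objectives have the same minimizers over $\cM$. Hence
\begin{align*}
\argmin_{\sigma\in\cM}\{\tr[\sigma\Omega(\rho)]+\gamma D(\sigma\|\rho)\}=\argmin_{\sigma\in\cM}D(\sigma\|\tilde\rho)=\Gamma^{(e)}_{\cM}[\cF_3[\rho]]=\cT_{AB}(\rho),
\end{align*}
using the definition of the $e$-projection and \eqref{eq:AB-map}.

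The main obstacle is to make sure the $\argmin$ is a single, well-defined point, so that the identity is an equality of points rather than of sets. Existence follows because $\cM$ is compact and $\sigma\mapsto D(\sigma\|\tilde\rho)$ is continuous on $\cS(\cH)$ when $\tilde\rho\succ0$. Uniqueness follows from strict convexity of $\sigma\mapsto\tr\sigma\log\sigma$ on the convex set $\cM$, since the remaining part $-\tr\sigma\log\tilde\rho$ is linear. I would also note, referring to \cite{H23}, that this unique minimizer has the exponential-family form $C\exp(\log\tilde\rho+\sum_j H_j\tau_*^j)$. It is therefore full rank, which is consistent with the definition of $\Gamma^{(e)}_{\cM}$ given earlier, even though the minimization ranges over all of $\cM$ and not only $\cM_{++}$.
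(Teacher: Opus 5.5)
Your proof is correct and follows the paper's argument. Both rewrite $D(\sigma\|\cF_3[\rho])$ as $D(\sigma\|\rho)+\frac{1}{\gamma}\tr\sigma\Omega(\rho)+\log\kappa[\rho]$ using $\tr\sigma=1$, and conclude that scaling by $\gamma>0$ and dropping the $\sigma$-independent constant leaves the minimizer unchanged. Your existence, uniqueness, and full-rank remarks are extra here; the paper supplies them separately in the subsequent lemma on entropic proximal minimizers.
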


\begin{proof}
By definition,
$\log\cF_3[\rho]=\log\rho-\gamma^{-1}\Omega(\rho)-\log\kappa[\rho]$.
Consequently, for every $\sigma\in\cM$,
\begin{align*}
    D(\sigma\|\cF_3[\rho]) &= \tr\sigma(\log \sigma - \log \cF_3[\rho])\\
    &= \tr\sigma(\log\sigma - \log\rho +\frac{1}{\gamma}\Omega(\rho)+\log\kappa[\rho])\\
    &= D(\sigma\|\rho) + \frac{1}{\gamma}\tr\sigma\Omega(\rho) +\log\kappa[\rho].
\end{align*}
The last term is independent of $\sigma$.  Multiplication by $\gamma$ therefore leaves the minimizer unchanged and gives \eqref{eq:AB-proximal-form}.
\end{proof}

\begin{lemma}[Entropic proximal minimizers]
\label{lemma:proximal-well-posedness}
Let $\rho\in\cM_{++}$, $A\in\cB(\cH)$, and $\gamma>0$. Then the problem
\begin{align}
\min_{\sigma\in\cM}
\left\{\tr(\sigma A)+\gamma D(\sigma\|\rho)\right\}
\label{eq:general-proximal-problem}
\end{align}
has a unique minimizer, and this minimizer belongs to $\cM_{++}$.
\end{lemma}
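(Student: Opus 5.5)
The plan is to rewrite the objective as a single relative entropy and then use existence, uniqueness and the structure of the $e$-projection. Define the full-rank state
\begin{align*}
\tilde\rho:=\frac{1}{Z}\exp\!\Big(\log\rho-\gamma^{-1}A\Big),\qquad Z:=\tr\exp\!\Big(\log\rho-\gamma^{-1}A\Big),
\end{align*}
which is well defined because $\rho\succ0$. The computation in the proof of Lemma~\ref{lemma:variational-AB} applies verbatim with $\Omega(\rho)$ replaced by $A$. It gives
\begin{align*}
\tr(\sigma A)+\gamma D(\sigma\|\rho)=\gamma D(\sigma\|\tilde\rho)-\gamma\log Z
\end{align*}
for all $\sigma\in\cM$. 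Problem \eqref{eq:general-proximal-problem} is therefore equivalent to computing the $e$-projection $\Gamma^{(e)}_{\cM}[\tilde\rho]$, and it suffices to show that this projection exists, is unique, and is full rank.

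\textbf{Existence and uniqueness.} $\cM$ is a nonempty compact convex subset of $\cS(\cH)$. Since $\tilde\rho\succ0$, the map $\sigma\mapsto D(\sigma\|\tilde\rho)=-S(\sigma)-\tr\sigma\log\tilde\rho$ is finite and continuous on $\cS(\cH)$. A minimizer therefore exists. The von Neumann entropy is strictly concave, so the map is strictly convex, and the minimizer $\sigma^*$ is unique.

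\textbf{Full rank.} This is the step I expect to carry the real content, and I would argue by perturbation toward an interior point. By hypothesis, $\cM$ contains a positive-definite state $\sigma_0$; for instance, $\rho$ itself. Set $\sigma_\epsilon:=(1-\epsilon)\sigma^*+\epsilon\sigma_0\in\cM$. The linear part $-\tr\sigma\log\tilde\rho$ has a finite derivative in $\epsilon$. Suppose $\sigma^*$ has a nontrivial kernel with projection $P$. Then $-S(\sigma_\epsilon)$ has right derivative $-\infty$ at $\epsilon=0$. This follows from the standard estimate that $\tr\sigma_\epsilon\log\sigma_\epsilon$ restricted near $\ker\sigma^*$ behaves like $\epsilon\tr(P\sigma_0P)\log\epsilon$ with $\tr(P\sigma_0P)>0$. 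One can also use the known formula for the directional derivative of entropy, or the eigenvalue perturbation bound $\frac{d}{d\epsilon}\big(-S(\sigma_\epsilon)\big)=\tr\big((\sigma_0-\sigma^*)\log\sigma_\epsilon\big)\to-\infty$, since $\tr(\sigma_0\log\sigma_\epsilon)\to-\infty$ as smallest eigenvalues vanish on a subspace where $\sigma_0$ has positive weight. Hence $D(\sigma_\epsilon\|\tilde\rho)<D(\sigma^*\|\tilde\rho)$ for small $\epsilon>0$, contradicting minimality. So $\sigma^*\succ0$, that is, $\sigma^*\in\cM_{++}$.

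The main obstacle is making the derivative blow-up rigorous in the noncommutative setting. I would handle it with the convex inequality $-S(\sigma_\epsilon)\le(1-\epsilon)(-S(\sigma^*))-\epsilon(-S(\sigma_0))-\big[\text{mixing gain}\big]$. Alternatively, I would apply the operator-concavity lower bound $S(\sigma_\epsilon)\ge (1-\epsilon)S(\sigma^*)+\epsilon S(\sigma_0)+h$-type terms. The cleanest route is to restrict to the eigenvalue function $\epsilon\mapsto\lambda_i(\sigma_\epsilon)$ for eigenvalues emanating from $0$, which are at least $\epsilon\lambda_{\min}(P\sigma_0P)$ by Weyl-type bounds on the compression. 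Combining this with $x\log x$ having infinite slope at $0$ gives the contradiction.
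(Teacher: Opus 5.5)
Your proof is correct and follows the paper's route: existence from compactness and continuity, uniqueness from strict convexity of $\sigma\mapsto\tr(\sigma\log\sigma)$, and full rank by perturbing toward the positive-definite feasible state $\rho$ and using the $-\infty$ right derivative of the entropy term. Rewriting the objective as $\gamma D(\sigma\|\tilde\rho)-\gamma\log Z$ is a harmless but unnecessary detour. In your derivative route you must also keep $\tr(\sigma^*\log\sigma_\epsilon)$ bounded, which is easiest by writing the derivative as $\tr((\sigma_0-\sigma_\epsilon)\log\sigma_\epsilon)/(1-\epsilon)$ and using $-\log d\le\tr(\sigma_\epsilon\log\sigma_\epsilon)\le 0$.
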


\begin{proof}
Since $\cM$ is compact and the objective in \eqref{eq:general-proximal-problem} is continuous in $\sigma$, a minimizer exists. The objective is strictly convex in $\sigma$ because $\sigma\mapsto\tr(\sigma\log\sigma)$ is strictly convex. Hence the minimizer is unique.

Let $\widehat\sigma$ denote the minimizer and suppose, for contradiction, that it is singular. Since $\rho\in\cM_{++}$, the state
\begin{align}
\sigma_t:=(1-t)\widehat\sigma+t\rho,
\qquad 0<t<1,
\end{align}
belongs to $\cM_{++}$. Because $\rho$ is strictly positive on the kernel of $\widehat\sigma$, the right derivative of $\tr(\sigma_t\log\sigma_t)$ at $t=0$ is $-\infty$, whereas the remaining terms in the objective have finite right derivatives. The objective therefore decreases for all sufficiently small $t>0$, contradicting the optimality of $\widehat\sigma$. Thus $\widehat\sigma\in\cM_{++}$.
\end{proof}

Applying Lemma~\ref{lemma:proximal-well-posedness} with $A=\Omega(\rho)$ and $A=\nabla f(\rho)$ shows that both the AB and MD proximal subproblems have unique positive-definite minimizers.

Then we have the following theorem:
\begin{theorem}[One-step equivalence]\label{theorem: equivalence of each step}
    For every $\rho\in \cM_{++}$, the following statements are equivalent:
    \begin{itemize}
        \item[(b1)] $\cT_{AB}(\rho) = \cT_{MD}(\rho)$.
        \item[(b2)] $\Omega(\rho) - \nabla f(\rho) \in N_{\cM}$.
    \end{itemize}
\end{theorem}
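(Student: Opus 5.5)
By Lemma~\ref{lemma:variational-AB}, both maps are entropic proximal maps of the same form. Writing
\[
\Phi_A(\sigma):=\tr(\sigma A)+\gamma D(\sigma\|\rho),
\]
we have $\cT_{AB}(\rho)=\argmin_{\sigma\in\cM}\Phi_{\Omega(\rho)}(\sigma)$ and $\cT_{MD}(\rho)=\argmin_{\sigma\in\cM}\Phi_{\nabla f(\rho)}(\sigma)$. The plan is to characterize the unique minimizer $\widehat\sigma_A$ from Lemma~\ref{lemma:proximal-well-posedness} through its first-order optimality condition on the relative interior. Then we show that $\widehat\sigma_A=\widehat\sigma_B$ holds exactly when $A-B\in N_{\cM}$.

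\textbf{Step 1: KKT characterization.} Since $\widehat\sigma_A\in\cM_{++}$, the function $\Phi_A$ is differentiable at $\widehat\sigma_A$ along every direction $X\in T_{\cM}$. Its derivative there is
\[
\tr\bigl(X(A+\gamma\log\widehat\sigma_A-\gamma\log\rho)\bigr),
\]
where the term $\tr X=0$ coming from $\tr\sigma\log\sigma$ drops out. Because $\widehat\sigma_A$ is interior relative to the affine hull, optimality forces this derivative to vanish for all $X\in T_{\cM}$. That is,
\[
A+\gamma(\log\widehat\sigma_A-\log\rho)\in T_{\cM}^\perp=N_{\cM}.
\]
Conversely, any $\sigma\in\cM_{++}$ satisfying this inclusion is a stationary point of the strictly convex function $\Phi_A$ restricted to $\cM$, so it is the unique minimizer. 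Thus $\sigma=\widehat\sigma_A$ if and only if
\[
\log\sigma=\log\rho-\gamma^{-1}A+N \quad\text{for some } N\in N_{\cM}.
\]

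\textbf{Step 2: both implications.} For (b2)$\Rightarrow$(b1), write $\Omega(\rho)=\nabla f(\rho)+N_0$ with $N_0\in N_{\cM}$. Then $\sigma:=\cT_{MD}(\rho)$ satisfies
\[
\nabla f(\rho)+\gamma(\log\sigma-\log\rho)\in N_{\cM},
\]
and adding $N_0$ gives the same inclusion with $\Omega(\rho)$ in place of $\nabla f(\rho)$. By Step 1, $\sigma=\cT_{AB}(\rho)$.

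For (b1)$\Rightarrow$(b2), let $\sigma=\cT_{AB}(\rho)=\cT_{MD}(\rho)$. Both inclusions hold with the same $\sigma$:
\[
\Omega(\rho)+\gamma(\log\sigma-\log\rho)\in N_{\cM},\qquad \nabla f(\rho)+\gamma(\log\sigma-\log\rho)\in N_{\cM}.
\]
Subtracting the second from the first, and using that $N_{\cM}$ is a linear subspace, gives $\Omega(\rho)-\nabla f(\rho)\in N_{\cM}$.

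\textbf{Main obstacle.} The delicate point is Step 1. It needs the derivative of $\sigma\mapsto\tr\sigma\log\sigma$ at a positive-definite $\sigma$ in direction $X$, namely $\tr(X\log\sigma)+\tr X$. This follows from the standard Fréchet derivative of the matrix logarithm, which gives $\tr(\sigma\,D\log[\sigma](X))=\tr X$. It also needs the sufficiency of stationarity, which holds because $\cM$ is convex and $\Phi_A$ is convex. Positive definiteness of the minimizer, already supplied by Lemma~\ref{lemma:proximal-well-posedness}, is what allows us to avoid boundary multipliers. One should also note that $\nabla f(\rho)$ is determined only modulo $N_{\cM}$, because $f$ is differentiable only relative to the affine hull. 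Condition (b2) is invariant under this ambiguity, so the statement is well posed.
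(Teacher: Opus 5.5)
Your proof is correct and follows the paper's argument for (b1)$\Rightarrow$(b2): Lemma~\ref{lemma:proximal-well-posedness} places the common minimizer in $\cM_{++}$, both first-order conditions $\log\widehat\rho=\log\rho-\gamma^{-1}\Omega(\rho)+N_1$ and $\log\widehat\rho=\log\rho-\gamma^{-1}\nabla f(\rho)+N_2$ with $N_1,N_2\in N_{\cM}$ are written down (the paper via Lagrange multipliers, you via tangent-direction derivatives, which give the same condition), and the two are subtracted. The only difference is in (b2)$\Rightarrow$(b1), where the paper does not need the sufficiency half of your Step~1: it observes that $\tr[\sigma(\Omega(\rho)-\nabla f(\rho))]$ is constant on $\cM$, so the two proximal objectives differ by a constant and have the same minimizer.
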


\begin{proof}
    Suppose first that (b2) holds. Then
    \begin{align}
    \Omega(\rho)-\nabla f(\rho)=a_0I+\sum_{j=1}^k a_jH_j,
    \end{align}
    for some real coefficients $a_0,\ldots,a_k$.
    
    Then for every $\sigma \in \cM$,
    \begin{align}
        \tr[\sigma (\Omega(\rho) - \nabla f(\rho))] =& a_0 \tr\sigma + \sum_{j=1}^k a_j\tr(\sigma H_j) \notag\\
         = &a_0 + \sum_{j=1}^k a_j c_j,
    \end{align}
    which is a constant on $\cM$.
    Hence, the two objectives
    \begin{align}
        \sigma &\mapsto \tr[\sigma\Omega(\rho)] + \gamma D(\sigma\|\rho) \label{objective: AB}\\
        \sigma &\mapsto \tr[\sigma\nabla f(\rho)] + \gamma D(\sigma\|\rho)\label{objective: MD}
    \end{align}
    differ only by a constant on the feasible set $\cM$. Therefore they have the same minimizer, i.e., $\cT_{AB}(\rho) = \cT_{MD}(\rho)$. Thus the condition (b2) implies condition (b1).

    Conversely, suppose that $\cT_{AB}(\rho) = \cT_{MD}(\rho) = \widehat\rho$. By Lemma~\ref{lemma:proximal-well-posedness}, the common minimizer $\widehat\rho$ belongs to $\cM_{++}$. Therefore, the logarithmic first-order optimality conditions used below are well defined.
    
    The AB step solves
    \begin{align}
        \min_{\sigma\in\cM} \Bigl\{\tr\sigma\Omega(\rho) + \gamma D(\sigma\|\rho)\Bigr\}.
    \end{align}
    The Lagrangian is
    \begin{align*}
        \cL_{AB}(\sigma,\alpha,\tau)
        ={}&\tr\sigma\Omega(\rho)+\gamma D(\sigma\|\rho)\\
        &+\alpha(\tr\sigma-1)
        +\sum_{i=1}^k\tau_i(\tr\sigma H_i-c_i).
    \end{align*}
    Differentiating with respect to $\sigma$ and setting the derivative to zero at $\widehat\rho$ gives
    \begin{align}
        \Omega(\rho) + \gamma (\log\widehat\rho - \log\rho+I) + \alpha I +\sum_{i=1}^k\tau_i H_i = 0.
    \end{align}
    Equivalently,
    \begin{align}
        \log\widehat\rho = \log\rho -\frac{1}{\gamma}\Omega(\rho) + u_0I + \sum_{i=1}^k u_i H_i,
    \end{align}
    for some real coefficients $u_0,u_1,\ldots,u_k$.
    
    Similarly, the MD step solves
    \begin{align}
         \min_{\sigma\in\cM} \Bigl\{\tr\sigma\nabla f(\rho) + \gamma D(\sigma\|\rho)\Bigr\}.
    \end{align}
    Thus, at the minimizer $\widehat\rho$, we have
    \begin{align}
        \log\widehat\rho = \log\rho -\frac{1}{\gamma}\nabla f(\rho) + v_0 I + \sum_{i=1}^k v_i H_i,
    \end{align}
    for some real coefficients $v_0,v_1,\ldots,v_k$.

    Therefore
    \begin{align*}
       &\log\rho -\frac{1}{\gamma}\Omega(\rho) + u_0I + \sum_{i=1}^k u_i H_i \\
       =& \log\rho -\frac{1}{\gamma}\nabla f(\rho) + v_0 I + \sum_{i=1}^k v_i H_i.
    \end{align*}
    Then
    \begin{align}
        \frac{1}{\gamma}(\Omega(\rho) - \nabla f(\rho)) = (u_0-v_0)I + \sum_{i=1}^k (u_i-v_i) H_i,
    \end{align}
    hence
    \begin{align}
        \Omega(\rho) - \nabla f(\rho) \in \operatorname{span}\{I,H_1,\ldots,H_k\} = N_{\cM}.
    \end{align}
    Thus, the condition (b1) implies condition (b2).

\end{proof}

\begin{remark}
    Condition (b2) is equivalent to equality of the tangent components of the AB and MD update directions, i.e.,
    \begin{align}
        \Omega(\rho) - \nabla f(\rho)\in N_{\cM} \iff P_T\Omega(\rho) = P_T\nabla f(\rho).
    \end{align}
    Thus, the AB and MD updates agree precisely when their update directions have the same feasible tangent component.
\end{remark}
Applying the one-step criterion recursively along a common trajectory immediately yields the pathwise statement.

\begin{corollary}[Pathwise equivalence]\label{cor:pathwise-equivalence}
Initialize AB and MD at the same $\rho^{(0)}\in\cM_{++}$.  Their iterates agree through step $T$ if and only if
\begin{align}
\Omega(\rho^{(t)})-\nabla f(\rho^{(t)})\in N_{\cM},
\qquad t=0,\ldots,T-1,
\end{align}
along the common iterates.  In particular, $\cT_{AB}=\cT_{MD}$ on $\cM_{++}$ if and only if the condition (b2) holds for every $\rho\in\cM_{++}$.
\end{corollary}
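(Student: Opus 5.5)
I will prove the corollary by induction on the step index, applying Theorem~\ref{theorem: equivalence of each step} at each common iterate. Let $\rho_{AB}^{(t)}$ and $\rho_{MD}^{(t)}$ be the AB and MD iterates started from the same $\rho^{(0)}\in\cM_{++}$. Say the iterates \emph{agree through step $T$} if $\rho_{AB}^{(t)}=\rho_{MD}^{(t)}$ for $t=0,\ldots,T$. Whenever this holds, I write $\rho^{(t)}$ for the common value. Lemma~\ref{lemma:proximal-well-posedness} ensures that every iterate of either algorithm stays in $\cM_{++}$. Hence each $\rho^{(t)}$ lies in the domain where Theorem~\ref{theorem: equivalence of each step} applies, and both update maps are single-valued there.

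For sufficiency, assume the condition holds for $t=0,\ldots,T-1$ and induct on $t$. The base case $t=0$ is trivial because both algorithms start at $\rho^{(0)}$. Suppose the iterates agree through step $t<T$. The common iterate $\rho^{(t)}$ is in $\cM_{++}$ and satisfies (b2) by hypothesis. By (b2)$\Rightarrow$(b1) we get $\cT_{AB}(\rho^{(t)})=\cT_{MD}(\rho^{(t)})$, so the iterates agree through step $t+1$.

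For necessity, assume the iterates agree through step $T$ and fix $t\le T-1$. Then $\rho^{(t)}$ is a common iterate in $\cM_{++}$. Agreement at step $t+1$ gives $\cT_{AB}(\rho^{(t)})=\rho^{(t+1)}=\cT_{MD}(\rho^{(t)})$, which is (b1) at $\rho^{(t)}$. The direction (b1)$\Rightarrow$(b2) of Theorem~\ref{theorem: equivalence of each step} then yields the condition at $\rho^{(t)}$. The condition is only required up to $T-1$, since step $T$ agreement depends only on the update applied at $\rho^{(T-1)}$.

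The global statement follows directly from Theorem~\ref{theorem: equivalence of each step}, with no trajectory argument needed. The maps coincide on $\cM_{++}$ exactly when (b1) holds at every $\rho\in\cM_{++}$, which is equivalent to (b2) holding at every such $\rho$. No step is a real obstacle. The only point needing care is that iterates remain full rank, so that the one-step criterion and its logarithmic optimality conditions are valid at each stage; Lemma~\ref{lemma:proximal-well-posedness} supplies this.
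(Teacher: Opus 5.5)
Your proposal is correct and takes the same approach as the paper, which proves the corollary in one line by applying the one-step criterion (Theorem~\ref{theorem: equivalence of each step}) recursively along the common trajectory. Your induction, using Lemma~\ref{lemma:proximal-well-posedness} to keep every iterate in $\cM_{++}$, is a careful write-up of that argument.
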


Corollary~\ref{cor:pathwise-equivalence} tells us that, if $\Omega(\rho) - \nabla f(\rho) \in N_{\cM}$ holds for every $\rho\in\cM_{++}$, then the AB algorithm and the MD algorithm generate exactly the same sequence from the same initialization. That is, the AB algorithm is pathwise equivalent to the MD algorithm. We note that Proposition~\ref{proposition: equivalence by Kerry} gives a sufficient but not necessary condition for pathwise equivalence. Under conditions (a1) and (a2), one can prove that $\nabla f(\rho) = \Omega(\rho)$ for all $\rho\in \cM$, which directly leads to condition (b2). Theorem~\ref{theorem: equivalence of each step} isolates the exact algebraic requirement for equivalence. Only the component of the update direction tangent to the feasible affine space matters. Thus, equality of $\Omega(\rho)$ and $\nabla f(\rho)$ is sufficient but unnecessarily strong. A failure of this condition at any iterate separates the two trajectories from that step onward. It does not determine whether the AB trajectory can later reach a globally optimal state.

\subsection{Accumulation Points and Fixed Points}

The preceding analysis characterizes when an individual AB update coincides with a mirror-descent update. To connect these update-level results with the asymptotic behavior of the AB iteration, one must distinguish convergence of the objective values from convergence of the state sequence. In particular, monotonic convergence of $f(\rho^{(t)})$ does not by itself imply that the sequence $\{\rho^{(t)}\}$ converges to a single state.

We therefore formulate the asymptotic behavior of the AB iteration in terms of its accumulation points. For an initial state $\rho^{(0)}$, define the accumulation set of the AB trajectory by
\begin{align}
\omega(\rho^{(0)})
:=\bigl\{\bar{\rho}\in\mathcal{M}:&\
\rho^{(t_k)}\rightarrow\bar{\rho}\notag\\[-1mm]
&\text{for some subsequence }t_k\rightarrow\infty\bigr\}.
\end{align}
Since $\mathcal{M}$ is compact, $\omega(\rho^{(0)})$ is nonempty. In this work, we restrict the fixed-point analysis to accumulation points belonging to the relative interior $\mathcal{M}_{++}$, which is the natural domain of the logarithmic AB update. The following result shows that condition (a1) also implies asymptotic regularity of the AB iteration.

\begin{proposition}
\label{prop:asymptotic_regular}
Let $\{\rho^{(t)}\}\subset\mathcal{M}_{++}$ be generated by the AB iteration and suppose that condition (a1) holds along the trajectory, i.e.,
\begin{align}
D_{\Omega}(\rho^{(t+1)}\Vert\rho^{(t)})\leq\gamma D(\rho^{(t+1)}\Vert\rho^{(t)}).
\end{align}
Then
\begin{align}
D(\rho^{(t)}\Vert\rho^{(t+1)})\rightarrow0.
\end{align}
In particular,
\begin{align}
\|\rho^{(t)}-\rho^{(t+1)}\|_1\rightarrow0.\label{eq:asymptotic_regular}
\end{align}
\end{proposition}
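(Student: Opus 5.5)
The plan is to prove a one-step descent inequality of the form $f(\rho^{(t+1)})\le f(\rho^{(t)})-\gamma D(\rho^{(t)}\|\rho^{(t+1)})$, then telescope it and apply Pinsker's inequality.

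The starting point is Lemma~\ref{lemma:variational-AB}. It says $\rho^{(t+1)}$ is the unique minimizer of $\sigma\mapsto\tr[\sigma\Omega(\rho^{(t)})]+\gamma D(\sigma\|\rho^{(t)})$ over $\cM$. By Lemma~\ref{lemma:proximal-well-posedness}, this minimizer lies in $\cM_{++}$, so the logarithmic first-order condition from the proof of Theorem~\ref{theorem: equivalence of each step} is available:
\begin{align*}
\log\rho^{(t+1)}=\log\rho^{(t)}-\tfrac{1}{\gamma}\Omega(\rho^{(t)})+u_0I+\textstyle\sum_{i}u_iH_i .
\end{align*}
Substitute this into $D(\sigma\|\rho^{(t+1)})-D(\sigma\|\rho^{(t)})$ for an arbitrary $\sigma\in\cM$. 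The terms in $N_{\cM}$ contribute the constant $u_0+\sum_i u_ic_i$ on $\cM$. Evaluating the same expression at $\sigma=\rho^{(t+1)}$ and subtracting removes this constant. The result is the exact three-point identity
\begin{align*}
\tr[\sigma\Omega(\rho^{(t)})]+\gamma D(\sigma\|\rho^{(t)})
={}&\tr[\rho^{(t+1)}\Omega(\rho^{(t)})]+\gamma D(\rho^{(t+1)}\|\rho^{(t)})\\
&+\gamma D(\sigma\|\rho^{(t+1)}),
\end{align*}
valid for all $\sigma\in\cM$. Checking this identity carefully, with the normalization constants cancelling properly, is the step I expect to need the most care; it is a direct computation once full rank of $\rho^{(t+1)}$ is secured.

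Next, take $\sigma=\rho^{(t)}$ in the identity. The left side becomes $f(\rho^{(t)})$, giving
\begin{align*}
f(\rho^{(t)})=\tr[\rho^{(t+1)}\Omega(\rho^{(t)})]+\gamma D(\rho^{(t+1)}\|\rho^{(t)})+\gamma D(\rho^{(t)}\|\rho^{(t+1)}).
\end{align*}
On the other hand, by definition of $D_\Omega$, $f(\rho^{(t+1)})=\tr[\rho^{(t+1)}\Omega(\rho^{(t)})]+D_\Omega(\rho^{(t+1)}\|\rho^{(t)})$. Condition (a1) bounds $D_\Omega(\rho^{(t+1)}\|\rho^{(t)})$ by $\gamma D(\rho^{(t+1)}\|\rho^{(t)})$. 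Combining the two displays yields
\begin{align*}
f(\rho^{(t+1)})\le f(\rho^{(t)})-\gamma D(\rho^{(t)}\|\rho^{(t+1)}).
\end{align*}

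Finally, sum this inequality from $t=0$ to $T-1$. This gives $\gamma\sum_{t<T}D(\rho^{(t)}\|\rho^{(t+1)})\le f(\rho^{(0)})-f(\rho^{(T)})$. Since $\Omega$ is continuous on the compact set $\cS(\cH)$, $f$ is bounded below on $\cM$. Hence the series of nonnegative terms converges, and $D(\rho^{(t)}\|\rho^{(t+1)})\to0$. Pinsker's inequality, $\|\rho^{(t)}-\rho^{(t+1)}\|_1^2\le 2D(\rho^{(t)}\|\rho^{(t+1)})$, then gives \eqref{eq:asymptotic_regular}.
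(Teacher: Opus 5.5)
Your proposal is correct and follows the same route as the paper. The paper imports the descent inequality $\gamma D(\rho^{(t)}\|\rho^{(t+1)})\le f(\rho^{(t)})-f(\rho^{(t+1)})$ from the proof of \cite[Theorem 1]{hayashi2024ab}, telescopes it using boundedness of $f$, and applies Pinsker, exactly as you do. Your three-point identity, which checks out with the constants cancelling correctly, is a self-contained derivation of that cited inequality and is equivalent to the Pythagorean relation for the $e$-projection used in the reference.
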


\begin{proof}
From the proof of \cite[Theorem 1]{hayashi2024ab}, we know that condition (a1) implies 
    \begin{align}
        D(\rho^{(t)}\|\rho^{(t+1)}) \leq \frac{f(\rho^{(t)}) - f(\rho^{(t+1)})}{\gamma}.
    \end{align}
    Since $f$ is bounded below on $\mathcal{M}$, summing the preceding inequality gives
    \begin{align*}
        \sum_{t=0}^{\infty} D(\rho^{(t)}\|\rho^{(t+1)}) &\leq \sum_{t=0}^{\infty} \frac{f(\rho^{(t)}) - f(\rho^{(t+1)})}{\gamma} \\
        &\leq \frac{f(\rho^{(0)})-\inf_{\rho\in\cM}f(\rho)}{\gamma}\\
        &<\infty.
    \end{align*}
    Hence,
    \begin{align}
        D(\rho^{(t)}\|\rho^{(t+1)}) \rightarrow 0.
    \end{align}
    By the quantum Pinsker inequality, we have
    \begin{align}
        D(\rho^{(t)}\|\rho^{(t+1)}) \geq \frac{1}{2}\|\rho^{(t)} - \rho^{(t+1)}\|_1^2,
    \end{align}
    so
    \begin{align}
        \|\rho^{(t)} - \rho^{(t+1)}\|_1 \rightarrow 0.
    \end{align}
\end{proof}

The preceding proposition shows that condition (a1) implies asymptotic regularity of the state sequence. Asymptotic regularity does not by itself imply convergence to a unique state, but it is sufficient to connect accumulation points of the trajectory with fixed points of the AB map.

We first note that, under the standing continuity assumption on $\Omega$, the AB update map is continuous on $\mathcal{M}_{++}$.

\begin{lemma}[Continuity of the AB update map]
\label{lemma:ab_continuity}
The map
\begin{align}
\mathcal{T}_{\mathrm{AB}}: \mathcal{M}_{++}\rightarrow\mathcal{M}_{++}
\end{align}
is continuous.
\end{lemma}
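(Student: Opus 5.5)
We will write $\cT_{AB}$ as a composition of two maps and show that each is continuous on the relevant domain: first $\rho\mapsto\cF_3[\rho]$ from $\cM_{++}$ into the full-rank states, then the $e$-projection $\Gamma^{(e)}_{\cM}$ restricted to full-rank inputs.

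\textbf{Continuity of $\cF_3$.} On $\cM_{++}$ the matrix logarithm is continuous, since it is an analytic functional calculus on positive definite matrices. The map $\Omega$ is continuous by the standing assumption, and $\exp$ is continuous on $\cB(\cH)$. Hence
\[
\rho\mapsto \exp\bigl(\log\rho-\gamma^{-1}\Omega(\rho)\bigr)
\]
is continuous and takes positive definite values. The normalizer $\kappa[\rho]$ is its trace, which is continuous and strictly positive. Therefore $\cF_3$ is continuous with full-rank values.

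\textbf{Continuity of the $e$-projection.} By Lemma~\ref{lemma:variational-AB}, $\cT_{AB}(\rho)$ is the unique minimizer (Lemma~\ref{lemma:proximal-well-posedness}) of
\[
g(\rho,\sigma):=\tr[\sigma\Omega(\rho)]+\gamma D(\sigma\|\rho),\qquad \sigma\in\cM.
\]
The cleanest route is a Berge-type argument on the compact set $\cM$, which avoids analyzing the dual variables $\tau_*$.

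Take $\rho_n\to\rho$ in $\cM_{++}$ and put $\sigma_n:=\cT_{AB}(\rho_n)$. Since $\cM$ is compact, some subsequence converges, say $\sigma_n\to\bar\sigma\in\cM$. We write
\[
g(\rho,\sigma)=\tr\sigma\log\sigma-\tr\sigma\log\rho+\tr\sigma\Omega(\rho)+\text{const}.
\]
The map $\sigma\mapsto\tr\sigma\log\sigma$ is continuous on all of $\cS(\cH)$, including singular states. Both $\log\rho_n\to\log\rho$ and $\Omega(\rho_n)\to\Omega(\rho)$ converge, and the relevant matrices are uniformly bounded because $\rho$ is full rank, so $\rho_n$ eventually has eigenvalues bounded away from $0$. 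It follows that $g$ is jointly continuous on $\cM_{++}\times\cM$.

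For any $\sigma\in\cM$ we have $g(\rho_n,\sigma_n)\le g(\rho_n,\sigma)$. Passing to the limit gives $g(\rho,\bar\sigma)\le g(\rho,\sigma)$. Hence $\bar\sigma$ minimizes $g(\rho,\cdot)$, and by uniqueness $\bar\sigma=\cT_{AB}(\rho)$. Every subsequence therefore has a further subsequence converging to the same limit, so $\sigma_n\to\cT_{AB}(\rho)$. Lemma~\ref{lemma:proximal-well-posedness} guarantees that the image lies in $\cM_{++}$.

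\textbf{Expected obstacle.} The main point needing care is joint continuity of $g$. The term $-\tr\sigma\log\rho$ is unbounded near singular $\rho$. This causes no difficulty here, because we only need continuity at full-rank $\rho$, and local boundedness holds in a neighborhood of such a point. The limit $\bar\sigma$ of the minimizers could a priori be singular, but the entropy term is continuous there and uniqueness identifies the limit, so no positivity of the limit is needed in advance.
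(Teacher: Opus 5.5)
Your proof is correct and is essentially the paper's argument. Both use compactness of $\cM$ and pass the optimality inequality $g(\rho_n,\sigma_n)\le g(\rho_n,\sigma)$ to the limit, via joint continuity of $D(\sigma\|\rho)$ at full-rank $\rho$ (with $\tr\sigma\log\sigma$ continuous up to the boundary), and identify the limit through the uniqueness in Lemma~\ref{lemma:proximal-well-posedness}. Two harmless slips: the opening paragraph on continuity of $\cF_3$ is never used, since your Berge-type step works directly with the proximal objective in $\rho$ rather than with $\Gamma^{(e)}_{\cM}$ applied to $\cF_3[\rho]$; and your expansion of $g$ drops the factor $\gamma$ and adds a spurious constant.
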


\begin{proof}
By Lemma~\ref{lemma:proximal-well-posedness}, the AB minimizer is unique and belongs to $\mathcal{M}_{++}$ for every $\rho\in\mathcal{M}_{++}$.
Let $\rho_n\rightarrow\rho\in\mathcal{M}_{++}$ and define
\begin{align}
\sigma_n:=\mathcal{T}_{\mathrm{AB}}(\rho_n).
\end{align}
Since $\mathcal{M}$ is compact, every subsequence of $\{\sigma_n\}$ contains a further convergent subsequence. Let
\begin{align}
\sigma_{n_k}\rightarrow\bar{\sigma}.
\end{align}
For every $\sigma\in\mathcal{M}$, optimality of $\sigma_{n_k}$ gives
\begin{align}
&\Tr[\sigma_{n_k}\Omega(\rho_{n_k})]
 +\gamma D(\sigma_{n_k}\Vert\rho_{n_k})\notag\\
&\qquad\leq \Tr[\sigma\Omega(\rho_{n_k})]
 +\gamma D(\sigma\Vert\rho_{n_k}).
\end{align}
Since $\rho_{n_k}\rightarrow\rho\succ0$, the states $\rho_{n_k}$ are uniformly positive definite for all sufficiently large $k$. In finite dimensions, $D(\sigma\Vert\rho)$ is then continuous in both arguments even when $\sigma$ is singular: $\Tr(\sigma\log\sigma)$ is continuous up to the boundary under the convention $0\log0=0$, and $\log\rho_{n_k}\rightarrow\log\rho$. Taking $k\rightarrow\infty$ therefore gives
\begin{align}
\Tr[\bar{\sigma}\Omega(\rho)] + \gamma D(\bar{\sigma}\Vert\rho) \leq \Tr[\sigma\Omega(\rho)] + \gamma D(\sigma\Vert\rho),
\end{align}
for every $\sigma\in\mathcal{M}$. Hence
\begin{align}
\bar{\sigma}=\mathcal{T}_{\mathrm{AB}}(\rho),
\end{align}
where uniqueness follows from Lemma~\ref{lemma:proximal-well-posedness}. Therefore every convergent subsequence of $\{\sigma_n\}$ has the same limit, which implies
\begin{align}
\mathcal{T}_{\mathrm{AB}}(\rho_n) \rightarrow \mathcal{T}_{\mathrm{AB}}(\rho).
\end{align}
\end{proof}

We can now relate the actual asymptotic behavior of the AB trajectory to its fixed points.

\begin{theorem}[Accumulation points are AB fixed points]
\label{thm:accumulation_fixed}
Suppose condition \emph{(a1)} holds along the AB trajectory. If
\begin{align}
\rho^{(t_k)}\rightarrow\bar{\rho},
\end{align}
for some accumulation point $\bar{\rho}\in\mathcal{M}_{++}$, then
\begin{align}
\mathcal{T}_{\mathrm{AB}}(\bar{\rho}) = \bar{\rho}.
\end{align}
\end{theorem}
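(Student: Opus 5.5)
The argument combines the asymptotic regularity of Proposition~\ref{prop:asymptotic_regular} with the continuity of the AB map from Lemma~\ref{lemma:ab_continuity}. Fix the subsequence with $\rho^{(t_k)}\to\bar\rho\in\cM_{++}$. The trajectory lies in $\cM_{++}$: by assumption $\rho^{(0)}\in\cM_{++}$ is implicit, and Lemma~\ref{lemma:proximal-well-posedness} keeps each update in $\cM_{++}$. Hence Proposition~\ref{prop:asymptotic_regular} applies under condition (a1) and gives
\begin{align}
\|\rho^{(t_k+1)}-\rho^{(t_k)}\|_1\to 0 .
\end{align}
Since $\rho^{(t_k)}\to\bar\rho$, the triangle inequality yields $\rho^{(t_k+1)}\to\bar\rho$ as well.

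Next, by the definition of the iteration, $\rho^{(t_k+1)}=\cT_{AB}(\rho^{(t_k)})$. The points $\rho^{(t_k)}$ lie in $\cM_{++}$ and converge to $\bar\rho\in\cM_{++}$. Lemma~\ref{lemma:ab_continuity} therefore gives $\cT_{AB}(\rho^{(t_k)})\to\cT_{AB}(\bar\rho)$. By uniqueness of limits in the finite-dimensional space $\cB(\cH)$, the two limits of $\rho^{(t_k+1)}$ coincide:
\begin{align}
\cT_{AB}(\bar\rho)=\bar\rho .
\end{align}

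The step needing care is making sure continuity is invoked only where it is valid. Lemma~\ref{lemma:ab_continuity} is stated on $\cM_{++}$, and its proof uses uniform positive definiteness of the arguments near the limit. This is exactly why the hypothesis $\bar\rho\in\cM_{++}$ is essential: we apply continuity at $\bar\rho$ itself, not at the successors, so the fact that $\rho^{(t_k+1)}$ lies in $\cM_{++}$ is not needed for that step. A second check is that the telescoping bound in Proposition~\ref{prop:asymptotic_regular} requires $f$ to be bounded below on $\cM$. This holds because $\Omega$ is continuous on the compact set $\cS(\cH)$, so $f$ is continuous there. With these points verified, the argument is short.
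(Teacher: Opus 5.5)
Your proposal is correct and follows the paper's proof essentially step for step. Asymptotic regularity from Proposition~\ref{prop:asymptotic_regular} gives $\rho^{(t_k+1)}\to\bar\rho$, and continuity of $\cT_{AB}$ at $\bar\rho\in\cM_{++}$ (Lemma~\ref{lemma:ab_continuity}) identifies this limit with $\cT_{AB}(\bar\rho)$; your extra checks (that the iterates stay in $\cM_{++}$ and that $f$ is bounded below) are details the paper leaves implicit.
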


\begin{proof}
By Proposition~\ref{prop:asymptotic_regular},
\begin{align}
\|\rho^{(t+1)}-\rho^{(t)}\|_1\rightarrow0.
\end{align}
Therefore,
\begin{align}
\|\rho^{(t_k+1)}-\bar{\rho}\|_1 &\leq \|\rho^{(t_k+1)}-\rho^{(t_k)}\|_1 + \|\rho^{(t_k)}-\bar{\rho}\|_1 \nonumber\\ &\rightarrow0.
\end{align}
Hence
\begin{align}
\rho^{(t_k+1)}\rightarrow\bar{\rho}.
\end{align}
On the other hand,
\begin{align}
\rho^{(t_k+1)} = \mathcal{T}_{\mathrm{AB}}(\rho^{(t_k)}).
\end{align}
By Lemma~\ref{lemma:ab_continuity},
\begin{align}
\mathcal{T}_{\mathrm{AB}}(\bar{\rho}) &= \lim_{k\rightarrow\infty} \mathcal{T}_{\mathrm{AB}}(\rho^{(t_k)}) \nonumber\\ &= \lim_{k\rightarrow\infty} \rho^{(t_k+1)} = \bar{\rho}.
\end{align}
Thus, $\bar{\rho}$ is an AB fixed point.
\end{proof}

Theorem~\ref{thm:accumulation_fixed} provides the link between the conventional trajectory analysis and the fixed-point analysis developed below. Under the assumption of condition (a1), every relative-interior accumulation point must satisfy the AB fixed-point condition. If uniqueness of the AB fixed point can additionally be established, state convergence follows.

\begin{corollary}
\label{cor:unique_fixed_state_convergence}
Suppose condition (a1) holds along the AB trajectory and assume that every accumulation point of $\{\rho^{(t)}\}$ belongs to $\mathcal{M}_{++}$. If the AB map has a unique fixed point $\rho^*\in\mathcal{M}_{++}$, then
\begin{align}
\rho^{(t)}\rightarrow\rho^*.
\end{align}
\end{corollary}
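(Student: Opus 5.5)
The proof combines compactness of $\cM$ with Theorem~\ref{thm:accumulation_fixed} through the standard subsequence argument. Suppose, for contradiction, that $\rho^{(t)}\not\rightarrow\rho^*$. Then there exist $\varepsilon>0$ and a subsequence $t_k\rightarrow\infty$ with
\begin{align*}
\|\rho^{(t_k)}-\rho^*\|_1\geq\varepsilon
\end{align*}
for all $k$. Because $\cM$ is compact, we may pass to a further subsequence, still denoted $t_k$, with $\rho^{(t_k)}\rightarrow\bar\rho\in\cM$. By definition, $\bar\rho\in\omega(\rho^{(0)})$.

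The hypothesis that every accumulation point lies in $\cM_{++}$ gives $\bar\rho\in\cM_{++}$. Since condition (a1) holds along the trajectory, Theorem~\ref{thm:accumulation_fixed} yields $\cT_{AB}(\bar\rho)=\bar\rho$. The AB map has a unique fixed point in $\cM_{++}$, so $\bar\rho=\rho^*$. On the other hand, passing to the limit in $\|\rho^{(t_k)}-\rho^*\|_1\geq\varepsilon$ gives $\|\bar\rho-\rho^*\|_1\geq\varepsilon$. This is a contradiction, so $\rho^{(t)}\rightarrow\rho^*$.

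Equivalently, $\omega(\rho^{(0)})=\{\rho^*\}$, and in a compact metric space a sequence whose accumulation set is a single point converges to that point. The only delicate point is already settled by the hypotheses. Theorem~\ref{thm:accumulation_fixed} uses continuity of $\cT_{AB}$, which holds only on $\cM_{++}$, so the argument needs accumulation points on the boundary of $\cM$ to be excluded. The assumption that all accumulation points lie in $\cM_{++}$ does exactly that, so no step should be a real obstacle.
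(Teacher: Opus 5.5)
Your proposal is correct and follows essentially the same route as the paper. Both extract a subsequence bounded away from $\rho^*$, pass to a convergent further subsequence by compactness, and then use the $\mathcal{M}_{++}$ hypothesis, Theorem~\ref{thm:accumulation_fixed}, and uniqueness to force the limit to be $\rho^*$, which is a contradiction. The paper merely states $\omega(\rho^{(0)})=\{\rho^*\}$ first and then runs the contradiction, which is the reformulation you give in your last paragraph.
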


\begin{proof}
Since $\mathcal{M}$ is compact, the sequence has at least one accumulation point. By Theorem~\ref{thm:accumulation_fixed}, every accumulation point is an AB fixed point. Uniqueness therefore implies
\begin{align}
\omega(\rho^{(0)})=\{\rho^*\}.
\end{align}
Suppose, for contradiction, that $\rho^{(t)}$ does not converge to $\rho^*$. Then there exist $\varepsilon>0$ and a subsequence $\rho^{(t_k)}$ such that
\begin{align}
\|\rho^{(t_k)}-\rho^*\|_1\geq\varepsilon.
\end{align}
Compactness provides a further convergent subsequence
\begin{align}
\rho^{(t_{k_\ell})}\rightarrow\widetilde{\rho}.
\end{align}
The state $\widetilde{\rho}$ is an accumulation point, and therefore $\widetilde{\rho}=\rho^*$, which contradicts the preceding inequality. Hence
\begin{align}
\rho^{(t)}\rightarrow\rho^*.
\end{align}
\end{proof}

\begin{remark}
The distinction between convergence to an accumulation set and convergence to a particular state is important for interpreting finite AB iterations. For any sequence in the compact feasible set,
\begin{align}
\operatorname{dist} \bigl( \rho^{(t)},\omega(\rho^{(0)}) \bigr) \rightarrow0.
\end{align}
Thus, a sufficiently late iterate approaches the accumulation set. In numerical implementations, a finite iterate can be regarded as an approximate AB fixed point when $\|\rho^{(t+1)}-\rho^{(t)}\|_1$ is small.
\end{remark}

\subsection{Global Optimality of Fixed Points}

We now characterize when a full-rank AB fixed point is globally optimal. For convex differentiable objectives, we compare the AB fixed-point condition with the first-order optimality condition and explain how their relationship differs from pathwise equivalence to MD. We first characterize the fixed points of the two update maps.

\begin{lemma}[AB fixed points]\label{lemma: fixed point of AB}
For $\rho\in\cM_{++}$,
\begin{align}
\cT_{AB}(\rho)=\rho
\quad\Longleftrightarrow\quad
\Omega(\rho)\in N_{\cM}.
\end{align}
\end{lemma}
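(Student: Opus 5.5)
The plan is to use the variational characterization of the AB map from Lemma~\ref{lemma:variational-AB}, together with the uniqueness and full-rank property of the proximal minimizer from Lemma~\ref{lemma:proximal-well-posedness}. By these two lemmas, $\cT_{AB}(\rho)$ is the unique minimizer over $\cM$ of
\begin{align*}
\Phi_\rho(\sigma):=\tr[\sigma\Omega(\rho)]+\gamma D(\sigma\|\rho),
\end{align*}
and this minimizer lies in $\cM_{++}$. The lemma therefore reduces to deciding when $\sigma=\rho$ is that minimizer.

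\emph{Sufficiency.} Suppose $\Omega(\rho)=a_0I+\sum_j a_jH_j\in N_{\cM}$. Exactly as in the proof of Theorem~\ref{theorem: equivalence of each step}, $\tr[\sigma\Omega(\rho)]=a_0+\sum_j a_jc_j$ is constant on $\cM$. Hence $\Phi_\rho(\sigma)$ equals a constant plus $\gamma D(\sigma\|\rho)$. This expression is minimized exactly at $\sigma=\rho$, since $D(\sigma\|\rho)\ge 0$ with equality iff $\sigma=\rho$, and $\rho\in\cM$. Uniqueness then gives $\cT_{AB}(\rho)=\rho$.

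\emph{Necessity.} Suppose $\cT_{AB}(\rho)=\rho$. Since $\rho\in\cM_{++}$ is an interior point of $\cM$ relative to its affine hull, I will use first-order optimality along the feasible directions. For every $X\in T_{\cM}$, $\rho+sX\in\cM$ for all small $|s|$, so
\begin{align*}
\frac{d}{ds}\Phi_\rho(\rho+sX)\Big|_{s=0}=0.
\end{align*}
Using $\frac{d}{ds}\tr[(\rho+sX)\log(\rho+sX)]\big|_{s=0}=\tr[X\log\rho]+\tr X$, the derivative of $\gamma D(\sigma\|\rho)$ at $\sigma=\rho$ in direction $X$ equals $\gamma\tr X=0$. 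The condition therefore becomes $\tr[X\Omega(\rho)]=0$ for all $X\in T_{\cM}$, i.e.\ $\Omega(\rho)\in T_{\cM}^\perp=N_{\cM}$ by \eqref{eq:normal-space}.

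Alternatively, the Lagrangian stationarity condition from the proof of Theorem~\ref{theorem: equivalence of each step} can be used. With $\widehat\rho=\rho$, the identity $\log\widehat\rho=\log\rho-\gamma^{-1}\Omega(\rho)+u_0I+\sum u_iH_i$ yields $\Omega(\rho)=\gamma(u_0I+\sum u_iH_i)$ directly.

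The only delicate point is justifying the derivative of $\tr\sigma\log\sigma$ at a full-rank point, or equivalently the validity of the Lagrange conditions. Full rank of $\rho$ guarantees this. It also guarantees that $\rho$ lies in the relative interior, so two-sided variations along all of $T_{\cM}$ are admissible. I do not expect any further obstacle.
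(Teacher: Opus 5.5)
Your proposal is correct and follows essentially the same route as the paper: sufficiency because $\tr[\sigma\Omega(\rho)]$ is constant on $\cM$, so the update reduces to minimizing $D(\sigma\|\rho)$ over $\cM$, and necessity from the first-order optimality condition of the proximal problem at $\sigma=\rho$. The only difference is that you phrase necessity as vanishing directional derivatives along $T_{\cM}$, using that $\rho$ is relatively interior, rather than through Lagrange multipliers; this gives the same condition $\Omega(\rho)\perp T_{\cM}$ and avoids having to invoke a multiplier rule.
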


\begin{proof}
    If $\rho$ is a fixed point, the optimality condition for \eqref{eq:AB-proximal-form} at $\sigma = \rho$ is
    \begin{align}
    \Omega(\rho)+\gamma I+a_0I+\sum_{j=1}^k a_jH_j=0,
    \end{align}
    for suitable multipliers, and hence $\Omega(\rho)\in N_{\cM}$.
    Conversely, we assume that $\Omega(\rho) = a_0 I + \sum_{j=1}^k a_jH_j$. Then,
\begin{align}
    D(\sigma\|\cF_3[\rho]) &=  D(\sigma\|\rho) + \frac{1}{\gamma}\tr\sigma\Omega(\rho) +\log\kappa[\rho] \\
    &=D(\sigma\|\rho) + \frac{1}{\gamma}\tr\sigma(a_0 I + \sum_{j=1}^k a_jH_j) +\log\kappa[\rho]\\
    &=D(\sigma\|\rho) + \frac{1}{\gamma}(a_0 + \sum_{j=1}^k a_jc_j) +\log\kappa[\rho].
\end{align}
Since $ \frac{1}{\gamma}(a_0 + \sum_{j=1}^k a_jc_j) +\log\kappa[\rho]$ is independent of $\sigma$,
\begin{align}
        \cT_{AB}(\rho)=  \argmin_{\sigma\in\cM}\Bigl\{D(\sigma\|\rho)\Bigr\} = \rho.
    \end{align}
\end{proof}

\begin{lemma}[MD fixed points and optimality]\label{lemma: fixed point of MD}
Let $\rho\in\cM_{++}$.  Then
\begin{align}
\cT_{MD}(\rho)=\rho
\quad\Longleftrightarrow\quad
\nabla f(\rho)\in N_{\cM}.
\label{eq:MD-fixed-characterization}
\end{align}
If, in addition, $f$ is convex on $\cM$, these conditions are equivalent to $\rho\in\argmin_{\sigma\in\cM}f(\sigma)$.
\end{lemma}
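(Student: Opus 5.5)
The fixed-point characterization follows the pattern of Lemma~\ref{lemma: fixed point of AB}, with $\nabla f(\rho)$ in place of $\Omega(\rho)$. First write out the first-order condition of the MD subproblem \eqref{eq:MD-map}. By Lemma~\ref{lemma:proximal-well-posedness} with $A=\nabla f(\rho)$, this subproblem has a unique minimizer in $\cM_{++}$. Because the objective is strictly convex and the constraints are affine, the stationarity condition of the Lagrangian (as computed in the proof of Theorem~\ref{theorem: equivalence of each step}) is both necessary and sufficient. It says that $\widehat\rho=\cT_{MD}(\rho)$ is characterized by
\[
\log\widehat\rho=\log\rho-\tfrac{1}{\gamma}\nabla f(\rho)+v_0I+\sum_{i=1}^k v_iH_i
\]
for some real $v_0,\ldots,v_k$, with $\widehat\rho\in\cM$.

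If $\widehat\rho=\rho$, the logarithms cancel and $\nabla f(\rho)\in N_{\cM}$. Conversely, suppose $\nabla f(\rho)=\sum_j b_j H_j + b_0 I$. Then $\tr[\sigma\nabla f(\rho)]$ is constant on $\cM$, so the MD objective reduces, up to a constant, to $\gamma D(\sigma\|\rho)$. Its unique minimizer over $\cM$ is $\rho$ itself, because $D(\sigma\|\rho)\ge 0$ with equality iff $\sigma=\rho\in\cM$. This proves \eqref{eq:MD-fixed-characterization}.

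For the optimality statement, use the tangent description from \eqref{eq:tangent-space}--\eqref{eq:normal-space}. Since $N_{\cM}=T_{\cM}^\perp$, the condition $\nabla f(\rho)\in N_{\cM}$ is equivalent to $\langle\nabla f(\rho),X\rangle=0$ for all $X\in T_{\cM}$. Every $\sigma\in\cM$ gives $\sigma-\rho\in T_{\cM}$, so the directional derivative $\frac{d}{dt}f(\rho+t(\sigma-\rho))|_{t=0}=\langle\nabla f(\rho),\sigma-\rho\rangle$ vanishes. Convexity of $f$ on $\cM$ together with differentiability relative to the affine hull gives the gradient inequality $f(\sigma)\ge f(\rho)+\langle\nabla f(\rho),\sigma-\rho\rangle=f(\rho)$, so $\rho$ is a global minimizer.

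Conversely, let $\rho\in\cM_{++}$ be a minimizer and take any $X\in T_{\cM}$. Because $\rho\succ0$, the points $\rho\pm tX$ lie in $\cM$ for all small $t>0$. One-sided optimality along $+X$ and along $-X$ gives $\langle\nabla f(\rho),X\rangle\ge0$ and $\le0$, so it vanishes. Hence $\nabla f(\rho)\in T_{\cM}^\perp=N_{\cM}$.

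The main obstacle is this last converse step. It is the one place where full rank is used essentially: it guarantees that every tangent direction is feasible in both senses. A smaller point to handle is that $\nabla f$ is determined only modulo $N_{\cM}$ when $f$ is differentiable merely relative to the affine hull. I will note that every statement above depends only on $P_T\nabla f(\rho)$, so this ambiguity is harmless.
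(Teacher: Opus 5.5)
Your proof is correct and follows the paper's route. The fixed-point equivalence comes from the first-order (logarithmic) optimality condition of the MD subproblem, exactly as in Lemma~\ref{lemma: fixed point of AB}, and the optimality equivalence comes from the gradient inequality together with vanishing two-sided directional derivatives along the feasible directions $\rho\pm tX$ at a full-rank minimizer. You also spell out details the paper leaves implicit, such as why the directional derivative vanishes and why $\nabla f(\rho)$ being defined only modulo $N_{\cM}$ is harmless.
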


\begin{proof}
The fixed-point equivalence follows from the first-order optimality condition for \eqref{eq:MD-map}, exactly as in Lemma~\ref{lemma: fixed point of AB}.  Now assume convexity.  If $\nabla f(\rho)\in N_{\cM}$, then for every $\sigma\in\cM$,
\begin{align}
f(\sigma)&\geq f(\rho)+\tr[\nabla f(\rho)(\sigma-\rho)]=f(\rho),
\end{align}
    because $\sigma-\rho\in T_{\cM}$.  Thus $\rho$ is a global minimizer.  Conversely, a differentiable convex function has zero directional derivative in every feasible tangent direction at a relative-interior minimizer.  Therefore $\nabla f(\rho)\perp T_{\cM}$, which is equivalent to $\nabla f(\rho)\in N_{\cM}$.
\end{proof}

At an AB fixed point, Lemma~\ref{lemma: fixed point of AB} already guarantees that $\Omega(\rho^*)$ is normal to $\cM$. Therefore, the only remaining question is whether the objective gradient is also normal. Equivalently, one needs the AB--MD compatibility condition only at the candidate terminal state, rather than at every state visited by the algorithm. We can now state the main fixed-point result.

\begin{theorem}[Global optimality of the AB fixed point]\label{theorem: equivalence of fixed point}
    Assume the objective function $f(\rho) = \tr\rho\Omega(\rho)$ is convex and differentiable, and let $\rho^*\in \cM_{++}$ be an AB fixed point. Then the following statements are equivalent:
    \begin{itemize}
        \item[(c1)] $\rho^*$ is a global minimizer of $f$ over $\cM$.
        \item[(c2)] $\rho^*$ is a fixed point of the MD algorithm.
        \item[(c3)] $\Omega(\rho^*) - \nabla f(\rho^*)\in N_{\cM}$.
    \end{itemize}
\end{theorem}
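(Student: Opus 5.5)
The argument chains the two fixed-point lemmas with the fact that $N_{\cM}$ is a linear subspace. Throughout, $\rho^*\in\cM_{++}$ is an AB fixed point, so Lemma~\ref{lemma: fixed point of AB} gives
\begin{align}
\Omega(\rho^*)\in N_{\cM}.
\end{align}

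We first show (c1)$\Leftrightarrow$(c2). Since $f$ is convex and differentiable and $\rho^*$ lies in the relative interior $\cM_{++}$, Lemma~\ref{lemma: fixed point of MD} applies directly. It shows that $\rho^*$ is a global minimizer of $f$ over $\cM$ if and only if $\cT_{MD}(\rho^*)=\rho^*$, and that both are equivalent to $\nabla f(\rho^*)\in N_{\cM}$.

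It remains to show that $\nabla f(\rho^*)\in N_{\cM}$ is equivalent to (c3). Because $N_{\cM}=\operatorname{span}\{I,H_1,\ldots,H_k\}$ is a linear subspace, it is closed under subtraction.
\begin{itemize}
\item If $\nabla f(\rho^*)\in N_{\cM}$, then $\Omega(\rho^*)-\nabla f(\rho^*)$ is a difference of two elements of $N_{\cM}$, so it lies in $N_{\cM}$. This is (c3).
\item If (c3) holds, then $\nabla f(\rho^*)=\Omega(\rho^*)-\bigl(\Omega(\rho^*)-\nabla f(\rho^*)\bigr)\in N_{\cM}$.
\end{itemize}
This closes the cycle (c1)$\Leftrightarrow$(c2)$\Leftrightarrow$\bigl($\nabla f(\rho^*)\in N_{\cM}$\bigr)$\Leftrightarrow$(c3).

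One could also get (c2)$\Leftrightarrow$(c3) from Theorem~\ref{theorem: equivalence of each step}, since (c3) is equivalent to $\cT_{MD}(\rho^*)=\cT_{AB}(\rho^*)=\rho^*$. The subspace argument is simpler, so I would use that.

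The proof has no real obstacle once the lemmas are in place. The one point to check is that Lemma~\ref{lemma: fixed point of MD}, in particular its direction "global minimizer $\Rightarrow$ $\nabla f(\rho^*)\perp T_{\cM}$," is invoked with all its hypotheses met. It requires $\rho^*$ to lie in the relative interior, which holds because $\rho^*\in\cM_{++}$. It also requires $f$ to be differentiable relative to the affine hull, which is part of the theorem's assumptions. These together let every direction in $T_{\cM}$ be tested two-sidedly, which is why the theorem assumes a full-rank fixed point.
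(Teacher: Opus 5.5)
Your proposal is correct and matches the paper's proof essentially step for step. Lemma~\ref{lemma: fixed point of MD} gives (c1)$\Leftrightarrow$(c2)$\Leftrightarrow\nabla f(\rho^*)\in N_{\cM}$, Lemma~\ref{lemma: fixed point of AB} gives $\Omega(\rho^*)\in N_{\cM}$, and closure of the subspace $N_{\cM}$ under subtraction yields the equivalence with (c3); your alternative route to (c2)$\Leftrightarrow$(c3) via Theorem~\ref{theorem: equivalence of each step} is also valid but not needed.
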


\begin{proof}
Lemma~\ref{lemma: fixed point of MD} gives the equivalence of (c1), (c2), and
$\nabla f(\rho^*)\in N_{\cM}$.  Because $\rho^*$ is an AB fixed point, Lemma~\ref{lemma: fixed point of AB} gives $\Omega(\rho^*)\in N_{\cM}$.  Since $N_{\cM}$ is a vector space,
\begin{align}
\nabla f(\rho^*)\in N_{\cM}
\quad\Longleftrightarrow\quad
\Omega(\rho^*)-\nabla f(\rho^*)\in N_{\cM},
\end{align}
which proves the result.
\end{proof}

Theorem~\ref{theorem: equivalence of fixed point} pinpoints the logical gap between pathwise equivalence and global optimality.  The compatibility condition need only hold at the limiting AB fixed point to certify that point as globally optimal; it need not hold at earlier iterates. Therefore, while the pathwise equivalence between the AB algorithm and the MD algorithm often fails in practical problems, the failure of pathwise equivalence does not automatically imply that an AB fixed point is suboptimal. The AB algorithm can still achieve the global optimizer when the fixed point satisfies condition (c3). 

Let $P_T$ be the Hilbert--Schmidt orthogonal projection onto $T_{\cM}$. To evaluate the condition (c3), we consider the following distance between an operator $Y$ and the normal space $N_{\cM}$.
\begin{align}
\dist_2(Y,N_{\cM})=\|P_TY\|_2
=\sup_{\substack{X\in T_{\cM}\\\|X\|_2\leq1}}|\tr(XY)|.\label{eq:dual-distance}
\end{align}

Thus, applying \eqref{eq:dual-distance} to $Y=\Omega(\rho)-\nabla f(\rho)$ gives:
\begin{align}
    &\operatorname{dist}_2\!\left(\Omega(\rho)-\nabla f(\rho),N_{\cM}\right) \notag\\
    &=\sup_{\substack{X\in T_{\cM}\\ \lVert X\rVert_2\leq 1}}
    \left|Df(\rho)[X]-\tr(X\Omega(\rho))\right| \notag\\
    &=\sup_{\substack{X\in T_{\cM}\\ \lVert X\rVert_2\leq 1}}
    \left|\tr(\rho D\Omega(\rho)[X])\right|,
    \label{eq:weak-form-distance-1}
\end{align}
where the second equality holds because of the following equation given by the product rule
\begin{align}
    Df(\rho)[X]=\tr(X\Omega(\rho))+\tr(\rho D\Omega(\rho)[X]).
\end{align}

Let $d = \dim \cH$. If $E_1,\ldots,E_m$ is a Hilbert--Schmidt orthonormal basis of $T_{\cM}$, where $m = d^2-k-1$, then Parseval's identity gives
\begin{align}
    &\operatorname{dist}_2^2\!\left(\Omega(\rho)-\nabla f(\rho),N_{\cM}\right) \notag\\
    &=\sum_{a=1}^m \bigl[ Df(\rho)[E_a]-\tr(E_a\Omega(\rho))\bigr]^2 \notag\\
    &=\sum_{a=1}^m \bigl[\tr(\rho D\Omega(\rho)[E_a])\bigr]^2.
    \label{eq:weak-form-distance-2}
\end{align}


We therefore derive an equivalent weak-form characterization that does not require the explicit formation of $\nabla f(\rho^*)$.

\begin{corollary}\label{corollary: weak-form conditions of equivalence}
    Let $\rho^*$ be an AB fixed point. Under the assumptions of Theorem~\ref{theorem: equivalence of fixed point}, the following conditions are equivalent:
    \begin{itemize}
        \item[(d1)] $\rho^*$ is the global minimizer;
        \item[(d2)] $Df(\rho^*)[X] = 0 \quad \forall X \in T_{\cM}$;
        \item[(d3)] $\tr(\rho^* D\Omega(\rho^*)[X]) = 0 \quad \forall X \in T_{\cM}$;
        \item[(d4)] $\tr(\rho^* D\Omega(\rho^*)[E_a]) = 0$, for $a = 1,\ldots,m$.
    \end{itemize}
\end{corollary}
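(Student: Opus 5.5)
The plan is to chain the equivalences through Theorem~\ref{theorem: equivalence of fixed point} and Lemma~\ref{lemma: fixed point of MD}, using the orthogonal decomposition $\cB(\cH)=T_{\cM}\oplus N_{\cM}$.

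First, (d1)$\Leftrightarrow$(d2). By Lemma~\ref{lemma: fixed point of MD}, under convexity (d1) holds if and only if $\nabla f(\rho^*)\in N_{\cM}=T_{\cM}^{\perp}$. This means $\tr(X\nabla f(\rho^*))=0$ for all $X\in T_{\cM}$. Since $f$ is differentiable relative to the affine hull of $\cM$, we have $Df(\rho^*)[X]=\tr(X\nabla f(\rho^*))$ for tangent $X$, so this is exactly (d2).

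Next, (d2)$\Leftrightarrow$(d3). Use the product-rule identity
\begin{align*}
Df(\rho^*)[X]=\tr(X\Omega(\rho^*))+\tr(\rho^* D\Omega(\rho^*)[X]),
\end{align*}
already recorded before the corollary. Since $\rho^*$ is an AB fixed point, Lemma~\ref{lemma: fixed point of AB} gives $\Omega(\rho^*)\in N_{\cM}$. Hence $\tr(X\Omega(\rho^*))=0$ for every $X\in T_{\cM}$, and the two directional derivatives coincide on $T_{\cM}$. Equivalently, by \eqref{eq:weak-form-distance-1}, (d3) says $\dist_2(\Omega(\rho^*)-\nabla f(\rho^*),N_{\cM})=0$, which is condition (c3) of Theorem~\ref{theorem: equivalence of fixed point}. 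This gives a second route to (d1).

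Finally, (d3)$\Leftrightarrow$(d4). The direction (d3)$\Rightarrow$(d4) is trivial because each $E_a\in T_{\cM}$. For the converse, I would use linearity of $X\mapsto\tr(\rho^* D\Omega(\rho^*)[X])$ and expand an arbitrary $X\in T_{\cM}$ in the real orthonormal basis $\{E_a\}$. Alternatively, Parseval's identity \eqref{eq:weak-form-distance-2} shows that the sum of squares vanishes exactly when the distance vanishes.

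The only delicate point is that the product rule and the directional derivative $D\Omega(\rho^*)$ must make sense. This means assuming $\Omega$ is differentiable at $\rho^*$ along tangent directions, which is implicit in writing (d3). Given that standing assumption, every step is linear algebra plus the earlier lemmas, so I expect no substantive obstacle.
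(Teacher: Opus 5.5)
Your proposal is correct and follows essentially the paper's proof, with the same ingredients: Lemma~\ref{lemma: fixed point of AB}, Lemma~\ref{lemma: fixed point of MD}, the product rule, and an orthonormal basis of $T_{\cM}$. The only difference is routing: the paper passes through condition (c3) of Theorem~\ref{theorem: equivalence of fixed point} and uses the Parseval identity \eqref{eq:weak-form-distance-2} for (d4), whereas you go directly (d1)$\Leftrightarrow$(d2) via Lemma~\ref{lemma: fixed point of MD} and (d3)$\Leftrightarrow$(d4) by linearity, which makes it transparent that $\Omega(\rho^*)\in N_{\cM}$ is needed only for the step (d2)$\Leftrightarrow$(d3).
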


\begin{proof}
    Since $\rho^*$ is an AB fixed point, Lemma~\ref{lemma: fixed point of AB} gives $\Omega(\rho^*)\in N_{\cM}$.
    Therefore,
    \begin{align}
        \tr(X\Omega(\rho^*)) = 0 \quad\forall X\in T_{\cM}.
    \end{align}
    The condition (c3) is equivalent to
    \begin{align}
        \tr(X\Omega(\rho^*)) = \tr(X\nabla f(\rho^*))\quad\forall X\in T_{\cM}.
    \end{align}
    We know that $\tr(X\nabla f(\rho^*)) = Df(\rho^*)[X]$. Therefore, condition (c3) is equivalent to 
    \begin{align}
        Df(\rho^*)[X] = 0 \quad\forall X\in T_{\cM}.
    \end{align}
    Then, by the product rule, we have
    \begin{align}
        Df(\rho)[X] = \tr(X\Omega(\rho))+ \tr(\rho D\Omega(\rho)[X]).
    \end{align}
    Thus, condition (d2) is equivalent to condition (d3).
    From \eqref{eq:weak-form-distance-2}, we know that if $\operatorname{dist}_2\!\left(\Omega(\rho)-\nabla f(\rho),N_{\cM}\right)=0$, then $\tr(\rho D\Omega(\rho)[E_a]) = 0$ for $a = 1,\ldots,m$. Thus, condition (d4) is equivalent to $\Omega(\rho^*) - \nabla f(\rho^*)\in N_{\cM}$.
\end{proof}

Moreover, the fixed-point optimality criterion admits the following extension to convex but possibly nonsmooth objective functions.
\begin{corollary}\label{theorem: subgradient}
Let $U\subseteq\cB(\cH)$ be an open convex set containing $\cM$, let $f:U\to\mathbb{R}$ be convex, and let $\rho^*\in\cM_{++}$ be an AB fixed point. Define the ambient subdifferential by
\begin{align}
\partial f(\rho)
:={}&
\bigl\{g\in\cB(\cH):
f(\tau)\geq f(\rho)+\langle g,\tau-\rho\rangle
\notag\\[-1mm]
&\hspace{42mm}\text{for every }\tau\in U\bigr\}.
\label{eq:ambient-subdifferential}
\end{align}
Then $\rho^*$ is a global minimizer of $f$ over $\cM$ if and only if
\begin{align}
\partial f(\rho^*)\cap N_{\cM}\neq\emptyset.
\end{align}
\end{corollary}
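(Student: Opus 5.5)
The proof is the standard optimality condition for a convex function over an affine slice, specialized to a relative-interior point. The fixed-point hypothesis plays essentially no role beyond guaranteeing $\rho^*\in\cM_{++}$. By Lemma~\ref{lemma: fixed point of AB} it also gives $\Omega(\rho^*)\in N_{\cM}$, which is consistent with the statement but not needed for it. I would prove the two directions separately.

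\emph{Sufficiency.} Suppose $g\in\partial f(\rho^*)\cap N_{\cM}$, say $g=a_0I+\sum_j a_jH_j$. For any $\sigma\in\cM\subseteq U$, the subgradient inequality gives
\begin{align*}
f(\sigma)\geq f(\rho^*)+\tr[g(\sigma-\rho^*)]=f(\rho^*).
\end{align*}
The equality holds because $\tr(\sigma-\rho^*)=0$ and $\tr[(\sigma-\rho^*)H_j]=0$, so $\sigma-\rho^*\in T_{\cM}$, which is orthogonal to $N_{\cM}$. Hence $\rho^*$ is a global minimizer.

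\emph{Necessity.} Suppose $\rho^*$ minimizes $f$ over $\cM$. Let $A:=\rho^*+T_{\cM}$ be the affine hull and consider $h:=f+\iota_{A\cap U}$, where $\iota$ is the convex indicator function. First I show that $\rho^*$ minimizes $f$ over all of $A\cap U$, not only over $\cM$. Since $\rho^*\succ0$, there is a neighborhood of $\rho^*$ in $A$ contained in $\cM$, so $\rho^*$ is a local minimizer of the convex function $f|_{A\cap U}$. A local minimizer of a convex function on a convex set is a global minimizer, so $0\in\partial h(\rho^*)$.

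Next I apply the subdifferential sum rule (Moreau--Rockafellar). The function $f$ is finite and continuous on the open set $U$, and $\rho^*\in U\cap A$ lies in the domain of $\iota_{A}$. Since $U$ is open, $\iota_{A\cap U}$ and $\iota_A$ have the same subdifferential at $\rho^*$, namely the normal cone of the affine set $A$, which is $N_{\cM}$. The sum rule then gives
\begin{align*}
0\in\partial f(\rho^*)+N_{\cM},
\end{align*}
so there is $g\in\partial f(\rho^*)$ with $-g\in N_{\cM}$. Because $N_{\cM}$ is a subspace, $g\in N_{\cM}$, as required.

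The main obstacle is this necessity step, specifically justifying the sum rule and the fact that $\partial f(\rho^*)$ is nonempty. Both follow from continuity of a finite convex function on an open set in finite dimensions. If I want to avoid citing Moreau--Rockafellar, a direct separation argument works instead. Define
\begin{align*}
C:=\{(x,s)\in U\times\mathbb{R}: s>f(x)\}, \qquad L:=\{(x,f(\rho^*)) : x\in A\}.
\end{align*}
The set $C$ is convex, open and nonempty, and $L$ is affine. They are disjoint because $f(x)\geq f(\rho^*)$ for all $x\in A\cap U$, by the minimality just established. Separating them gives a functional $(g,-1)$, after normalization. It has nonzero last component because $C$ is open and contains points directly above $\rho^*$. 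This $g$ satisfies the subgradient inequality on $U$, and being constant on $L$ forces $g\perp T_{\cM}$, that is, $g\in N_{\cM}$.
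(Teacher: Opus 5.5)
Your proof is correct and takes the same approach as the paper. Sufficiency is the identical argument: $\langle g,\sigma-\rho^*\rangle=0$ because $\sigma-\rho^*\in T_{\cM}$. Necessity is the first-order (Lagrange-multiplier) condition for a convex problem with affine equality constraints, using that the positive-semidefinite constraint is inactive at $\rho^*\succ0$. The paper only cites that condition as standard; you justify it by extending minimality from $\cM$ to $A\cap U$ and then applying the Moreau--Rockafellar sum rule (or a direct separation argument), which is exactly the step the paper leaves implicit.
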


\subsection{A Posteriori Optimality Certificates}

The preceding results reveal the geometric property underlying the global optimality of the AB fixed point. However, condition (c3) is expressed in terms of the objective gradient $\nabla f(\rho^*)$. Directly computing $\nabla f(\rho^*)$ is undesirable in the applications for which the AB update is attractive. To verify whether a state produced by the AB iteration is globally optimal, we therefore propose an a posteriori certification scheme for the state produced by the AB iteration.

Corollary~\ref{corollary: weak-form conditions of equivalence} gives a zero-residual test: an AB fixed point is globally optimal exactly when every feasible first-order variation vanishes. For numerical purposes, one also needs to interpret a small but nonzero residual. The next theorem converts the distance from the objective gradient to $N_{\cM}$ into a rigorous upper bound on suboptimality.

\begin{theorem}\label{theorem:gap-certificate}
Let $f$ be convex and differentiable, let $f_*:=\min_{\sigma\in\cM}f(\sigma)$, and define
\begin{align}
\epsilon(\rho)&:=\dist_2(\nabla f(\rho),N_{\cM}),\\
R_{\cM}(\rho)&:=\sup_{\sigma\in\cM}\|\rho-\sigma\|_2.
\end{align}
For every $\rho\in\cM$,
\begin{align}
0\leq f(\rho)-f_*
\leq R_{\cM}(\rho)\epsilon(\rho)
\leq\sqrt{2}\,\epsilon(\rho).\label{eq:objective-gap}
\end{align}
\end{theorem}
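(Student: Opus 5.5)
The plan is to combine the first-order convexity inequality with the observation that feasible differences lie in $T_{\cM}$. The lower bound $0\le f(\rho)-f_*$ is immediate, since $\rho\in\cM$ and $f_*$ is the minimum over $\cM$. The minimum is attained because $\cM$ is compact and $f$ is continuous, being differentiable. Let $\sigma_*\in\cM$ be a minimizer. Convexity and differentiability give
\begin{align}
f(\sigma_*)\geq f(\rho)+\tr[\nabla f(\rho)(\sigma_*-\rho)],
\end{align}
so $f(\rho)-f_*\le \tr[\nabla f(\rho)(\rho-\sigma_*)]$. If $\rho$ is on the boundary of $\cM$, I will use the standing assumption that $f$ is differentiable relative to the affine hull. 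The convexity inequality along the segment $[\sigma_*,\rho]$ only needs the directional derivative in the direction $\rho-\sigma_*$, which is the gradient paired with that tangent vector.

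Next I will use the structure of $\rho-\sigma_*$. Both $\rho$ and $\sigma_*$ are in $\cM$, so $\tr(\rho-\sigma_*)=0$ and $\tr((\rho-\sigma_*)H_j)=0$, hence $X:=\rho-\sigma_*\in T_{\cM}$. Decompose $\nabla f(\rho)=P_T\nabla f(\rho)+(I-P_T)\nabla f(\rho)$. The normal component lies in $N_{\cM}=T_{\cM}^\perp$ and pairs to zero with $X$. So $\tr[\nabla f(\rho)X]=\tr[(P_T\nabla f(\rho))X]$. Cauchy--Schwarz in the Hilbert--Schmidt inner product then gives
\begin{align}
f(\rho)-f_*\le \|P_T\nabla f(\rho)\|_2\,\|\rho-\sigma_*\|_2\le \epsilon(\rho)R_{\cM}(\rho),
\end{align}
where I use \eqref{eq:dual-distance} to identify $\|P_T\nabla f(\rho)\|_2=\dist_2(\nabla f(\rho),N_{\cM})=\epsilon(\rho)$, and bound $\|\rho-\sigma_*\|_2$ by the supremum defining $R_{\cM}(\rho)$. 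Equivalently, one can apply the supremum form of \eqref{eq:dual-distance} directly with $X/\|X\|_2$.

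Finally, I need $R_{\cM}(\rho)\le\sqrt2$, i.e.\ $\|\rho-\sigma\|_2^2\le 2$ for any two density matrices. Expanding gives $\tr\rho^2+\tr\sigma^2-2\tr(\rho\sigma)$. Purity gives $\tr\rho^2\le1$ and $\tr\sigma^2\le1$, and $\tr(\rho\sigma)\ge0$ because $\rho,\sigma\succeq0$. Hence the sum is at most $2$. No step is a serious obstacle. The only point needing care is justifying the gradient inequality when $\rho$ is not full rank, which I handle by restricting to the feasible segment as above.
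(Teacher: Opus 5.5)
Your proposal is correct and follows essentially the same route as the paper. Both arguments apply the first-order convexity inequality at $\rho$, drop the normal component of $\nabla f(\rho)$ because $\rho-\sigma_*\in T_{\cM}$, and use Cauchy--Schwarz to get $\epsilon(\rho)R_{\cM}(\rho)$, with the final step $\|\rho-\sigma\|_2^2=\tr\rho^2+\tr\sigma^2-2\tr(\rho\sigma)\le 2$. Your additional remarks on attainment of $f_*$, on $\tr(\rho\sigma)\ge 0$, and on the segment-based justification at boundary points only make explicit details that the paper leaves implicit.
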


\begin{proof}
Let $\rho^*\in\argmin_{\sigma\in\cM}f(\sigma)$ and decompose
$\nabla f(\rho)=P_T\nabla f(\rho)+P_N\nabla f(\rho)$.
For every $\sigma\in\cM$, convexity gives
\begin{align}
    f(\rho) - f(\sigma) \leq \tr[\nabla f(\rho)(\rho-\sigma)].
\end{align}
Since $\rho - \sigma \in T_{\cM}$,
\begin{align}
f(\rho)-f(\sigma)
&\leq\tr[\nabla f(\rho)(\rho-\sigma)] \notag\\
&=\tr[P_T\nabla f(\rho)(\rho-\sigma)]\notag\\
&\leq \|P_T\nabla f(\rho)\|_2\|\rho-\sigma\|_2 \notag\\
&\leq\epsilon(\rho)R_{\cM}(\rho).
\end{align}
Taking $\sigma = \rho^*$ gives
\begin{align}
    f(\rho) - f_* \leq \epsilon(\rho)R_{\cM}(\rho).
\end{align}
For density matrices $\rho$ and $\sigma$,
\begin{align}
\|\rho-\sigma\|_2^2=\tr\rho^2+\tr\sigma^2-2\tr(\rho\sigma)\leq2,
\end{align}
which proves the last inequality in \eqref{eq:objective-gap}. 
\end{proof}

\begin{remark}
    At an exact AB fixed point, $P_T\Omega(\rho)=0$, and hence $\dist_2(\nabla f(\rho),N_{\cM}) = \dist_2(\Omega(\rho) - \nabla f(\rho), N_{\cM})$.
\end{remark}

It remains to estimate $\epsilon(\rho)$ without explicitly calculating $\nabla f(\rho)$. Convexity supplies such estimates from one-dimensional secant slopes. Along each feasible line $t\mapsto \rho+tE_j$, the derivative at the origin lies between the backward and forward difference quotients.

Let $E_1,\ldots,E_m$ be an orthonormal basis of $T_{\cM}$ and let $\rho\in \cM_{++}$. Choose $h_j>0$ such that $\rho\pm h_jE_j\in \cM$. Since $\rho\succ 0$, it is sufficient to take
\begin{align}
    0<h_j< \frac{\lambda_{\min}(\rho)}{\|E_j\|_{\infty}}
\end{align}
to guarantee positivity of both perturbed states.

Defining
\begin{align}
    l_j:= \frac{f(\rho) - f(\rho-h_jE_j)}{h_j},\quad u_j :=  \frac{f(\rho+h_jE_j)-f(\rho)}{h_j},
\end{align}
and
\begin{align}
    \beta_j := \max \{|l_j|,|u_j|\}.
\end{align}

Therefore, the finite-difference method yields the suboptimality bound in the following theorem.

\begin{theorem}\label{theorem: finite difference certificate}
    If $f$ is convex and differentiable, then 
    \begin{align}
        l_j \leq Df(\rho)[E_j] \leq u_j \label{eq: bound for derivative}
    \end{align}
    and
    \begin{align}
        \epsilon(\rho) \leq  (\sum_{j=1}^m \beta_j^2)^{\frac12}.\label{eq:finite-difference-residual}
    \end{align}
    Consequently,
    \begin{align}
    0\leq f(\rho)-f_*
    \leq R_{\cM}(\rho)\left(\sum_{j=1}^m\beta_j^2\right)^{1/2}.
    \end{align}
\end{theorem}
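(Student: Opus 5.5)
The plan has three steps: bracket each directional derivative by secant slopes using one-dimensional convexity, pass from these brackets to $\epsilon(\rho)$ through Parseval's identity in the tangent basis, and finish by invoking Theorem~\ref{theorem:gap-certificate}.

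\textbf{Step 1: one-dimensional convexity.} For each $j$ I restrict $f$ to the feasible segment by setting $\phi_j(t):=f(\rho+tE_j)$ for $t\in[-h_j,h_j]$. The choice of $h_j$ together with $E_j\in T_{\cM}$ guarantees that $\rho+tE_j\in\cM$ on this whole interval. Indeed, the endpoints are feasible, trace and the linear constraints are preserved, and positivity holds because $\lambda_{\min}(\rho+tE_j)\ge\lambda_{\min}(\rho)-|t|\,\|E_j\|_\infty>0$. Since $f$ is convex and differentiable, $\phi_j$ is convex and differentiable, with $\phi_j'(0)=Df(\rho)[E_j]$. The monotonicity of difference quotients of a convex function of one variable then gives
\begin{align*}
\frac{\phi_j(0)-\phi_j(-h_j)}{h_j}\le\phi_j'(0)\le\frac{\phi_j(h_j)-\phi_j(0)}{h_j},
\end{align*}
which is exactly \eqref{eq: bound for derivative}. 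The same bound can be read off from the gradient inequality $f(\sigma)\ge f(\rho)+\tr[\nabla f(\rho)(\sigma-\rho)]$ applied at $\sigma=\rho\pm h_jE_j$.

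\textbf{Step 2: from derivative brackets to $\epsilon(\rho)$.} Because $l_j\le Df(\rho)[E_j]\le u_j$, we have $|Df(\rho)[E_j]|\le\max\{|l_j|,|u_j|\}=\beta_j$. Next I express $\epsilon(\rho)$ through these directional derivatives. As in \eqref{eq:dual-distance}, $\epsilon(\rho)=\dist_2(\nabla f(\rho),N_{\cM})=\|P_T\nabla f(\rho)\|_2$. Since $E_1,\ldots,E_m$ is an orthonormal basis of $T_{\cM}$, Parseval's identity gives
\begin{align*}
\epsilon(\rho)^2=\sum_{j=1}^m\tr\bigl(E_j\nabla f(\rho)\bigr)^2=\sum_{j=1}^m\bigl(Df(\rho)[E_j]\bigr)^2\le\sum_{j=1}^m\beta_j^2,
\end{align*}
and this is \eqref{eq:finite-difference-residual}.

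\textbf{Step 3: the gap bound.} Inserting \eqref{eq:finite-difference-residual} into Theorem~\ref{theorem:gap-certificate}, which applies since $\rho\in\cM$ and $f$ is convex and differentiable, gives $0\le f(\rho)-f_*\le R_{\cM}(\rho)\,\epsilon(\rho)\le R_{\cM}(\rho)\bigl(\sum_j\beta_j^2\bigr)^{1/2}$.

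The only step that needs care is Step 1. There I must justify that $\phi_j$ is differentiable at $0$ with derivative equal to the Hilbert--Schmidt pairing $\tr(E_j\nabla f(\rho))$. This is where the standing assumption that $f$ is differentiable relative to the affine hull of $\cM$ is used, together with the fact that $\rho$ is an interior point of the segment. Everything else is a routine consequence of convexity and orthonormality.
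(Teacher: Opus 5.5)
Your proposal is correct and follows the paper's proof essentially step for step. You bracket $\phi_j'(0)=Df(\rho)[E_j]$ between the secant slopes of the convex restriction $\phi_j$, bound $\|P_T\nabla f(\rho)\|_2$ via Parseval in the orthonormal tangent basis, and then invoke Theorem~\ref{theorem:gap-certificate}. Your explicit check that the whole segment $\rho+tE_j$, $|t|\le h_j$, stays in $\cM$ is a detail the paper leaves implicit.
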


\begin{proof}
    For fixed $j$, let $\phi_j(t) = f(\rho+tE_j)$. The function $\phi_j$ is convex. Hence its derivative at zero lies between the left and right secant slopes:
    \begin{align}
        \frac{\phi_j(0) - \phi_j(-h)}{h} \leq \phi'_j(0) \leq \frac{\phi_j(h) - \phi_j(0)}{h}.
    \end{align}
    Since $\phi'_j(0) = Df(\rho)[E_j]$, this proves~\eqref{eq: bound for derivative}. 
    We know that $|Df(\rho)[E_j]| \leq \beta_j$. Since 
    \begin{align}
        P_T\nabla f(\rho)=\sum_{j=1}^mDf(\rho)[E_j]E_j,
    \end{align}
    Parseval's identity yields \eqref{eq:finite-difference-residual}. The objective-gap bound follows from Theorem~\ref{theorem:gap-certificate}.
\end{proof}

\section{Examples}\label{sec: examples}

In this section, we illustrate the global optimality criterion and the a posteriori certificates using dephasing--depolarizing channel examples, both with and without additional linear constraints. The theoretical analysis identifies the global minimizer as the unique full-rank AB fixed point even though pathwise equivalence to MD fails. Numerical experiments illustrate convergence for the tested initializations and evaluate the certified objective-gap bounds.

\subsection{Formulation}
We consider the calculation of the quantum relative entropy of channels. 
Although this quantity has operational interpretations in channel discrimination and resource theories
\cite{cooney2016strong,wilde2020amortized,wang2019resource}, its numerical calculation is challenging because gradient-based solvers are hindered by complicated matrix functions \cite{liu2026posteriori}. 
Recently, numerical approaches based on semidefinite approximation can provide certified upper and lower bounds on this same channel divergence \cite{kossmann2025semidefinite}, and the Arimoto--Blahut algorithm provides another approach that is both efficient and certifiable \cite{liu2026posteriori}.
For quantum channels $\cN_{1,A\to B}$ and $\cN_{2,A\to B}$, where systems $A$ and $B$ are finite-dimensional Hilbert spaces, the quantum relative entropy between these two channels is defined as
\begin{equation}
D(\cN_1\|\cN_2):= \sup_{\rho_{AR}} D(\cN_{1,A\xrightarrow{}B}(\rho_{AR})\|\cN_{2,A\xrightarrow{}B}(\rho_{AR})).\label{eq: def1 of qre of channel}
\end{equation}

Using the Choi matrices of channels $\cN_1$ and $\cN_2$, we obtain an equivalent expression for \eqref{eq: def1 of qre of channel}:
\begin{align}\label{concave optimization of qre}
    D(\cN_1\|\cN_2) = 
    \sup_{\rho_A \in \cS(\cH_A)} 
    D(\sqrt{\rho_A} \Gamma_{AB}^{\cN_1}\sqrt{\rho_A}\|\sqrt{\rho_A}\Gamma_{AB}^{\cN_2}\sqrt{\rho_A}),
\end{align}
where $\Gamma_{AB}^{\cN_1}$ and $\Gamma_{AB}^{\cN_2}$ are the Choi matrices of quantum channels $\cN_1$ and $\cN_2$, respectively. It has been shown in \cite{khatri2020principles} that the optimization problem in \eqref{concave optimization of qre} is concave in $\rho_A$. Thus, calculation of the negative quantum relative entropy of channels can be written as the following convex optimization problem \cite{kossmann2025semidefinite}:
\begin{align}\label{SJA}
     \inf_{\rho_A \in \cS(\cH_A)} 
f(\rho_{A}),
\end{align}
where
\begin{align}
    f(\rho_{A}):=
    -D(\sqrt{\rho_A} \Gamma_{AB}^{\cN_1}\sqrt{\rho_A}\|\sqrt{\rho_A}\Gamma_{AB}^{\cN_2}\sqrt{\rho_A}).
    \label{eq: definition of f}
\end{align}
Define
\begin{align}  &\Omega_1(\rho_{A})\notag\\ &:= -\tr_B(\Gamma_{AB}^{\cN_1}\rho_A^{\frac{1}{2}} (\log \rho_A^{\frac{1}{2}} \Gamma_{AB}^{\cN_1}\rho_A^{\frac{1}{2}} - \log \rho_A^{\frac{1}{2}} \Gamma_{AB}^{\cN_2}\rho_A^{\frac{1}{2}})\rho_A^{-\frac{1}{2}}),
\notag\\
     &\Omega(\rho_{A}) :=\frac{\Omega_1(\rho_{A})+\Omega_1(\rho_{A})^{\dagger} }{2},\label{eq: definition of Omega}
\end{align}
It follows that 
\begin{align}
f(\rho_{A})
= \tr[\rho_{A}\Omega(\rho_{A})]. \label{ALA1}
\end{align}
With energy constraints, we can define the mixture family as follows:
\begin{align*}
\cM
:=\{\rho_{A}\in\cS(\cH_A):
\Tr \rho_{A}H_j =E_j,j=1,\ldots,k\}.
\end{align*}
Then the constrained problem is written as
\begin{align}\label{SJB}
     \inf_{\rho_A \in \cM} 
f(\rho_{A}).
\end{align}


\subsection{Theoretical Analysis}
To conduct a concrete analysis and numerical experiments, we consider the channel relative entropy between a dephasing channel and a depolarizing channel. The depolarizing channel $\cD_p$ with parameter $p$ is defined by
$ \cD_p(\rho) = (1-p)\rho + p\frac{I}{2}$, and
the dephasing channel is defined as 
$    \cD_{deph}(\rho) = q\rho + (1-q)\sigma_z\rho\sigma_z$,
where $\sigma_z$ denotes the Pauli Z matrix. The dimensions of systems $A$ and $B$ are $d_A=d_B=2$. The parameter of the dephasing channel $\cD_{deph}$ is chosen as $q = 0.4$, and the depolarizing parameter is chosen as $p = 0.05$ in the following analysis and numerical experiments.

We first investigate the case without linear constraints. In this case, both the dephasing and depolarizing channels are Pauli channels. Thus, these two channels are teleportation-covariant~\cite{pirandola2017fundamental}. According to the teleportation-covariance property and the data-processing inequality,
the maximization in \eqref{eq: def1 of qre of channel} for these two channels is attained at the maximally mixed state, which yields the exact value of $D(\cD_{deph}\|\cD_p)$. In this scenario, the normal space is $N_{\cM} = \{cI:c\in \mathbb{R}\}$.

Then we have the following theorem:
\begin{theorem}\label{theorem: convergence without constraint}
     The maximally mixed state is the unique full-rank fixed point of the AB algorithm in the above unconstrained problem.
\end{theorem}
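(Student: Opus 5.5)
By Lemma~\ref{lemma: fixed point of AB}, and since $N_{\cM}=\{cI:c\in\mathbb{R}\}$ in the unconstrained case, a full-rank $\rho_A$ is an AB fixed point if and only if $\Omega(\rho_A)=cI$ for some real $c$. Equivalently, the traceless part $P_T\Omega(\rho_A)$ must vanish. The proof therefore has two parts: showing $\Omega(I/2)\propto I$ (existence), and showing $P_T\Omega(\rho_A)\neq 0$ for every other full-rank $\rho_A$ (uniqueness). Throughout, I parametrize $\rho_A=\frac12(I+\vec r\cdot\vec\sigma)$ with $|\vec r|<1$. I also use that the Choi matrices are diagonal in the Bell basis: $\Gamma^{\cD_{deph}}=2[q\Phi^+ +(1-q)\Phi^-]$ and $\Gamma^{\cD_p}=2[(1-\tfrac{3p}{4})\Phi^+ +\tfrac p4(\Phi^-+\Psi^++\Psi^-)]$.

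\emph{Existence via covariance.} Both channels are Pauli channels, so $\cN(P\rho P)=P\cN(\rho)P$ for every Pauli $P$. Using $(P\otimes I)\Gamma^{\cN}(P\otimes I)^\dagger=(I\otimes \bar P)\Gamma^{\cN}(I\otimes\bar P)^\dagger$, i.e.\ a transpose trick on the Choi matrix, I will check directly from \eqref{eq: definition of Omega} that
\[
\Omega(U\rho_AU^\dagger)=U\Omega(\rho_A)U^\dagger
\]
for every unitary $U$ that is a symmetry of both channels. The point is that the conjugations on $B$ cancel inside the logarithms and inside $\tr_B$. Taking $\rho_A=I/2$, the matrix $\Omega(I/2)$ then commutes with $X$, $Y$ and $Z$, so it is a scalar multiple of $I$. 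Hence $I/2$ is a full-rank fixed point.

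\emph{Reduction for uniqueness.} The same covariance holds for the full group of $z$-rotations $e^{-i\theta Z/2}$, since the dephasing channel commutes with these and the depolarizing channel commutes with all unitaries. It also holds for conjugation by $X$, which maps $z\mapsto -z$. Consequently the map $\vec r\mapsto \vec\omega(\vec r)$, where $\Omega(\rho_A)=\omega_0 I+\vec\omega\cdot\vec\sigma$, is equivariant under these symmetries. It therefore suffices to treat $\vec r=(x,0,z)$ with $x\ge0$. Equivariance under the reflection $y\mapsto -y$ (conjugation by $X$ combined with complex conjugation / transpose, which fixes both real Bell-diagonal Choi matrices) forces $\omega_y=0$ on this half-plane. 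The fixed-point condition thus reduces to the two scalar equations $\omega_x(x,z)=\omega_z(x,z)=0$.

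\emph{Main obstacle: the explicit computation.} I will compute $\rho_A^{1/2}\Gamma^{\cN_i}\rho_A^{1/2}$ as $4\times4$ matrices in the Bell basis. Here $\rho_A^{1/2}=\alpha I+\beta\,\hat r\cdot\vec\sigma$, where $\alpha,\beta$ are given by $\frac12(\sqrt{\lambda_+}\pm\sqrt{\lambda_-})$ in terms of the eigenvalues $\lambda_\pm$ of $\rho_A$, and $\sigma_a\otimes I$ permutes the Bell states with phases. From these I will evaluate the logarithms and the partial trace. The goal is a sign argument: I expect to show $\vec\omega(\vec r)\cdot\vec r<0$, or $>0$, for every $\vec r\ne0$. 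This would follow if the scalar function $g(\vec r):=\tr[\rho_A\Omega(\rho_A)]=f(\rho_A)$ is strictly convex along rays, combined with the identity $Df(\rho)[X]=\tr(X\Omega)+\tr(\rho\,D\Omega[X])$ to relate $\vec\omega\cdot\vec r$ to the radial derivative. If this route fails, I will fall back on two separate one-dimensional checks. On the $z$-axis ($x=0$) everything is diagonal in the computational basis and commutes, so $\omega_z(0,z)$ can be written in closed form and shown nonzero for $z\ne0$. For $x>0$, I will show $\omega_x(x,z)\ne0$ via the off-diagonal block, checking monotonicity analytically in the one remaining variable. The numerical values $q=0.4$ and $p=0.05$ can be used only at this last step to fix signs. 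I expect this step, controlling $\vec\omega$ off the symmetry axes where the two logarithms do not commute, to be the main difficulty.
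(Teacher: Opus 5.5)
\textbf{Verdict: genuine gap.} Your existence step and your target quantity are sound. Covariance forces $\Omega(I/2)\propto I$, and $\vec\omega\cdot\vec r=\tr[(\rho-I/2)\Omega(\rho)]$ is exactly the scalar $T(\rho)$ the paper uses. A strict sign of $T$ rules out other full-rank fixed points, because $\Omega(\rho)\in N_{\cM}$ at any fixed point. (A minor slip: the reflection $y\mapsto-y$ is complex conjugation alone, since both Choi matrices are real; composing it with conjugation by $X$ gives $z\mapsto-z$.)

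\emph{The convexity route does not give the sign.} Convexity of $f$ controls $Df(\rho)[\rho-I/2]=\tr[(\rho-I/2)\nabla f(\rho)]$. The product rule then only gives $\tr[(\rho-I/2)\Omega(\rho)]=Df(\rho)[\rho-I/2]-\tr(\rho\,D\Omega(\rho)[\rho-I/2])$. The last term is precisely the AB--MD discrepancy, and convexity says nothing about it. It is not small here: with the paper's reported values at $\rho^{(0)}=\operatorname{diag}(3/8,5/8)$, one finds $\tr[(\rho^{(0)}-I/2)\nabla f]\approx0.101$ but $\tr[(\rho^{(0)}-I/2)\Omega]\approx0.053$. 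In fact the inequality you want does not follow from convexity at all. Take the same channel family with $p=0.05$ and $q=0.95$, where $f$ is still convex and minimized at $I/2$. My own direct computation at $\rho=\operatorname{diag}(0.55,0.45)$ gives $T(\rho)\approx-3.7\times10^{-4}<0$, and this agrees with the paper's closed form for $T$. Separately, the amplitude-damping example of Section~\ref{sec: counterexample} numerically exhibits a convex $f$ of this form with a non-optimal full-rank AB fixed point. In the paper, the sign of $T$ rests on a parameter-specific inequality, $m_p/A_p=77/78>4/5=2q$.

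\emph{The fallback is a plan, and it starts from a false simplification.} On the $z$-axis, $A_{\cN_1}$ and $A_{\cN_2}$ are not diagonal. Each has a $2\times2$ block on $\operatorname{span}\{|00\rangle,|11\rangle\}$, with off-diagonal entries $(2q-1)\sqrt{\lambda_0\lambda_1}$ and $(1-p)\sqrt{\lambda_0\lambda_1}$ respectively, where $\lambda_0,\lambda_1$ are the diagonal entries of $\rho$. Away from $I/2$ these blocks commute only if $1-p=(2q-1)(1-p/2)$, which fails for $q=0.4$. Off the axis you give no way to evaluate $\omega_x$, which after your reduction still depends on two variables, not one.

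\emph{What is missing is the explicit evaluation the paper performs.} The paper uses the radial identity $T(\rho)=-r\tr[(\log_+A_{\cN_1}-\log A_{\cN_2})\,\partial_rA_{\cN_1}]$, obtained from $\rho^{-1/2}(\rho-I/2)=2r\,\partial_r\rho^{1/2}$. This removes the factor $\rho^{-1/2}$ and avoids differentiating any logarithm. The $A_{\cN_1}$ term then involves only its two nonzero eigenvalues $(1\pm\Delta_u)/2$. The $A_{\cN_2}$ term is computed in the product eigenbasis of $\rho$, where $A_{\cN_2}$ splits into a $2\times2$ block with an explicit logarithm plus two scalars. The result is a closed form for $T$. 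Its positivity follows from three facts: $\operatorname{artanh}(x)/x$ is increasing, $R/A_p>\Delta_1\ge\Delta_u$, and $77/78>4/5$. Without a computation of this kind, the uniqueness half of your proposal is not established.
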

We know that the problem \eqref{SJA} is a convex optimization problem and that the global minimizer is the maximally mixed state. Theorem~\ref{theorem: convergence without constraint} identifies the maximally mixed state as the unique full-rank fixed point of the AB algorithm. Consequently, if the AB algorithm converges to a full-rank limit, then its limit must be the maximally mixed state, which is the global minimizer of the problem.

To show that the AB algorithm is not the same as the MD algorithm in this example, we propose the following theorem.
\begin{theorem}\label{theorem: no equivalence without constraint}
    $\Omega(\rho) - \nabla f(\rho)\in N_{\cM} $ does not hold universally over the feasible set $\cS(\cH_A)$, where $N_{\cM} = \{cI:c\in \mathbb{R}\}$.
\end{theorem}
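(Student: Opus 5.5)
By the remark after Theorem~\ref{theorem: equivalence of each step}, the condition $\Omega(\rho)-\nabla f(\rho)\in N_{\cM}$ with $N_{\cM}=\{cI\}$ is equivalent to $P_T\Omega(\rho)=P_T\nabla f(\rho)$. By \eqref{eq:weak-form-distance-2} this is in turn equivalent to
\begin{align*}
\tr(\rho\, D\Omega(\rho)[X])=0 \quad \text{for all traceless Hermitian } X .
\end{align*}
It therefore suffices to exhibit a single full-rank qubit state $\rho$ and a single traceless direction $X$ with $\tr(\rho D\Omega(\rho)[X])\neq 0$. Equivalently, $Df(\rho)[X]\neq\tr(X\Omega(\rho))$. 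The plan is to work with the explicit Choi matrices of $\cD_{deph}$ (with $q=0.4$) and $\cD_p$ (with $p=0.05$) and to restrict to the diagonal family
\begin{align*}
\rho_s=\mathrm{diag}(s,1-s), \qquad 0<s<1,
\end{align*}
with direction $X=\sigma_z$. Along this family the problem becomes essentially commutative and computable in closed form.

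The first step is to compute $\sqrt{\rho_s}\,\Gamma^{\cN_i}\sqrt{\rho_s}$ in the computational basis of $AB$. The dephasing Choi matrix is
\begin{align*}
\Gamma^{deph}=\ketbra{00}{00}+\ketbra{11}{11}+(2q-1)\bigl(\ketbra{00}{11}+\ketbra{11}{00}\bigr).
\end{align*}
The depolarizing Choi matrix is $(1-p)\Gamma^{id}+\tfrac{p}{2}I_{AB}$. After conjugating by $\sqrt{\rho_s}\otimes I$, both operators are block diagonal. Each consists of a $2\times2$ block on $\mathrm{span}\{\ket{00},\ket{11}\}$ and diagonal entries on $\ket{01},\ket{10}$; the dephasing operator vanishes on the latter. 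Both $2\times 2$ blocks have off-diagonal entry proportional to $\sqrt{s(1-s)}$. The logarithms can then be taken in closed form, though they do not commute with each other.

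With these in hand, $f(\rho_s)$ is an explicit scalar function $g(s)$. The quantity $Df(\rho_s)[\sigma_z]$ equals $g'(s)/2$, up to the parametrization convention. Separately, I would evaluate $\tr(\sigma_z\Omega(\rho_s))$ from definition \eqref{eq: definition of Omega}, using the same block matrices together with $\rho_s^{-1/2}$ and the partial trace. The task is then to show that the difference $g'(s)/2-\tr(\sigma_z\Omega(\rho_s))$, which is the product-rule remainder $\tr(\rho_s D\Omega(\rho_s)[\sigma_z])$, is nonzero for some $s$.

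\textbf{Main obstacle.} A symmetry may make this remainder vanish identically along the diagonal family. For instance, the remainder might be a derivative of a term that is constant in $s$ there, such as the one coming from $\tr_B$ of a commuting part. If that happens, I would switch to a non-diagonal direction, e.g.\ $X=\sigma_x$ at a non-maximally mixed $\rho$, or to a generic full-rank $\rho$, and expand to first order. Failing a clean closed form, it suffices to evaluate both sides at one specific point, say $s=1/4$, with the given $p,q$. There I would exhibit $\tr(\rho D\Omega(\rho)[X])\neq0$ by exact evaluation of the $2\times 2$ matrix logarithms (eigenvalues are explicit quadratic roots), and cross-check by finite differences. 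A single nonzero value establishes the claim, since the condition is then violated at that $\rho\in\cS(\cH_A)$.
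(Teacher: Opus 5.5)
Your plan is the paper's argument in essence. The paper proves the statement with one explicit counterexample at the diagonal state $\rho^{(0)}=\mathrm{diag}(3/8,5/8)$. There it evaluates $\Omega(\rho^{(0)})$ and $\nabla f(\rho^{(0)})$ numerically and finds $\Omega(\rho^{(0)})-\nabla f(\rho^{(0)})=\mathrm{diag}(-0.8000,-1.1782)$, which is not a multiple of $I$. So your primary route (diagonal family, direction $\sigma_z$) succeeds: the symmetry cancellation you feared does not occur, and the fallback is unnecessary.

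The only difference is computational. The paper builds the full gradient from its appendix formula, which uses the square-root reparametrization, the Fr\'echet derivative of the logarithm, and a Sylvester equation. You need only the single directional derivative $Df(\rho_s)[\sigma_z]$, obtained by differentiating the explicit scalar $g(s)=f(\rho_s)$ assembled from $2\times 2$ blocks. That is legitimate, because one traceless witness direction suffices, and it is lighter. It also avoids assuming differentiability of $\Omega$ if you compare $Df(\rho_s)[\sigma_z]$ with $\tr(\sigma_z\Omega(\rho_s))$ directly, rather than going through $\tr(\rho D\Omega(\rho)[X])$.

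One constant needs fixing. With $\rho_s=\mathrm{diag}(s,1-s)$ you have $\frac{d}{ds}\rho_s=\sigma_z$, so $Df(\rho_s)[\sigma_z]=g'(s)$, not $g'(s)/2$. Since the whole argument is that a difference is nonzero, a wrong factor can create or hide the discrepancy. With your trace-$2$ Choi matrices at $s=3/8$, one gets $g'(3/8)\approx-1.609$ and $\tr(\sigma_z\Omega(\rho_{3/8}))\approx-0.853$, so the remainder is about $-0.756$. With the factor $1/2$ you would instead get about $+0.048$: nonzero here, but only by accident and with the wrong value.

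The paper's reported entries are exactly half of these values. This is consistent with Choi matrices normalized to unit trace, which rescales $f$, $\Omega$, and $\nabla f$ by $1/2$ and does not affect the conclusion.
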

We provide a concrete counterexample showing that $\Omega(\rho) - \nabla f(\rho) \notin N_{\cM}$ in the following numerical subsection. The detailed calculation of $\nabla f(\rho)$ can be found in Appendix~\ref{appendix: calculation of derivative}.

Combining Theorem~\ref{theorem: equivalence of each step} and Theorem~\ref{theorem: no equivalence without constraint}, we know that $\cT_{AB}(\rho) = \cT_{MD}(\rho)$ does not always hold for $\rho\in \cS(\cH_A)$. Therefore, the AB algorithm can behave differently from the MD algorithm even when they start from the same initial point.

We next consider the scenario with a linear constraint. For convenience in the theoretical analysis, we assume that only one linear constraint is imposed and that the Hermitian operator $H$ is chosen to be a Pauli matrix; i.e., the linear constraint is $\tr\rho H = E$, where $H\in \{\sigma_x,\sigma_y,\sigma_z\}$.
Thus, in this case, the mixture family is defined as $\cM = \{\rho_A\in\cS(\cH_A): \tr\rho_A H = E\}$,
 and the normal space is $N_{\cM} = \text{span}\{I,H\}$.
\begin{proposition}\label{proposition: global minimum with constraint}
     $\hat{\rho} = \frac12(I+EH)$ is a global minimizer when the feasible set is $\cM = \{\rho_A\in\cS(\cH_A): \tr\rho_A H = E\}$ and $H\in \{\sigma_x,\sigma_y,\sigma_z\} $.
\end{proposition}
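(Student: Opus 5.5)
Since $f$ is convex on $\cS(\cH_A)$ and $\hat\rho=\frac12(I+EH)$ lies in $\cM$, I will show $\hat\rho$ is a global minimizer by a symmetry (twirling) argument rather than by computing $\nabla f$ explicitly. Assume $|E|<1$, so that $\hat\rho\in\cM_{++}$; the case $|E|=1$ follows by continuity or by noting that $\cM$ is then the single point $\hat\rho$. Fix $H=\sigma_z$ for concreteness; the cases $\sigma_x,\sigma_y$ are handled the same way with the roles of the Pauli matrices permuted. The choice of which other Pauli matrices to use depends on $H$, and I treat that below.

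The key step is to find a group $G$ of unitaries with three properties. First, $f(U\rho U^\dagger)=f(\rho)$ for all $U\in G$. Second, each $U\in G$ preserves $\cM$. Third, the twirl $\rho\mapsto\frac1{|G|}\sum_{U\in G}U\rho U^\dagger$ maps every $\rho\in\cM$ to $\hat\rho$.

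For the invariance of $f$, both channels are Pauli channels, so each is covariant under Pauli conjugation: $\cN(P\rho P)=P\cN(P\rho P)P$ pattern, i.e.\ $\cN\circ\mathrm{Ad}_P=\mathrm{Ad}_P\circ\cN$. Through the definition \eqref{eq: def1 of qre of channel}, with input $\rho_A$ purified on $R$, this gives $f(P\rho_AP)=f(\rho_A)$ by unitary invariance of the relative entropy. Equivalently, in \eqref{eq: definition of f} one can use $(P\otimes\bar P)\Gamma(P\otimes\bar P)^\dagger=\Gamma$. Conjugation by $\sigma_z$ maps $\sigma_x\mapsto-\sigma_x$, $\sigma_y\mapsto-\sigma_y$ and $\sigma_z\mapsto\sigma_z$. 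So $G=\{I,\sigma_z\}$ preserves $\tr\rho\sigma_z=E$, and averaging gives $\frac12(\rho+\sigma_z\rho\sigma_z)=\frac12(I+E\sigma_z)=\hat\rho$.

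The same works for $H=\sigma_x$ or $H=\sigma_y$ by conjugating with $H$ itself. The only subtlety is whether the dephasing channel is covariant under $\sigma_x$ conjugation. It is, because Pauli matrices commute or anticommute, so $\sigma_z\sigma_x\rho\sigma_x\sigma_z=\sigma_x\sigma_z\rho\sigma_z\sigma_x$.

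Then for any $\rho\in\cM$, Jensen's inequality for the convex $f$ gives
\[
f(\hat\rho)=f\Bigl(\tfrac12\rho+\tfrac12H\rho H\Bigr)\le\tfrac12f(\rho)+\tfrac12f(H\rho H)=f(\rho).
\]
This proves the proposition. Continuity of $f$ at the boundary is not needed here, since we only evaluate $f$ at states.

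The main obstacle is carefully justifying the unitary invariance $f(U\rho U^\dagger)=f(\rho)$ from the Choi-matrix form \eqref{eq: definition of f}. I would do this through the channel formulation \eqref{eq: def1 of qre of channel}: the input state $(U\otimes I)\psi(U\otimes I)^\dagger$ satisfies $\cN_i\otimes\mathrm{id}$ applied to it equals $(U\otimes I)$ conjugation of the original output for both channels. Relative entropy is unchanged by that common unitary, and the reduced state on $A$ becomes $U\rho_AU^\dagger$.

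As an alternative check, one could verify the first-order condition $\nabla f(\hat\rho)\in\mathrm{span}\{I,H\}$, in the spirit of Lemma~\ref{lemma: fixed point of MD}, using the same symmetry: $\nabla f(\hat\rho)$ commutes with $H$-conjugation, so it lies in $\mathrm{span}\{I,H\}$.
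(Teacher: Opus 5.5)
Your proposal is correct and is essentially the paper's proof. Both arguments twirl by $\{I,H\}$, use Pauli covariance of both channels to get $f(H\rho H)=f(\rho)$, and conclude $f(\hat\rho)=f\bigl(\tfrac12\rho+\tfrac12 H\rho H\bigr)\le f(\rho)$ by convexity. The differences are cosmetic: you fix $H=\sigma_z$ rather than $\sigma_x$, you justify the invariance through the purified-channel formulation rather than the Choi-matrix covariance relation, and your first covariance formula contains a typo that should read $\cN(P\rho P)=P\cN(\rho)P$.
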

 
\begin{theorem}\label{theorem: convergence with constraint}
    Let \(E\in(-1,1)\). The state $\hat{\rho} = \frac12(I+EH)$ is the unique full-rank fixed point of the AB algorithm in the above constrained case.
\end{theorem}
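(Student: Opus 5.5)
By Lemma~\ref{lemma: fixed point of AB}, a state $\rho\in\cM_{++}$ is an AB fixed point if and only if $\Omega(\rho)\in N_{\cM}=\operatorname{span}\{I,H\}$. The plan is to compute $\Omega(\rho)$ explicitly for a general full-rank qubit state and then solve this membership condition, intersected with the constraint $\tr\rho H=E$. Write $\rho=\frac12(I+\vec r\cdot\vec\sigma)$ with $|\vec r|<1$.

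\textbf{Step 1: Choi matrices in the Bell basis.} Both channels are Pauli channels, so their Choi matrices are diagonal in the Bell basis $\{\ket{\Phi_\mu}\}$, where $\ket{\Phi_\mu}=(I\otimes\sigma_\mu)\ket{\Phi}$ and $\sigma_0=I$:
\begin{align*}
\Gamma^{\cN_i}=\sum_{\mu=0}^3 2p^{(i)}_\mu\ketbra{\Phi_\mu}{\Phi_\mu}.
\end{align*}
The weights are $p^{(1)}=(q,0,0,1-q)$ for the dephasing channel and $p^{(2)}=(1-\tfrac{3p}{4},\tfrac p4,\tfrac p4,\tfrac p4)$ for the depolarizing channel. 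The input state enters through
\begin{align*}
\sqrt{\rho}\,\ket{\Phi_\mu}=(\sqrt{\rho}\otimes\sigma_\mu)\ket{\Phi}=(\sigma_\mu\sqrt{\rho}^{\,T}\otimes I)\ket{\Phi}\ \text{(up to transpose conventions)}.
\end{align*}
Since $\Gamma^{\cN_1}$ has rank two, the first argument $\sqrt\rho\,\Gamma^{\cN_1}\sqrt\rho$ is supported on a two-dimensional span, spanned by $\sqrt\rho\ket{\Phi_0}$ and $\sqrt\rho\ket{\Phi_3}$.

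\textbf{Step 2: the symmetric point.} For $\hat\rho=\frac12(I+EH)$, use the symmetry group generated by conjugation by $H$, which fixes $\hat\rho$ and commutes with both channels because Pauli channels are covariant under Paulis. This forces $\Omega(\hat\rho)$ to commute with $H$, so it lies in $\operatorname{span}\{I,H\}$. It therefore suffices to check that $\Omega(\hat\rho)$ has no component along the other two Paulis, which follows from this symmetry, possibly combined with a second symmetry such as complex conjugation or a Pauli anticommuting with $H$ that maps $\hat\rho$ to its counterpart. If the symmetry argument is insufficient for a particular $H$ (for instance $H=\sigma_x$, which does not commute with the dephasing direction $\sigma_z$), I will fall back on direct computation. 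This gives existence of the fixed point.

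\textbf{Step 3: uniqueness (main obstacle).} Any other full-rank fixed point $\rho$ must satisfy $\Omega(\rho)\in\operatorname{span}\{I,H\}$, i.e.\ the two coefficients $\tr(\Omega(\rho)\sigma_a)$ must vanish for the two Paulis $\sigma_a\neq H$. These are two real equations in the two free Bloch coordinates once $r_H=E$ is fixed. My approach is as follows.

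First, reduce $\Omega$ using the unitary-covariance identity $\Omega(U\rho U^\dagger)=U\Omega(\rho)U^\dagger$ for Paulis $U$. This yields the parity relation: the coefficient $\tr(\Omega(\rho)\sigma_a)$ is odd under $r_a\mapsto-r_a$. Hence $r_a=0$ solves the corresponding equation.

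Second, show there are no other roots, via a monotonicity argument: the coefficient along $\sigma_a$ has sign equal to $\operatorname{sign}(r_a)$. I will try to deduce this from the representation
\begin{align*}
\tr(\Omega(\rho)X)=Df(\rho)[X]-\tr(\rho\,D\Omega(\rho)[X]),
\end{align*}
but more likely I will compute the $4\times4$ logarithms explicitly. The second argument $\sqrt\rho\,\Gamma^{\cN_2}\sqrt\rho=\frac p2\,\rho\otimes I+(2-2p)\sqrt\rho\,\ketbra{\Phi}{\Phi}\sqrt\rho$ is a rank-one perturbation of a simple operator, and the first has rank two. This makes the partial trace in $\Omega_1$ expressible in closed form in terms of eigenvalues of $\rho$ and $2\times2$ Gram matrices. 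With that closed form, the sign analysis, or an injectivity argument, would complete uniqueness.

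The unconstrained Theorem~\ref{theorem: convergence without constraint} is the special case $N_{\cM}=\mathbb{R}I$. There I would reuse the same closed form, now with all three coordinates free.
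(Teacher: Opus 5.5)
Your existence step is essentially the paper's. Both Pauli channels are covariant under conjugation by every Pauli, so your worry about $H=\sigma_x$ is unfounded: $\Omega(\hat\rho)$ commutes with $H$ and therefore lies in $\operatorname{span}\{I,H\}=N_{\cM}$.

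Your uniqueness reduction is also sound in principle. If the $\sigma_a$-component of $\Omega(\rho)$ has the sign of $r_a$ for each free direction $\sigma_a\neq H$, then $r_a=0$ is forced. Given the Bloch form below, this is equivalent to the paper's use of Lemma~\ref{lemma: scalar fixed-point certificate}; for $H=X$ it reads $\Tr[(\rho-\hat\rho)\Omega(\rho)]=b^2\eta_\perp(\rho)+c^2\eta_z(\rho)$.

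One supporting step is wrong as stated: the parity claim does not follow from Pauli covariance alone. Write $w_a(\vec r)$ for the $\sigma_a$-component of $\Omega(\rho)$. Conjugation by a Pauli flips two Bloch coordinates at once, so it only yields relations like $w_a(-r_a,r_b,-r_c)=-w_a(r_a,r_b,r_c)$. You need an extra single-coordinate reflection. One option is complex conjugation: both Choi matrices are real in the computational basis, so $\Omega(\bar\rho)=\overline{\Omega(\rho)}$, which flips only the $Y$-coordinate. The other is the covariance under rotations about the $Z$-axis that the paper uses. Either gives the form $\Omega(\rho)=w_0I+\eta_\perp(aX+bY)+\eta_z cZ$.

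The decisive gap is that your sign claim is exactly the statement $\eta_\perp>0$ and $\eta_z>0$ on the whole open Bloch ball. You leave this as an intention (``the sign analysis, or an injectivity argument, would complete uniqueness''), but it is the entire analytic content of the theorem. $\eta_\perp$ is a sum of positive terms, but $\eta_z$ is not. The $\log_+A_{\cN_1}$ part contributes $-\frac{2q(1-q)}{\Delta_u}\log\frac{1+\Delta_u}{1-\Delta_u}$, which competes with $\frac{2(1-q)m_p}{R}\operatorname{artanh}\frac{R}{A_p}$ from $\log A_{\cN_2}$.

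The paper first derives these closed forms, using the radial identity of Lemma~\ref{lemma: radial identity for T} and the block decomposition of $A_{\cN_2}$. It then proves domination from three facts: $h(x)=\operatorname{artanh}(x)/x$ is increasing, $R/A_p>\Delta_1\geq\Delta_u$, and $m_p/A_p=77/78>4/5=2q$. The last comparison is where $p=1/20$ and $q=2/5$ enter. By the same closed form with, e.g., $q$ near $1$, the negative term wins near $I/2$. So the sign property is not a structural consequence of the setup and cannot come from a parameter-free monotonicity or injectivity principle.

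Your fallback identity $\tr(\Omega(\rho)X)=Df(\rho)[X]-\tr(\rho\,D\Omega(\rho)[X])$ carries no sign information either. Note that for $H=\sigma_z$ only $\eta_\perp>0$ is needed, but for $H=\sigma_x,\sigma_y$ the nontrivial inequality $\eta_z>0$ cannot be avoided.
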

Therefore, combining Proposition~\ref{proposition: global minimum with constraint} and Theorem~\ref{theorem: convergence with constraint}, we conclude that $\hat{\rho}= \frac12(I+EH)$ is both a global minimizer and the unique full-rank fixed point of the AB algorithm on the constrained feasible set. 
Consequently, if an AB trajectory converges to a full-rank limit, then continuity of the AB map implies that the limit is an AB fixed point, and Theorem~\ref{theorem: convergence with constraint} forces it to be $\hat{\rho}$.
As in the unconstrained case, we can also prove that the pathwise equivalence between the AB algorithm and the MD algorithm does not hold in general.
\begin{theorem}\label{theorem: no equivalence with constraint}
    $\Omega(\rho) - \nabla f(\rho)\in N_{\cM} $ does not hold universally over the feasible set $\cM$, where $N_{\cM} = \text{span}\{I,H\}$.
\end{theorem}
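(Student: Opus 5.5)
We prove Theorem~\ref{theorem: no equivalence with constraint} by exhibiting one explicit state $\rho_0\in\cM_{++}$ for which $\Omega(\rho_0)-\nabla f(\rho_0)\notin\operatorname{span}\{I,H\}$. Since $N_{\cM}$ is two-dimensional inside the four-dimensional space of $2\times 2$ Hermitian matrices, it suffices to show that $Y:=\Omega(\rho_0)-\nabla f(\rho_0)$ has a nonzero Hilbert--Schmidt component along some Pauli matrix $K\in\{\sigma_x,\sigma_y,\sigma_z\}\setminus\{H\}$. Equivalently, using the weak form \eqref{eq:weak-form-distance-1} with the tangent direction $X=K\in T_{\cM}$ (note $\tr K=0$ and $\tr(KH)=0$), it suffices to show
\begin{align}
\tr\bigl(\rho_0\, D\Omega(\rho_0)[K]\bigr)=Df(\rho_0)[K]-\tr\bigl(K\Omega(\rho_0)\bigr)\neq 0 .
\end{align}
This form is convenient because it requires only directional derivatives and avoids forming the full gradient.

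To choose $\rho_0$, we reuse the construction behind Theorem~\ref{theorem: no equivalence without constraint}. A natural choice is a state $\rho_0=\frac12(I+EH+rK)$ with $E^2+r^2<1$ and $r\neq 0$, so that $\rho_0\in\cM_{++}$. We take $r\neq0$ deliberately: at the symmetric point $r=0$ the covariance of the Pauli channels forces the off-axis components to vanish. Both channels are covariant under conjugation by the Pauli matrices. Using this symmetry, we will reduce $\Omega(\rho_0)$ and $\nabla f(\rho_0)$ to the Pauli components along $K$. Specifically, conjugating by the Pauli matrix that anticommutes with the third axis fixes the $H$ and $K$ components and flips the remaining one, so the coefficient along the remaining Pauli direction vanishes. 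Hence $Y=y_0 I+y_H H+y_K K$, and we only need $y_K\neq0$.

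We then compute $y_K=\frac12\tr(KY)$ explicitly for the concrete parameters $q=0.4$ and $p=0.05$. The term $\tr(K\Omega(\rho_0))$ comes directly from \eqref{eq: definition of Omega}. The term $Df(\rho_0)[K]$ comes from the derivative formula in Appendix~\ref{appendix: calculation of derivative}, which uses Fréchet derivatives of $\rho\mapsto\rho^{1/2}$ (via the Sylvester equation) and of the matrix logarithm (via the divided-difference/integral representation). Both quantities can be evaluated in closed form or to rigorous precision for $2\times2$ blocks.

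The main obstacle is this last step: certifying $y_K\neq0$. The Fréchet derivative of $\log$ composed with $\rho^{1/2}$ produces messy expressions, so a clean symbolic sign argument may be hard. I would first try the specific point $E=0$ with $H$ one of the Paulis, which makes $\rho_0$ a counterexample for all three choices of $H$ simultaneously up to relabeling. If $y_K$ stays algebraically intractable there, I would fall back on a verified numerical evaluation: for instance, compute the finite-difference bracket $l\le Df(\rho_0)[K]\le u$ from Theorem~\ref{theorem: finite difference certificate}, which is valid because $f$ is convex. This yields an interval for $y_K$ that excludes zero, giving a rigorous counterexample that depends only on objective evaluations.
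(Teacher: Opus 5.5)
Your plan is essentially the paper's argument: the paper also exhibits a single full-rank feasible state ($H=\sigma_x$, $E=0.4$, the displayed $\rho^{(0)}$) and evaluates $\Omega(\rho^{(0)})-\nabla f(\rho^{(0)})$ numerically via Appendix~\ref{appendix: calculation of derivative}, getting $\|P_T(\Omega(\rho^{(0)})-\nabla f(\rho^{(0)}))\|_2=0.2714\neq0$. Your convexity bracket on $Df(\rho_0)[K]$ is a valid gradient-free variant, and for $E=0$, $H\in\{\sigma_x,\sigma_y\}$, $K=\sigma_z$ the paper's data $\Omega-\nabla f=\mathrm{diag}(-0.8000,-1.1782)$ at $\frac12(I-\frac14\sigma_z)$ already settles the question.

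Drop the symmetry reduction. It is false as stated, because Pauli conjugations act on the Bloch ball as rotations, so none fixes two axes while flipping the third. It is also unnecessary, since a single $K\in T_{\cM}$ with $\tr(KY)\neq0$ already gives $Y\notin N_{\cM}$. Finally, $\sigma_z$ is the distinguished dephasing axis, so ``up to relabeling'' does not cover $H=\sigma_z$, which needs its own check with $K\in\{\sigma_x,\sigma_y\}$.
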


The theoretical results identify the unique full-rank AB fixed point with the global minimizer, while the following numerical experiments show convergence to this state for the tested initializations.

\subsection{Numerical Comparison of AB and MD}
In the numerical experiments, the parameter of the dephasing channel $\cD_{deph}$ is fixed at $q = 0.4$, and the depolarizing parameter is chosen as $p = 0.05$. In our example, we choose the step size $\gamma= 1$.
The numerical verification of condition (a1) for the reported trajectories is summarized in Remark~\ref{remark:numerical-a1}.
Next, we consider the case without linear constraints. The initial point is chosen as the state 
\begin{equation}
\rho^{(0)} = \begin{pmatrix}
    \frac{3}{8}&0\\
    0&\frac{5}{8}
\end{pmatrix}.
\end{equation}
We then apply the AB algorithm and mirror descent with the initial point $\rho^{(0)}$, and the result is shown in Fig.~\ref{fig: exp without constraint}. 
\begin{figure}[htbp]
    \centering
    \includegraphics[width = 0.9\columnwidth]{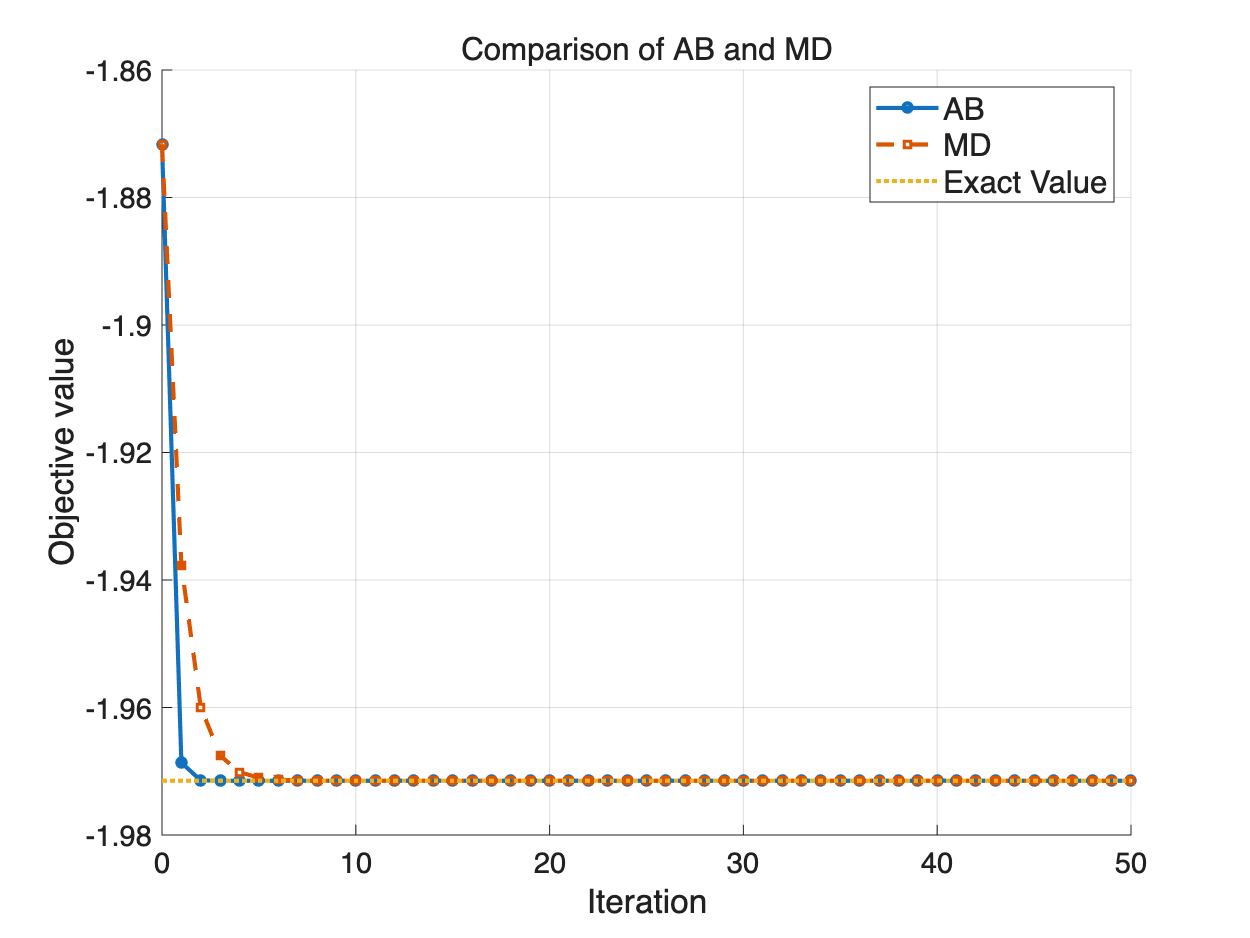}
    \caption{\textbf{Comparison of AB and MD iterative behavior in the unconstrained case}. The vertical axis shows the value of $f(\rho)$. The horizontal axis shows the number of iterations. The blue line represents the iterative behavior of the AB algorithm. The orange line represents the iterative behavior of the MD algorithm. The yellow line represents the theoretical value of $-D(\cD_{deph}\|\cD_p)$.}
    \label{fig: exp without constraint}
\end{figure}
Fig.~\ref{fig: exp without constraint} shows that the AB and MD iterates numerically approach the same global minimizer while following different trajectories and converging at different rates.
At the point $\rho^{(0)}$, $\nabla f(\rho^{(0)})$ and $\Omega(\rho^{(0)})$ are calculated as follows:
\begin{align}
    \nabla f(\rho^{(0)}) &= \begin{pmatrix}
        -0.4023 &0\\
        0& 0.4023
    \end{pmatrix},\\
    \Omega(\rho^{(0)}) &= \begin{pmatrix}
        -1.2023 & 0\\
        0& -0.7759
    \end{pmatrix}.
\end{align}
Thus, we have
\begin{align}\label{eq: nonequivalence for unconstrain-1}
    &\Omega(\rho^{(0)}) - \nabla f(\rho^{(0)}) = \begin{pmatrix}
        -0.8000&0\\
        0& -1.1782
    \end{pmatrix},
\end{align}
and
\begin{align}\label{eq: nonequivalence for unconstrain-2}
    \|P_T(\Omega(\rho^{(0)}) - \nabla f(\rho^{(0)}))\|_2 = 0.2674.
\end{align}
Since we consider the problem without constraints, any element belonging to $N_{\cM}$ must be proportional to the identity. Therefore, from \eqref{eq: nonequivalence for unconstrain-1} and \eqref{eq: nonequivalence for unconstrain-2}, we can get that $\Omega(\rho^{(0)}) - \nabla f(\rho^{(0)}) \notin N_{\cM}$. By Theorem~\ref{theorem: equivalence of each step}, this result indicates that $\cT_{AB}(\rho^{(0)})\neq \cT_{MD}(\rho^{(0)})$, which agrees with the numerical evidence shown in Fig.~\ref{fig: exp without constraint}.

Then we show another example of the above problem with linear constraints. Here the linear constraint is described by $\tr(\rho \sigma_x) = 0.4$. The feasible set is $\cM = \{\rho\in \cS(\cH_A): \tr(\rho \sigma_x) = 0.4\}$. To ensure that the algorithm starts from a point in $\cM$, the initial point is chosen as the state
\begin{align}
    \rho^{(0)} = \begin{pmatrix}
        \frac38 & 0.2+0.2i\\
        0.2-0.2i &\frac58
    \end{pmatrix}.
\end{align}
It is easy to verify that $\rho^{(0)}$ satisfies the condition $\tr(\rho^{(0)}\sigma_x) = 0.4$.
Starting from $\rho^{(0)}$, we compare the iterative behavior of the AB algorithm and the MD algorithm; the result is shown in Fig.~\ref{fig: exp with constraint}. Similar to the unconstrained case, both numerical trajectories of the AB algorithm and the MD algorithm approach the known global minimizer while their paths are different. We then calculate $\nabla f(\rho^{(0)})$ and $\Omega(\rho^{(0)})$.
\begin{align}
    \nabla f(\rho^{(0)}) &= \begin{pmatrix}
        -0.4345 &0.1496+0.1495i\\
        0.1496-0.1495i& 0.4345
    \end{pmatrix},\\
    \Omega(\rho^{(0)}) &= \begin{pmatrix}
        -1.2998& 0.1452+0.1452i\\
        0.1452-0.1452i& -0.8144
    \end{pmatrix}.
\end{align}
Thus
\begin{align}\label{eq: nonequivalence for constrain-1}
    &\Omega(\rho^{(0)}) - \nabla f(\rho^{(0)}) \notag \\
    = &\begin{pmatrix}
        -0.8652&-0.0042-0.0042i\\
        -0.0042+0.0042i& -1.2489
    \end{pmatrix},
\end{align}
and
\begin{align}\label{eq: nonequivalence for constrain-2}
    \|P_T(\Omega(\rho^{(0)}) - \nabla f(\rho^{(0)}))\|_2 = 0.2714.
\end{align}
From \eqref{eq: nonequivalence for constrain-1} and \eqref{eq: nonequivalence for constrain-2}, we can see that $\Omega(\rho^{(0)}) - \nabla f(\rho^{(0)})\notin N_{\cM} = \text{span}\{I,\sigma_x\}$. According to Theorem~\ref{theorem: equivalence of each step}, we know that $\cT_{AB}(\rho^{(0)})\neq \cT_{MD}(\rho^{(0)})$, which agrees with the numerical evidence shown in Fig.~\ref{fig: exp with constraint}.

\begin{figure}[htbp]
    \centering
    \includegraphics[width = 0.9\columnwidth]{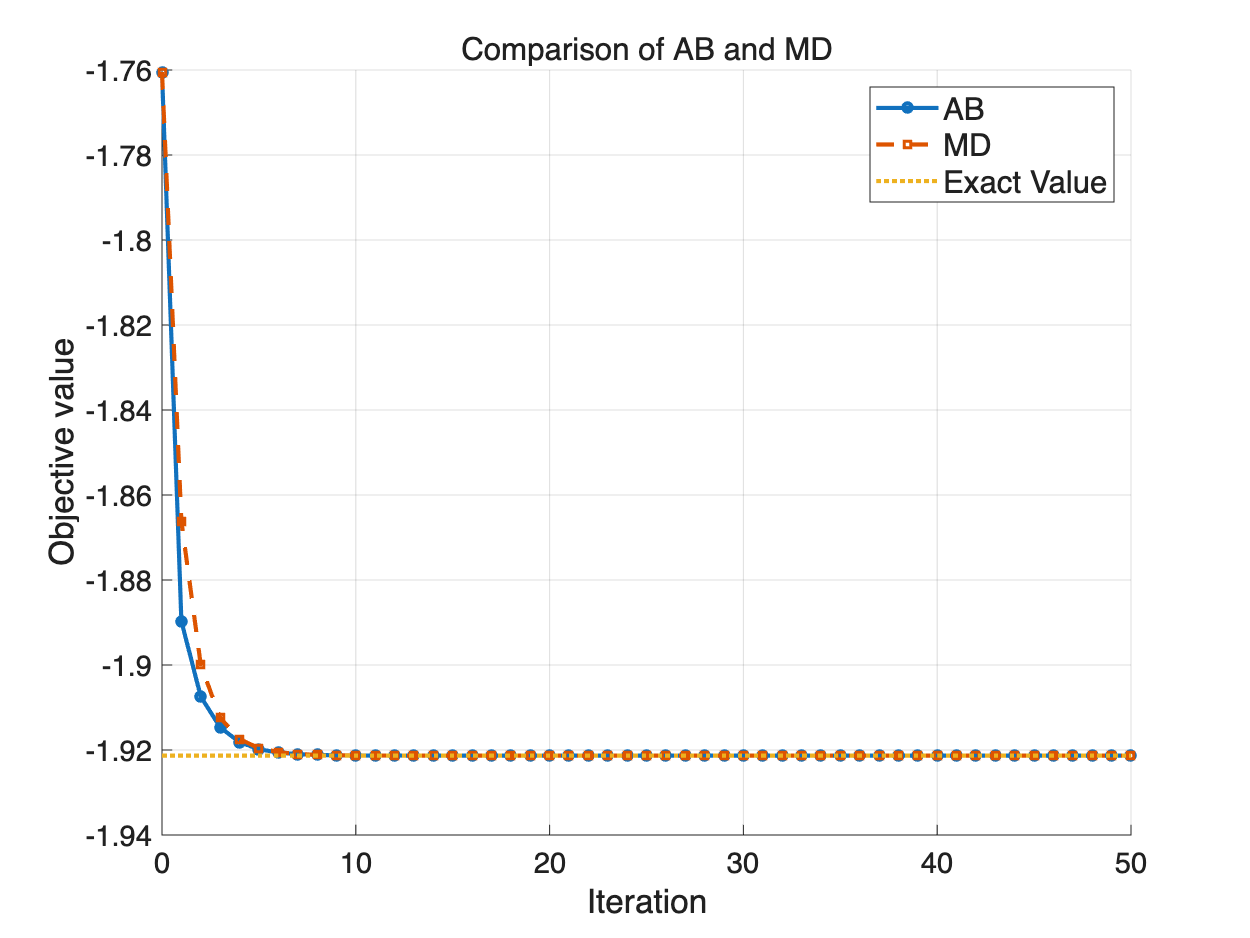}
    \caption{\textbf{Comparison of AB and MD iterative behavior in the constrained case}. The vertical axis shows the value of $f(\rho)$. The horizontal axis shows the number of iterations. The blue line represents the iterative behavior of the AB algorithm. The orange line represents the iterative behavior of the MD algorithm. The yellow line represents the theoretical value of $-D(\cD_{deph}\|\cD_p)$.}
    \label{fig: exp with constraint}
\end{figure}

\begin{remark}[Numerical verification of condition (a1)]
\label{remark:numerical-a1}
The choice $\gamma=1$ is consistent with the a posteriori
study in \cite{liu2026posteriori}, where the trajectory inequality corresponding to condition (a1) was numerically examined for the same dephasing--depolarizing channel family.

Since condition (a1) depends on the particular trajectory, we also evaluate it directly for the two trajectories reported above. For every iteration satisfying
\[
D(\rho^{(t+1)}\Vert\rho^{(t)})>10^{-10},
\]
we calculate
\begin{align}
r_t:=
\frac{
D_{\Omega}(\rho^{(t+1)}\Vert\rho^{(t)})
}{
D(\rho^{(t+1)}\Vert\rho^{(t)})
}.
\end{align}
The maximum ratios along the reported trajectories are
\begin{align}
\max_t r_t
&=-0.1669
&&\text{in the unconstrained case},\\
\max_t r_t
&=0.4356
&&\text{in the constrained case}.
\end{align}
Both values are below $\gamma=1$. Therefore, condition (a1) is satisfied along all numerically resolved iterations reported here.
We emphasize that $D_{\Omega}$ is not necessarily nonnegative. Therefore, a negative ratio reflects the fact that the stronger nonnegativity condition (a2) need not hold.
\end{remark}

\subsection{A Posteriori Certification}

The preceding experiments demonstrate that the AB and MD algorithms may follow different trajectories while approaching the same global solution. We now apply the a posteriori certificate developed in Section~\ref{sec: equivalence} to quantify the accuracy of the states returned by the AB iteration. The certificate uses only objective evaluations along feasible tangent directions and does not require the explicit construction of $\nabla f(\rho)$.

For each AB iterate $\rho^{(t)}$, let $\{E_a\}_{a=1}^{m}$ be a Hilbert--Schmidt orthonormal basis of $T_{\mathcal M}$. For a feasible step size $h_a>0$, define
\begin{align}
\ell_{a,t}&=\frac{f(\rho^{(t)})-f(\rho^{(t)}-h_aE_a)}{h_a},\\
u_{a,t}&=\frac{f(\rho^{(t)}+h_aE_a)-f(\rho^{(t)})}{h_a},
\end{align}
and
\begin{equation}
\widehat\epsilon_t=\left(\sum_{a=1}^{m}\max\{|\ell_{a,t}|,|u_{a,t}|\}^2\right)^{1/2}.
\end{equation}
By Theorem~\ref{theorem: finite difference certificate},
\begin{equation}
0\leq f(\rho^{(t)})-f_*\leq R_{\mathcal M}(\rho^{(t)})\widehat\epsilon_t \leq \sqrt{2}\widehat\epsilon_t.
\end{equation}
Because the exact minimizers are known analytically in the present examples, we may also compare the certified bound with the true objective gap. The exact solution is used only for validation and is not an input to the certificate.

For the unconstrained qubit problem, the tangent space is
\begin{align}
    T_{\cM} = \{X =X^{\dagger}: \tr X = 0\},
\end{align}
We use the Hilbert--Schmidt orthonormal basis
\begin{align}
    E_x = \frac{\sigma_x}{\sqrt{2}}, \quad E_y = \frac{\sigma_y}{\sqrt{2}},\quad E_z = \frac{\sigma_z}{\sqrt{2}}.
\end{align}
For the constrained problem where $\cM = \{\rho\in \cS(\cH_A): \tr(\rho \sigma_x) = 0.4\}$, we use
\begin{align}
    E_y = \frac{\sigma_y}{\sqrt{2}},\quad E_z = \frac{\sigma_z}{\sqrt{2}}.
\end{align}

For each state along the AB trajectory, we evaluate the true objective gap $f(\rho^{(t)})-f_*$ and the certified suboptimality bound $\sqrt{2}\widehat\epsilon_t$. For the unconstrained problem, we use $h\in\{10^{-6},10^{-8},10^{-10},10^{-12}\}$, whereas for the constrained problem, we use $h\in\{10^{-4},10^{-6},10^{-8},10^{-10}\}$. The results are shown in Fig.~\ref{fig: certification without constraint} and Fig.~\ref{fig: certification with constraint}, respectively. In both cases, all certified bounds remain above the true objective gap throughout the AB trajectory, numerically confirming the validity of the a posteriori certificate. Within the tested range, smaller values of $h$ yield tighter bounds and lower numerical floors. In the unconstrained case, the true gap reaches approximately machine precision, while in the constrained case it stabilizes near $10^{-10}$ because of numerical error. Nevertheless, the certificate continues to provide a valid and informative upper bound in both settings.

\begin{figure}[htbp]
    \centering
    \includegraphics[width = \columnwidth]{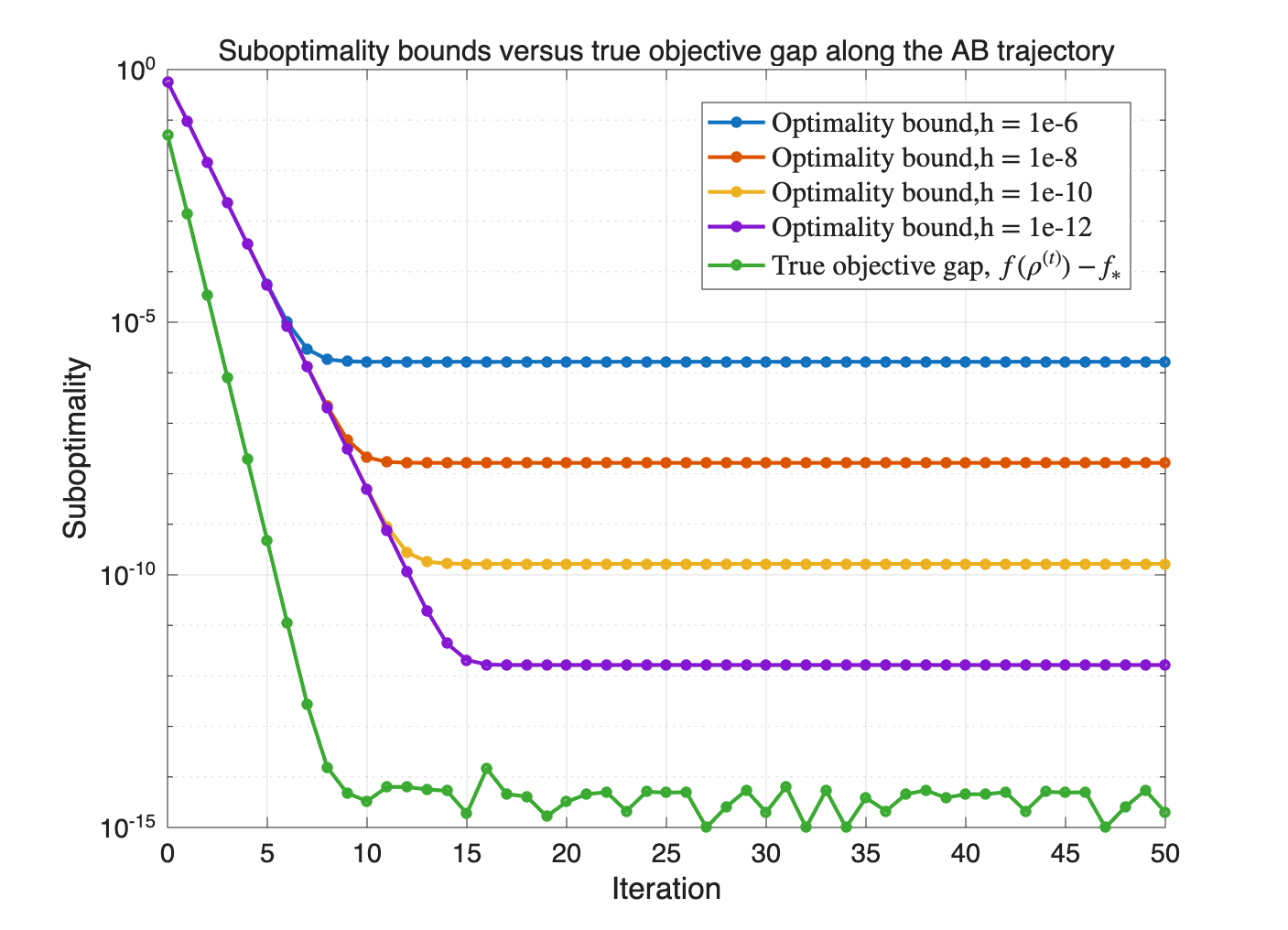}
    \caption{\textbf{Comparison of certified suboptimality bounds and the true objective gap along the AB trajectory in the unconstrained case}. The vertical axis shows the suboptimality on a logarithmic scale. The horizontal axis shows the number of iterations. The blue, orange, yellow, and purple lines represent the certified bounds obtained with $h=10^{-6},10^{-8},10^{-10}$, and $10^{-12}$, respectively. The green line represents the true objective gap $f(\rho^{(t)})-f_*$.}
    \label{fig: certification without constraint}
\end{figure}

\begin{figure}[htbp]
    \centering
    \includegraphics[width = 0.9\columnwidth]{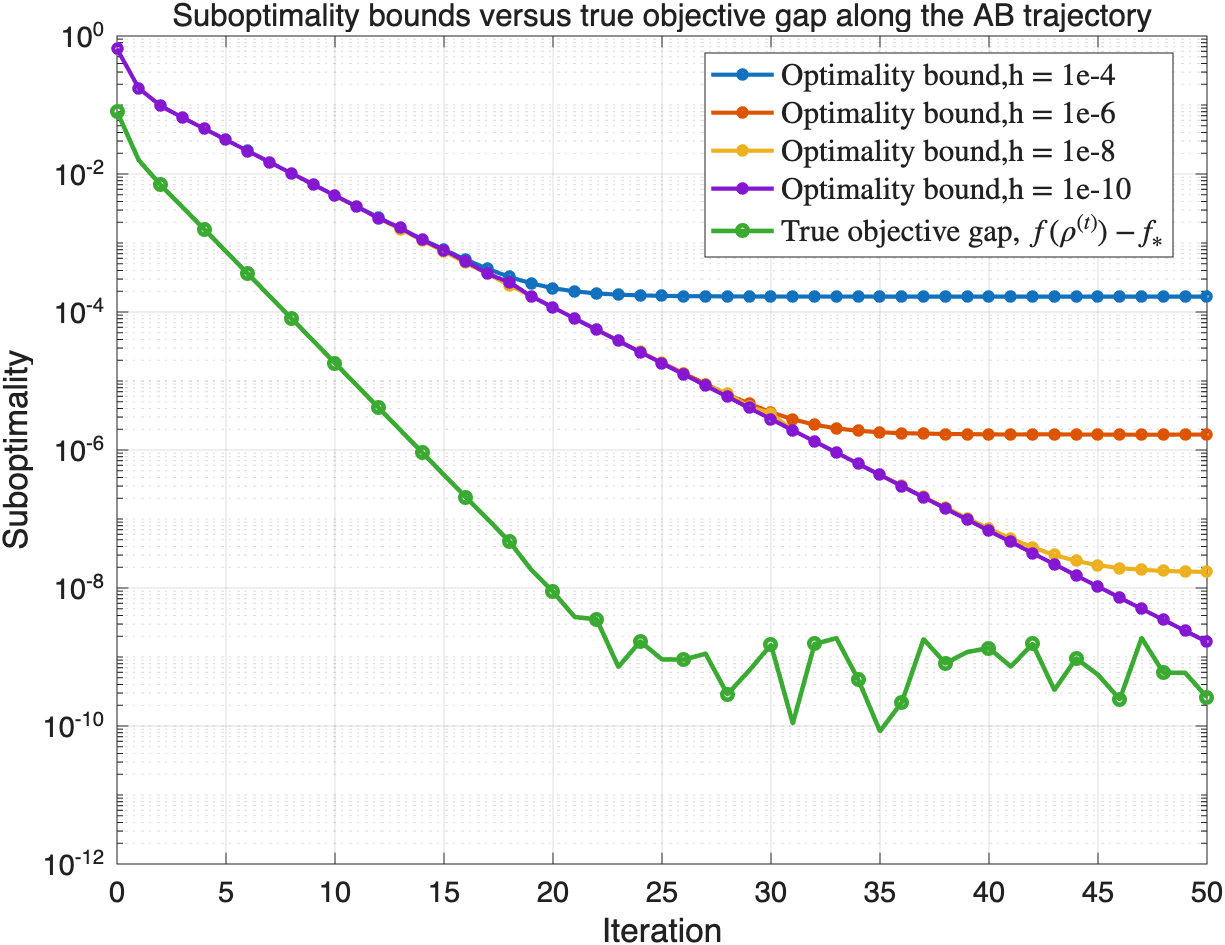}
    \caption{\textbf{Comparison of certified suboptimality bounds and the true objective gap along the AB trajectory in the constrained case}. The vertical axis shows the suboptimality on a logarithmic scale. The horizontal axis shows the number of iterations. The blue, orange, yellow, and purple lines represent the certified bounds obtained with $h=10^{-4},10^{-6},10^{-8}$, and $10^{-10}$, respectively. The green line represents the true objective gap $f(\rho^{(t)})-f_*$.}
    \label{fig: certification with constraint}
\end{figure}

These experiments illustrate the practical role of the a posteriori certificate. It both assesses whether an AB iterate is approximately optimal and provides a computable, conservative upper bound without using the unknown optimal value $f_*$. The tightness of the bound depends on the finite-difference step size $h$. Decreasing $h$ generally yields a tighter bound, whereas an excessively small value may amplify floating-point cancellation.

\section{Counterexample}\label{sec: counterexample}
We now present a numerical example showing that the convergence of the AB iteration, even with monotonic decrease of a convex objective, does not by itself guarantee global optimality. Consider the qubit amplitude damping channel
\begin{align}
    \cD_{ad}(\rho) = K_0\rho K_0^{\dagger} + K_1\rho K_1^{\dagger},
\end{align}
with Kraus operators
\begin{align}
    K_0 = \begin{pmatrix}
1&0\\
0&\sqrt{1-\eta}
\end{pmatrix},
\qquad
K_1=
\begin{pmatrix}
0&\sqrt{\eta}\\
0&0
\end{pmatrix}.
\end{align}
We minimize the negative channel relative entropy between the amplitude damping channel and the depolarizing channel $D(\cD_{ad}\|\cD_p)$ over all qubit density operators.

In this example, we choose $\eta = 0.5$, $p = 0.05$, and $\gamma = 1$ and initialize the AB algorithm and the MD algorithm at the maximally mixed state. Fig.~\ref{fig: counter example} shows the objective values produced by the two algorithms during the first 50 iterations. Both objective sequences decrease monotonically and become numerically stable. However, they converge to different limiting values. The AB iteration converges to a state whose objective value is strictly larger than the numerical result obtained by the MD algorithm. Thus, this example demonstrates that the monotonic convergence of the AB iteration is insufficient to establish global optimality.

\begin{figure}[htbp]
    \centering
    \includegraphics[width = \columnwidth]{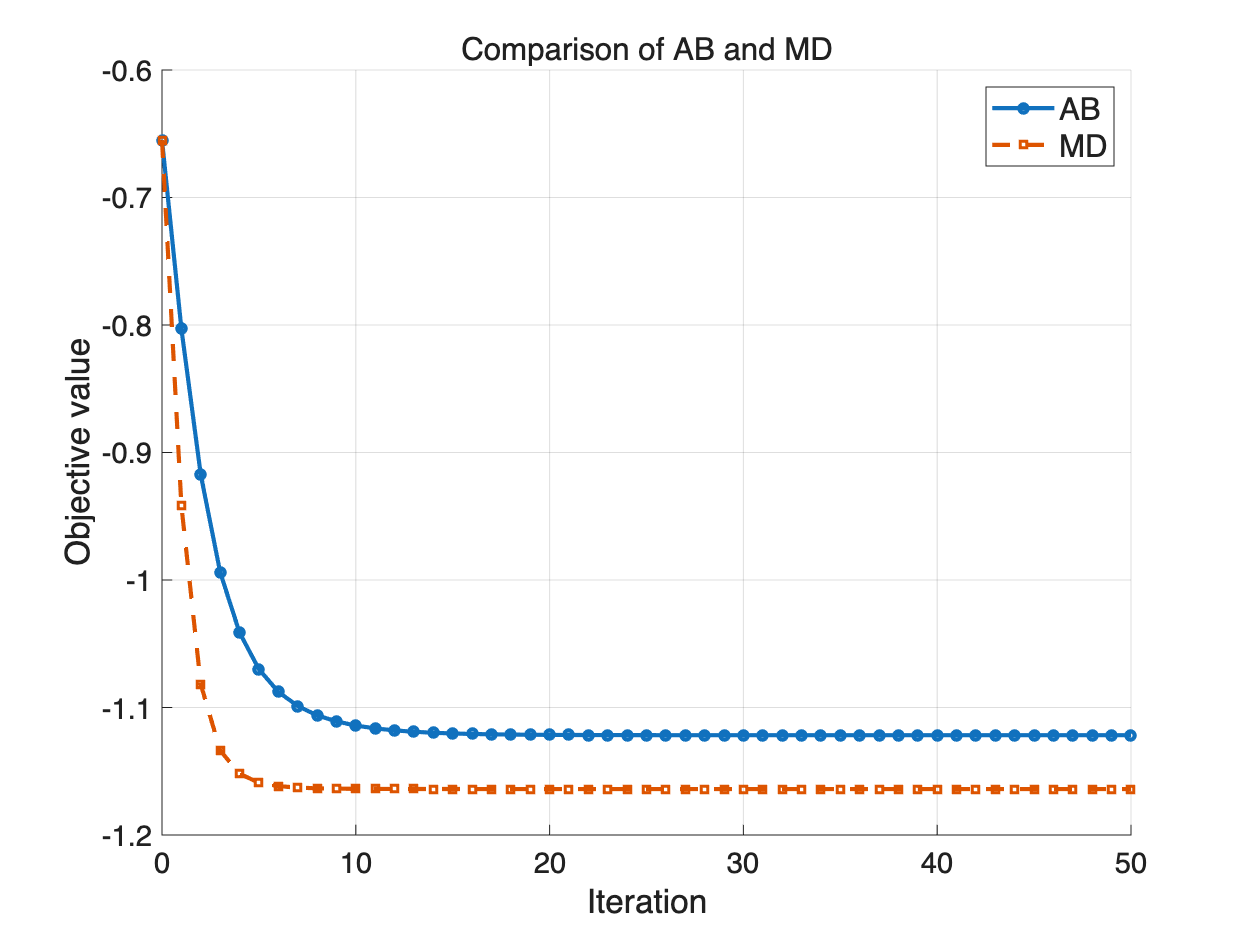}
    \caption{\textbf{Comparison of AB and MD iterative behavior associated with $D(\cD_{ad}\|\cD_p)$}. The vertical axis shows the value of $f(\rho)$. The horizontal axis shows the number of iterations. The blue line represents the iterative behavior of the AB algorithm. The orange line represents the iterative behavior of the MD algorithm.}
    \label{fig: counter example}
\end{figure}

Let $\rho^{(50)}$ denote the AB iterate after 50 iterations. At $\rho^{(50)}$, the fixed-point residual is
\begin{align}
\left\|
\mathcal T_{\mathrm{AB}}(\rho^{(50)})-\rho^{(50)}
\right\|_1 \approx 7.8\times 10^{-9},
\end{align}
and the objective-value change satisfies
\begin{align}
|f(\rho^{(50)})-f(\rho^{(49)})| \approx 3.4\times 10^{-9}.
\end{align}
We therefore regard $\rho^{(50)}$ as a numerically converged AB fixed point.
Then we apply the finite-difference certification to the terminal AB state $\rho^{(50)}$. For the unconstrained qubit problem, we use the Hilbert--Schmidt orthonormal basis
\begin{align}
E_x=\frac{\sigma_x}{\sqrt{2}},\quad
E_y=\frac{\sigma_y}{\sqrt{2}},\quad
E_z=\frac{\sigma_z}{\sqrt{2}}.
\end{align}
For each direction $E_j$, we compute the finite-difference interval $[\ell_j,u_j]$ defined in Theorem~\ref{theorem: finite difference certificate}. If $0\notin[\ell_j,u_j]$ for at least one $j$, then $Df(\rho^{(50)})[E_j]\neq0$. Consequently, $\nabla f(\rho^{(50)})\notin N_{\cM}$, and Corollary~\ref{corollary: weak-form conditions of equivalence} certifies that $\rho^{(50)}$ is not globally optimal.

In the present example, the interval associated with $E_z$ excludes zero, whereas those associated with $E_x$ and $E_y$ are numerically close to zero. We take $h=10^{-4}$ and calculate $l_{z,50}$ and $u_{z,50}$. The numerical result shows that
\begin{align}
    0.6127< l_{z,50}\leq Df(\rho^{(50)})[E_z] \leq u_{z,50} < 0.6129.
\end{align}
This provides a finite-difference numerical certificate, requiring only objective evaluations, that the AB fixed point violates the condition in Corollary~\ref{corollary: weak-form conditions of equivalence} and is therefore not globally optimal, which agrees with the numerical evidence shown in Fig.~\ref{fig: counter example}.

Moreover, direct calculation gives
\begin{align}
    \nabla f(\rho^{(50)}) &= \begin{pmatrix}
        -0.4333 &0\\
        0& 0.4333
    \end{pmatrix},\\
    \Omega(\rho^{(50)}) &= \begin{pmatrix}
        -1.1217 & 0\\
        0& -1.1217
    \end{pmatrix}.
\end{align}
Since $\Omega(\rho^{(50)})$ is proportional to the identity, it belongs to the normal space $N_{\cM}=\{cI:c\in\mathbb{R}\}$, consistent with $\rho^{(50)}$ being an AB fixed point. In contrast, $\nabla f(\rho^{(50)})$ is not proportional to the identity. Equivalently,
\begin{align}
\Omega(\rho^{(50)})-\nabla f(\rho^{(50)}) \notin N_{\cM}.
\end{align}
By Theorem~\ref{theorem: equivalence of fixed point}, this calculation directly confirms the conclusion of the numerical certification above.

\section{Discussion}\label{sec: discussion}


In this paper, we have established a necessary and sufficient condition for the global optimality of full-rank fixed points of generalized quantum AB algorithms with convex differentiable objectives and linear constraints. An AB fixed point is globally optimal precisely when the AB direction and the objective gradient differ by an element of the constraint normal space.  It also clarifies the role of mirror descent: agreement of the two algorithms along a trajectory requires compatibility at every common iterate, whereas the optimality of an AB fixed point depends only on compatibility at that point.


This distinction also changes how an AB trajectory should be interpreted. A convergent AB trajectory may have a globally optimal limit even when it differs from the corresponding MD trajectory. Conversely, monotonic decrease and apparent numerical convergence do not by themselves certify global optimality. 
The complementary examples in this paper illustrate both possibilities and show that the geometry of the candidate terminal state is decisive. The terminal state should therefore be assessed directly. The directional-derivative and finite-difference certificates developed here provide such an a posteriori assessment and a computable upper bound on the objective gap without explicitly computing the objective gradient.

These results characterize optimality at a full-rank AB fixed point and provide certificates for finite iterates. Establishing convergence to a globally optimal fixed point requires additional analysis. A central direction for future work is to develop checkable conditions that ensure such convergence without requiring pathwise equivalence to MD. It would also be useful to explain theoretically the differing empirical convergence behavior observed in our examples. Other directions include extending the analysis to rank-deficient boundary fixed points and feasible sets beyond linear equality constraints. In higher-dimensional problems, the cost of probing a full tangent-space basis and the numerical choice of finite-difference step sizes motivate more scalable and adaptive certification procedures.

\appendices


\section{Proof of Corollary~\ref{theorem: subgradient}}
First, suppose that $g_*\in\partial f(\rho^*)\cap N_{\cM}$. For every $\sigma\in\cM$, the difference $\sigma-\rho^*$ lies in $T_{\cM}$. Since $N_{\cM}=T_{\cM}^{\perp}$, we have $\langle g_*,\sigma-\rho^*\rangle=0$. The subgradient inequality therefore gives
\begin{align}
f(\sigma)
\geq f(\rho^*)+\langle g_*,\sigma-\rho^*\rangle
=f(\rho^*).
\end{align}
Thus, $\rho^*$ is a global minimizer of $f$ over $\cM$.

Conversely, suppose that $\rho^*$ is a global minimizer. Since $\rho^*\succ0$, the positive-semidefinite constraint is inactive at $\rho^*$. The standard first-order optimality condition for a convex problem with affine equality constraints therefore gives a subgradient $g_*\in\partial f(\rho^*)$ and real numbers $\lambda_0,\lambda_1,\ldots,\lambda_k$ such that
\begin{align}
g_*+\lambda_0 I+\sum_{j=1}^k\lambda_jH_j=0.
\end{align}
Here $I$ corresponds to the normalization constraint and each $H_j$ corresponds to one of the linear moment constraints. It follows that
\begin{align}
g_* \in\operatorname{span}\{I,H_1,\ldots,H_k\}=N_{\cM}.
\end{align}
Hence $\partial f(\rho^*)\cap N_{\cM}\neq\emptyset$, which proves the converse.

\section{Proof of Proposition~\ref{proposition: global minimum with constraint}}
Without loss of generality, we assume that $H=\sigma_x$; the following proof can be generalized to $H=\sigma_y$ and $H=\sigma_z$.
We define the following map
\begin{align}
    \Pi_{x}(\rho) = \frac12(\rho+\sigma_x\rho\sigma_x).
\end{align}
If 
\begin{align}
    \rho = \frac12(I+a_x\sigma_x+a_y\sigma_y+a_z\sigma_z),
\end{align}
then
\begin{align}
    \sigma_x\rho\sigma_x = \frac12(I+a_x\sigma_x -a_y\sigma_y-a_z\sigma_z),
\end{align}
and therefore
\begin{align}
    \Pi_x(\rho) = \frac{1}{2}(I+a_x\sigma_x).
\end{align}
Since $\rho\in\cM$ and satisfies the constraint $\tr\rho H=E$,
\begin{align}
    \Pi_x(\rho) = \frac12(I+EH).
\end{align}
We next prove that
\begin{align}
    f(\Pi_x(\rho))\leq f(\rho),\quad\forall \rho\in\cM.
\end{align}
Since both channels are Pauli-covariant, both Choi matrices satisfy the covariance relation
\begin{align}
    (\sigma_x\otimes\sigma_x) \Gamma^{\cN_1}(\sigma_x\otimes\sigma_x) & = \Gamma^{\cN_1},\\
    (\sigma_x\otimes\sigma_x) \Gamma^{\cN_2}(\sigma_x\otimes\sigma_x) & = \Gamma^{\cN_2}.
\end{align}
Therefore,
\begin{align}
\rho_x^{\frac12}\Gamma^{\cN_1}\rho_x^{\frac12} = (\sigma_x\otimes\sigma_x)(\rho^{\frac12}\Gamma^{\cN_1}\rho^{\frac12})(\sigma_x\otimes\sigma_x),
\end{align}
where $\rho_x= \sigma_x\rho\sigma_x$. The above property also holds for $\Gamma^{\cN_2}$.
Since quantum relative entropy is invariant under isometries, we have
\begin{align}
    f(\sigma_x\rho\sigma_x) = f(\rho).
\end{align}
Therefore, by the convexity of $f$, we have
\begin{align}
    f( \frac{1}{2}(I+EH)) = &f(\Pi_x(\rho)) \\ 
    = &f(\frac12\rho+\frac12\sigma_x\rho\sigma_x) \notag\\
    \leq &\frac12 f(\rho)+\frac12 f(\sigma_x\rho\sigma_x) = f(\rho),
\end{align}
for any $\rho\in\cM$.

Therefore, $\hat{\rho} = \frac{1}{2}(I+EH)$ is a global minimizer.

\section{Proof of Theorem~\ref{theorem: convergence without constraint}}
\label{proof of theorem convergence without constraint}

This appendix proves that, for the dephasing--depolarizing example with
\(q=0.4\) and \(p=0.05\), the maximally mixed state is the unique
full-rank fixed point of the AB map.  The proof has three parts.  First,
we reduce the fixed-point problem to proving the strict positivity of
the scalar function \(T(\rho)\).  Second, we derive a closed-form
expression for \(T(\rho)\) by differentiating along radial lines of the
Bloch ball.  Third, we prove that its coefficients are strictly
positive for the chosen channel parameters.

\subsection{Scalar Reduction of the Fixed-Point Problem}

We begin with a general uniqueness lemma that will also be used in
Appendix~\ref{proof of theorem convergence with constraint}.

\begin{lemma}[Scalar reduction of fixed-point uniqueness]
\label{lemma: scalar fixed-point certificate}
Let \(\widehat\rho\in\cM_{++}\) be an AB fixed point.  If
\begin{equation}
\Tr[(\rho-\widehat\rho)\Omega(\rho)]>0
\qquad
\text{for every }\rho\in\cM_{++}\setminus\{\widehat\rho\},
\label{eq: scalar fixed-point certificate}
\end{equation}
then \(\widehat\rho\) is the unique AB fixed point in \(\cM_{++}\).
\end{lemma}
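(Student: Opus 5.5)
The argument is by contradiction, and it uses the fixed-point characterization of Lemma~\ref{lemma: fixed point of AB}. Suppose $\rho\in\cM_{++}$ is an AB fixed point with $\rho\neq\widehat\rho$. By Lemma~\ref{lemma: fixed point of AB}, $\rho$ being a fixed point is equivalent to $\Omega(\rho)\in N_{\cM}$. So we may write
\begin{align}
\Omega(\rho)=a_0I+\sum_{j=1}^k a_jH_j
\end{align}
for some real coefficients $a_0,\ldots,a_k$.

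The key step is that $\rho-\widehat\rho$ lies in the tangent space $T_{\cM}$, since both states belong to $\cM$. Indeed, $\tr(\rho-\widehat\rho)=1-1=0$ and $\tr[(\rho-\widehat\rho)H_j]=c_j-c_j=0$ for every $j$. Because $N_{\cM}=T_{\cM}^{\perp}$ in the Hilbert--Schmidt inner product, this gives
\begin{align}
\tr[(\rho-\widehat\rho)\Omega(\rho)]=a_0\tr(\rho-\widehat\rho)+\sum_{j=1}^k a_j\tr[(\rho-\widehat\rho)H_j]=0.
\end{align}
This contradicts the hypothesis \eqref{eq: scalar fixed-point certificate}, which requires the quantity to be strictly positive for $\rho\neq\widehat\rho$. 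Therefore every AB fixed point in $\cM_{++}$ equals $\widehat\rho$. Since $\widehat\rho$ is itself assumed to be a fixed point, it is the unique one.

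There is no real obstacle here. The only point to check is that the direction $\Leftarrow$ of Lemma~\ref{lemma: fixed point of AB} is not needed: we use only that a full-rank fixed point has $\Omega(\rho)\in N_{\cM}$, which the lemma provides for every $\rho\in\cM_{++}$. The substantive work, namely verifying \eqref{eq: scalar fixed-point certificate} for the concrete channels, is deferred to the later parts of the appendix.
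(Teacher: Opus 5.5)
Your proof is correct and follows the paper's argument: the forward direction of Lemma~\ref{lemma: fixed point of AB} gives $\Omega(\rho)\in N_{\cM}$ for any full-rank fixed point $\rho$, and $\rho-\widehat\rho\in T_{\cM}=N_{\cM}^{\perp}$, so $\Tr[(\rho-\widehat\rho)\Omega(\rho)]=0$. This is incompatible with the strict-positivity hypothesis unless $\rho=\widehat\rho$.
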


\begin{proof}
Suppose that \(\rho\in\cM_{++}\) is an AB fixed point.
Lemma~\ref{lemma: fixed point of AB} gives
\(\Omega(\rho)\in N_{\cM}\).  Because both \(\rho\) and
\(\widehat\rho\) belong to \(\cM\), their difference lies in
\(T_{\cM}\).  Hence
\begin{equation}
\Tr[(\rho-\widehat\rho)\Omega(\rho)]=0.
\end{equation}
Condition~\eqref{eq: scalar fixed-point certificate} therefore rules
out every fixed point other than \(\widehat\rho\).
\end{proof}

In the unconstrained problem, \(N_{\cM}=\operatorname{span}\{I\}\).
Pauli covariance of the two channels implies that
\(\Omega(I/2)\) is invariant under conjugation by each Pauli operator.
It must therefore be proportional to \(I\).  By
Lemma~\ref{lemma: fixed point of AB}, \(I/2\) is an AB fixed point.
It remains to prove the following inequality for every
\(\rho\in\cS(\cH_A)\) satisfying \(\rho\succ0\) and
\(\rho\ne I/2\):
\begin{equation}
T(\rho):=\Tr[(\rho-I/2)\Omega(\rho)]>0
\label{eq: appendix-C-target}
\end{equation}

\subsection{Radial Representation of \(T(\rho)\)}

Write the Pauli operators as \(X,Y,Z\) and parametrize a full-rank
qubit state distinct from \(I/2\) by
\begin{align}
\rho&=\frac12(I+aX+bY+cZ),\nonumber\\
r&:=\sqrt{a^2+b^2+c^2}\in(0,1),\nonumber\\
u&:=\frac{c^2}{r^2}\in[0,1].
\label{eq: appendix-C-Bloch-parameters}
\end{align}
We use the normalized Bell states
\begin{align}
\ket{\Phi_\pm}
  &:=\frac{1}{\sqrt2}(\ket{00}\pm\ket{11}),&
\ket{\Psi_\pm}
  &:=\frac{1}{\sqrt2}(\ket{01}\pm\ket{10}).
\end{align}
For compactness, let
\begin{align*}
P_{\perp}
&:=\ket{\Phi_-}\bra{\Phi_-}
  +\ket{\Psi_+}\bra{\Psi_+}\\
&\quad+\ket{\Psi_-}\bra{\Psi_-}.
\end{align*}
Then the Choi matrices of the dephasing channel \(\cN_1\) and the
depolarizing channel \(\cN_2\) are
\begin{align}
\Gamma^{\cN_1}
&=2q\ket{\Phi_+}\bra{\Phi_+}
  +2(1-q)\ket{\Phi_-}\bra{\Phi_-},\label{eq: appendix-C-Choi-N1}\\
\Gamma^{\cN_2}
&=2m_p\ket{\Phi_+}\bra{\Phi_+}
  +2\delta_pP_{\perp},
\label{eq: appendix-C-Choi-N2}
\end{align}
where
\begin{equation}
m_p:=1-\frac{3p}{4},\qquad
\delta_p:=\frac p4,\qquad
A_p:=m_p+\delta_p=1-\frac p2.
\label{eq: appendix-C-channel-scalars}
\end{equation}
For later use, define
\begin{align}
R&:=\sqrt{(1-p)^2+p m_p r^2},\label{eq: appendix-C-R}\\
\Delta_u
&:=\sqrt{(2q-1)^2+4q(1-q)r^2u}.
\label{eq: appendix-C-Delta}
\end{align}
For \(p,q\in(0,1)\) and \(r\in(0,1)\), one has
\begin{align}
A_p^2-R^2
&=p m_p(1-r^2)>0,\nonumber\\
1-\Delta_u^2
&=4q(1-q)(1-r^2u)>0.
\label{eq: appendix-C-domains}
\end{align}
Thus \(R/A_p\) and \(\Delta_u\) lie in \((0,1)\) for the parameters
used below.

Set
\begin{align}
A_{\cN_1}(\rho)
&:=(\rho^{1/2}\otimes I_B)\Gamma^{\cN_1}
  (\rho^{1/2}\otimes I_B),\\
A_{\cN_2}(\rho)
&:=(\rho^{1/2}\otimes I_B)\Gamma^{\cN_2}
  (\rho^{1/2}\otimes I_B).
\end{align}
If \(P_{\cN_1}:=\supp A_{\cN_1}\), write
\begin{equation}
\log_{+}A_{\cN_1}
:=P_{\cN_1}\log\!\left(A_{\cN_1}|_{\operatorname{Ran}P_{\cN_1}}\right)P_{\cN_1}.
\end{equation}
Thus all logarithms of \(A_{\cN_1}\) below are taken on its support.
When no confusion can arise, the tensor factor \(I_B\) is suppressed.

\begin{lemma}[Radial identity]
\label{lemma: radial identity for T}
Keep the Bloch direction
\(\widehat n=(a,b,c)/r\) fixed and regard
\(\rho(r)=\frac12(I+r\widehat n\cdot\sigma)\) as a function of \(r\).
Then
\begin{align}
T(\rho)
={}&r\Tr\left[\log A_{\cN_2}(\rho)
              \frac{dA_{\cN_1}(\rho)}{dr}\right]\nonumber\\
&-r\Tr\left[\log_{+}A_{\cN_1}(\rho)
              \frac{dA_{\cN_1}(\rho)}{dr}\right].
\label{eq: appendix-C-radial-identity}
\end{align}
\end{lemma}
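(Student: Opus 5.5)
We start from $T(\rho)=\Tr[(\rho-I/2)\Omega(\rho)]$ and write it as $f(\rho)-\tfrac12\Tr\Omega(\rho)$, using $\Tr\rho\Omega(\rho)=f(\rho)$ from \eqref{ALA1}. Since $\Omega$ is the Hermitian part of $\Omega_1$, we have $\Tr\Omega=\mathrm{Re}\,\Tr\Omega_1$. So it suffices to compute $\Tr\Omega_1(\rho)$ and $f(\rho)$ in terms of $A_{\cN_1}$, $A_{\cN_2}$ and $\rho^{1/2}$, and then match them with the derivative $dA_{\cN_1}/dr$.

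\textbf{Step 1: express $\Tr\Omega_1$.} Taking the full trace over $AB$ and using cyclicity gives
\[
\Tr\Omega_1(\rho)=-\Tr\bigl[\rho^{-1/2}\Gamma^{\cN_1}\rho^{1/2}(\log_+A_{\cN_1}-\log A_{\cN_2})\bigr].
\]
Here $\log$ of $A_{\cN_1}$ is interpreted on its support, which is legitimate because the prefactor $\Gamma^{\cN_1}\rho^{1/2}$ has range inside $\operatorname{Ran}P_{\cN_1}$. Similarly,
\[
f(\rho)=-\Tr[A_{\cN_1}(\log_+A_{\cN_1}-\log A_{\cN_2})]=-\Tr\bigl[\rho^{1/2}\Gamma^{\cN_1}\rho^{1/2}(\cdots)\bigr].
\]
Both quantities are therefore traces of $L:=\log_+A_{\cN_1}-\log A_{\cN_2}$ against the operators $\rho^{1/2}\Gamma^{\cN_1}\rho^{1/2}$ and $\rho^{-1/2}\Gamma^{\cN_1}\rho^{1/2}$. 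Combining them,
\[
T=-\mathrm{Re}\,\Tr\bigl[(\rho^{1/2}\Gamma^{\cN_1}\rho^{1/2}-\tfrac12\rho^{-1/2}\Gamma^{\cN_1}\rho^{1/2})L\bigr].
\]

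\textbf{Step 2: radial derivative of $A_{\cN_1}$.} Along the ray, $\rho(r)=\tfrac12(I+r\widehat n\cdot\sigma)$ commutes with $d\rho/dr=\tfrac12\widehat n\cdot\sigma$. Hence $\rho^{1/2}$ and its derivative $S:=d\rho^{1/2}/dr$ commute with $\rho$, and $S=\tfrac12\rho^{-1/2}\,d\rho/dr$. Since $r\,d\rho/dr=\rho-I/2$, we get $rS=\tfrac12\rho^{-1/2}(\rho-I/2)=\tfrac12(\rho^{1/2}-\tfrac12\rho^{-1/2})$. Then
\[
r\frac{dA_{\cN_1}}{dr}=(rS)\Gamma^{\cN_1}\rho^{1/2}+\rho^{1/2}\Gamma^{\cN_1}(rS).
\]
Taking traces against the Hermitian $L$, the two terms are complex conjugates, so
\[
r\Tr\Bigl[L\frac{dA_{\cN_1}}{dr}\Bigr]=2\,\mathrm{Re}\,\Tr[(rS)\Gamma^{\cN_1}\rho^{1/2}L]=\mathrm{Re}\,\Tr\bigl[(\rho^{1/2}-\tfrac12\rho^{-1/2})\Gamma^{\cN_1}\rho^{1/2}L\bigr].
\]
Comparing with Step 1, this equals $-T$. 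That is exactly \eqref{eq: appendix-C-radial-identity}, with the sign distributed over the $\log A_{\cN_2}$ and $\log_+A_{\cN_1}$ terms.

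\textbf{Main obstacle.} The delicate point is the support issue. $\Gamma^{\cN_1}$ has rank $2$, so $A_{\cN_1}$ is singular, and we must check that every occurrence of $\log_+A_{\cN_1}$ is paired with an operator whose range and co-range lie in $\operatorname{Ran}P_{\cN_1}=\operatorname{Ran}(\rho^{1/2}\Gamma^{\cN_1}\rho^{1/2})$. For $A_{\cN_1}$ itself this is automatic. For $\rho^{\pm1/2}\Gamma^{\cN_1}\rho^{1/2}$ and $dA_{\cN_1}/dr$, I would argue as follows. Because $\rho$ is invertible, $\operatorname{Ran}P_{\cN_1}=\rho^{1/2}\operatorname{Ran}\Gamma^{\cN_1}$. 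The operator $\rho^{-1/2}\Gamma^{\cN_1}\rho^{1/2}$ appears only inside a trace where cyclicity moves it to the form $\Gamma^{\cN_1}\rho^{1/2}L\rho^{-1/2}$. I would carry out this rearrangement carefully, or equivalently justify it by replacing $\log_+$ with $\log(A_{\cN_1}+\varepsilon I)$ and letting $\varepsilon\to0$.
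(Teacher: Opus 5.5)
Your proof is correct and is essentially the paper's argument: both rewrite $T$ by cyclicity as $-\operatorname{Re}\Tr\bigl[\rho^{-1/2}(\rho-I/2)\Gamma^{\cN_1}\rho^{1/2}L\bigr]$, then use $\rho^{-1/2}(\rho-I/2)=2r\,d\rho^{1/2}/dr$ along the commuting ray and the Hermiticity of $L$ to identify this with $-r\Tr[L\,dA_{\cN_1}/dr]$; your route through $f-\frac12\Tr\Omega$ only splits and recombines that same expression. The support issue you flag is not a real obstacle, because under the paper's convention $\log_{+}A_{\cN_1}$ is a genuine Hermitian operator and every step is finite-dimensional trace algebra; moreover, your stated justification is inaccurate, since $\operatorname{Ran}(\Gamma^{\cN_1}\rho^{1/2})=\operatorname{Ran}\Gamma^{\cN_1}$ generally differs from $\operatorname{Ran}P_{\cN_1}=\rho^{1/2}\operatorname{Ran}\Gamma^{\cN_1}$ (e.g.\ for $\rho=\frac12(I+aX)$ with $a\neq0$), and the property that actually holds is $\Gamma^{\cN_1}\rho^{1/2}P_{\cN_1}=\Gamma^{\cN_1}\rho^{1/2}$.
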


\begin{proof}
Let
\begin{equation}
L(\rho):=\log_{+}A_{\cN_1}(\rho)-\log A_{\cN_2}(\rho).
\end{equation}
From \eqref{eq: definition of Omega},
\begin{align}
\Omega_1(\rho)
&=-\Tr_B\!\left[
  \Gamma^{\cN_1}\rho^{1/2}L(\rho)\rho^{-1/2}
  \right],\nonumber\\
\Omega(\rho)
&=\frac{\Omega_1(\rho)+\Omega_1(\rho)^\dagger}{2}.
\end{align}
Because \(\rho-I/2\) is Hermitian,
\begin{equation}
T(\rho)=\operatorname{Re}\Tr[(\rho-I/2)\Omega_1(\rho)].
\end{equation}
In the eigenbasis of \(\rho(r)\),
\begin{equation}
\rho^{-1/2}(\rho-I/2)
=2r\frac{d\rho^{1/2}}{dr}.
\label{eq: appendix-C-square-root-radial}
\end{equation}
Using this identity, cyclicity of the trace, and
\begin{equation}
\frac{dA_{\cN_1}}{dr}
=\frac{d\rho^{1/2}}{dr}\Gamma^{\cN_1}\rho^{1/2}
 +\rho^{1/2}\Gamma^{\cN_1}\frac{d\rho^{1/2}}{dr},
\end{equation}
we obtain
\begin{equation}
T(\rho)
=-r\Tr\left[L(\rho)\frac{dA_{\cN_1}(\rho)}{dr}\right].
\end{equation}
Splitting \(L(\rho)\) into its two logarithmic terms proves
\eqref{eq: appendix-C-radial-identity}.
\end{proof}

\subsection{Closed-Form Expression for \(T(\rho)\)}

The next proposition contains the main calculation.  Its proof follows
the two terms in \eqref{eq: appendix-C-radial-identity} separately.

\begin{proposition}[Closed form of \(T(\rho)\)]
\label{proposition: closed form of T}
For the state in \eqref{eq: appendix-C-Bloch-parameters},
\begin{align}
T(\rho)
={}&
\frac{pq r^2}{2R}\operatorname{artanh}\frac{R}{A_p}\nonumber\\
&+\frac{2(1-q)m_p u r^2}{R}
 \operatorname{artanh}\frac{R}{A_p}\nonumber\\
&+\frac{1-q}{2}(1-u)r\log\frac{1+r}{1-r}\nonumber\\
&-\frac{2q(1-q)u r^2}{\Delta_u}
 \log\frac{1+\Delta_u}{1-\Delta_u}.
\label{eq: close form of Trho}
\end{align}
Moreover, the symmetries of the channel pair imply
\begin{equation}
\Omega(\rho)
=w_0(\rho)I+\eta_{\perp}(\rho)(aX+bY)+\eta_z(\rho)cZ,
\label{eq: appendix-C-Omega-form}
\end{equation}
where
\begin{align}
\eta_{\perp}(\rho)
&=\frac{pq}{2R}\operatorname{artanh}\frac{R}{A_p}
 +\frac{1-q}{2r}\log\frac{1+r}{1-r},
\label{eq: appendix-C-eta-perp}\\
\eta_z(\rho)
&=\frac{pq}{2R}\operatorname{artanh}\frac{R}{A_p}
 +\frac{2(1-q)m_p}{R}\operatorname{artanh}\frac{R}{A_p}\nonumber\\
&\quad-\frac{2q(1-q)}{\Delta_u}
 \log\frac{1+\Delta_u}{1-\Delta_u}.
\label{eq: appendix-C-eta-z}
\end{align}
Consequently,
\begin{equation}
T(\rho)=\eta_{\perp}(\rho)(a^2+b^2)+\eta_z(\rho)c^2.
\label{eq: appendix-C-coefficient-decomposition}
\end{equation}
\end{proposition}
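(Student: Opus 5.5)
The plan is to compute both logarithmic terms in the radial identity of Lemma~\ref{lemma: radial identity for T} explicitly. I will first reduce $A_{\cN_1}(\rho)$ and $A_{\cN_2}(\rho)$ to low-rank or rank-one-perturbed forms using the Bell-state Choi matrices \eqref{eq: appendix-C-Choi-N1}--\eqref{eq: appendix-C-Choi-N2}. I will then fix the operator structure \eqref{eq: appendix-C-Omega-form} of $\Omega$ by symmetry, and read off $\eta_\perp,\eta_z$ by matching with the closed form of $T$.

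\emph{Step 1 (low-rank forms).} Set $\ket{v}:=(\rho^{1/2}\otimes I)\ket{\Phi_+}$ and $\ket{w}:=(\rho^{1/2}Z\otimes I)\ket{\Phi_+}$, so that $(\rho^{1/2}\otimes I)\ket{\Phi_-}=\ket{w}$. Then
\[
A_{\cN_1}=2q\ketbra{v}{v}+2(1-q)\ketbra{w}{w}.
\]
Using $P_\perp=I-\ketbra{\Phi_+}{\Phi_+}$,
\[
A_{\cN_2}=2\delta_p(\rho\otimes I)+2(m_p-\delta_p)\ketbra{v}{v}.
\]
The Gram data are $\braket{v|v}=\braket{w|w}=\tfrac12$ and $\braket{v|w}=\tfrac12\Tr(\rho Z)=c/2$. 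Hence the nonzero spectrum of $A_{\cN_1}$ is that of a $2\times2$ matrix with eigenvalues $\tfrac12(1\pm\Delta_u)$, and this is the origin of $\Delta_u$. For $A_{\cN_2}$, the matrix is block diagonal in $\operatorname{span}\{\ket{v},(\rho\otimes I)\ket{v}\}$ and its orthogonal complement. The $2\times2$ block has eigenvalues $\tfrac12(A_p\pm R)$; this is where $R$ comes from. On the complement, $A_{\cN_2}$ acts as $2\delta_p\rho\otimes I$, with eigenvalues $\delta_p(1\pm r)$.

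\emph{Step 2 (the two traces).} For the $\log_+A_{\cN_1}$ term I use
\[
\Tr[\log_+A\,\tfrac{dA}{dr}]=\tfrac{d}{dr}\Tr[A\log_+A-A],
\]
which is valid because the rank of $A_{\cN_1}$ is constant in $r$. This reduces the term to differentiating $\sum_\pm\lambda_\pm\log\lambda_\pm$ over the two Gram eigenvalues, which yields the $-\frac{2q(1-q)ur^2}{\Delta_u}\log\frac{1+\Delta_u}{1-\Delta_u}$ contribution.

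For the $\log A_{\cN_2}$ term, I write $\log A_{\cN_2}$ in its explicit eigendecomposition from Step 1 and pair it with $dA_{\cN_1}/dr$. This requires the overlaps of $\ket{v},\ket{w}$ and of their $r$-derivatives with the eigenvectors of $A_{\cN_2}$.

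\begin{itemize}
\item On the $\delta_p(1\pm r)$ subspace, the pairing produces $\log\frac{1+r}{1-r}$ terms weighted by $(1-u)$. These come from the component of $\ket{w}$ orthogonal to the $\ket{v}$ block.
\item On the $2\times2$ block, the combination $\log\frac{A_p+R}{A_p-R}=2\operatorname{artanh}(R/A_p)$ appears, with coefficients $pq$ and $(1-q)m_pu$ after simplification with $A_p^2-R^2=pm_p(1-r^2)$.
\end{itemize}

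I expect this cross term to be the main obstacle, since $A_{\cN_1}$ and $A_{\cN_2}$ do not commute. I would first work in the eigenbasis of $\rho$, in which $\rho\otimes I$ is diagonal and $\ket{v}$ has only two nonzero amplitudes. The computation then reduces to explicit $4\times4$ matrices, and I would carry it out with $\widehat n$ in the $xz$-plane by the rotation symmetry below. Summing the two terms gives \eqref{eq: close form of Trho}.

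\emph{Step 3 (structure of $\Omega$).} The dephasing and depolarizing channels are covariant under $z$-rotations $e^{-i\theta Z}$ and under Pauli conjugations. Tracking the transpose in the Choi isomorphism, $\Omega$ is therefore equivariant under these unitaries. Equivariance under $z$-rotations forces the $X,Y$ part of $\Omega(\rho)$ to be $\alpha(aX+bY)+\beta(bX-aY)$ and forces the $Z$ coefficient to depend only on $(r,u,c)$. Conjugation by $X$ flips $(b,c)$, and conjugation by $Y$ flips $(a,c)$. These two conditions give $\beta=0$ and make the $Z$ coefficient odd in $c$, which yields \eqref{eq: appendix-C-Omega-form}. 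Since $\Tr[(\rho-I/2)I]=0$ and the Paulis are trace-orthogonal with $\Tr X^2=2$ and $\tfrac12\cdot2=1$, we obtain \eqref{eq: appendix-C-coefficient-decomposition}.

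Finally, I identify $\eta_\perp$ and $\eta_z$ by comparing \eqref{eq: appendix-C-coefficient-decomposition} with \eqref{eq: close form of Trho}, using $a^2+b^2=r^2(1-u)$ and $c^2=r^2u$. For this I will compute $T$ separately in the two limiting directions $u=0$ and $u=1$, and then justify the $u$-dependence by the same two-dimensional-block computation applied to $\Tr[X\Omega]$ and $\Tr[Z\Omega]$ directly.
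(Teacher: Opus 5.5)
Your Steps 1 and 2 follow the paper's route and are correct. The Gram reduction of $A_{\cN_1}$ gives the eigenvalues $\tfrac12(1\pm\Delta_u)$, and $A_{\cN_2}$ splits into the block spanned by $\ket{v},(\rho\otimes I)\ket{v}$ and its complement in the eigenbasis of $\rho$ with $b=0$.

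\textbf{The gap is in Step 3.} Conjugation by $X$ and conjugation by $Y$ do \emph{not} force $\beta=0$. Since $Y\rho Y=Z(X\rho X)Z$, and conjugation by $Z$ is already the $z$-rotation by $\pi$, the two conditions are the same constraint: $\alpha$ even and $\beta$ odd in $c$. Indeed, $c\,(bX-aY)$ is equivariant under every unitary you list, so your argument cannot exclude it.

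This is not cosmetic. \eqref{eq: appendix-C-Omega-form} asserts $\beta=0$. In the proof of Theorem~\ref{theorem: convergence with constraint}, a $\beta$-term would add $-\beta bE$ to $\Tr[(\rho-\widehat\rho_X)\Omega(\rho)]$.

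What is missing is an orientation-reversing symmetry; this is the ``reflection'' in the paper's one-line argument. Both Choi matrices are real in the computational basis, so $\Omega(\overline\rho)=\overline{\Omega(\rho)}$. Complex conjugation maps $(a,b,c)\mapsto(a,-b,c)$ and $Y\mapsto -Y$. Concretely, for $b=0$ every factor in $\Omega_1(\rho)$ is real, so $\Omega(\rho)$ is real symmetric and has no $Y$-component. Hence $\beta=0$.

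On identifying $\eta_\perp$ and $\eta_z$, you are right that \eqref{eq: close form of Trho} only fixes $(1-u)\eta_\perp+u\eta_z$; the paper's ``comparison'' step passes over this. The limiting directions $u=0,1$ say nothing about intermediate $u$, so the direct computation is essential and should be made explicit.

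The cleanest version uses the tangential component. Take $b=0$, $\widehat n=(s,0,t)$ and $\sigma_m:=tX-sZ$. Then $\tfrac12\Tr[\sigma_m\Omega]=rst(\eta_\perp-\eta_z)$ and $\rho^{-1/2}\sigma_m\rho^{-1/2}=\tfrac{2}{\sqrt{1-r^2}}\sigma_m$. With $L:=\log_+A_{\cN_1}-\log A_{\cN_2}$ and $M_K:=\rho^{-1/2}K\rho^{-1/2}$, one has for Hermitian $K$
\[
\Tr[K\Omega(\rho)]=-\operatorname{Re}\bigl(2q\bra{v}L(M_K\otimes I)\ket{v}+2(1-q)\bra{w}L(M_K\otimes I)\ket{w}\bigr).
\]
Because $\sigma_m\otimes I$ swaps the $e$- and $f$-blocks, the $\log A_{\cN_2}$ part is again a block computation.

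For $\log_+A_{\cN_1}$ you need the operator itself, not only its eigenvalues. Set $Y_1:=[\sqrt{2q}\ket{v},\sqrt{2(1-q)}\ket{w}]$ and $H:=Y_1^\dagger Y_1$. Then $\log_+A_{\cN_1}=Y_1H^{-1}(\log H)Y_1^\dagger$, so that part reduces to $\Tr[\log H\,Y_1^\dagger(\sigma_m\otimes I)Y_1]$. The result is
\[
\eta_\perp-\eta_z=\frac{1-q}{2r}\log\frac{1+r}{1-r}-\frac{2(1-q)m_p}{R}\operatorname{artanh}\frac{R}{A_p}+\frac{2q(1-q)}{\Delta_u}\log\frac{1+\Delta_u}{1-\Delta_u}.
\]
Together with \eqref{eq: close form of Trho}, this yields exactly the stated $\eta_\perp$ and $\eta_z$.
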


\begin{proof}
We evaluate the two terms in
\eqref{eq: appendix-C-radial-identity}.

\emph{First \(A_{\cN_1}\) term:}
Define
\begin{equation}
\ket{v_\pm}:=(\rho^{1/2}\otimes I_B)\ket{\Phi_\pm}.
\end{equation}
Then
\begin{equation}
A_{\cN_1}
=2q\ket{v_+}\bra{v_+}
 +2(1-q)\ket{v_-}\bra{v_-},
\end{equation}
and
\begin{equation}
\begin{pmatrix}
\braket{v_+|v_+}&\braket{v_+|v_-}\\
\braket{v_-|v_+}&\braket{v_-|v_-}
\end{pmatrix}
=\frac12
\begin{pmatrix}
1&c\\ c&1
\end{pmatrix}.
\end{equation}
The two nonzero eigenvalues of \(A_{\cN_1}\) are therefore
\begin{equation}
\lambda_\pm^{\cN_1}=\frac{1\pm\Delta_u}{2}.
\end{equation}
Because the Bloch direction is fixed, \(u=c^2/r^2\) is constant and
\begin{equation}
\frac{d\lambda_\pm^{\cN_1}}{dr}
=\pm\frac{2q(1-q)ru}{\Delta_u}.
\end{equation}
It follows that
\begin{align}
-r\Tr\left[\log_{+}A_{\cN_1}\frac{dA_{\cN_1}}{dr}\right]
&=-r\sum_{\nu\in\{+,-\}}
  \frac{d\lambda_\nu^{\cN_1}}{dr}\log\lambda_\nu^{\cN_1}\nonumber\\
&=-\frac{2q(1-q)u r^2}{\Delta_u}
  \log\frac{1+\Delta_u}{1-\Delta_u}.
\label{eq: appendix-C-N1-contribution}
\end{align}

\emph{Second term: the \(A_{\cN_2}\) term.}
Both channels are covariant under rotations about the \(Z\)-axis, so
\(T(\rho)\) is invariant under those rotations.  We may therefore set
\(b=0\) and \(a\geq0\).  Put
\begin{equation}
s:=\frac{a}{r}=\sqrt{1-u},
\qquad
t:=\frac{c}{r},
\qquad
s^2+t^2=1.
\end{equation}
Choose a real eigenbasis \(\{\ket{+},\ket{-}\}\) of \(\rho\), with
eigenvalues \(\lambda_\pm=(1\pm r)/2\), and define
\begin{align}
e_+&:=\ket{+}\ket{+},&
e_-&:=\ket{-}\ket{-},\nonumber\\
f_+&:=\ket{+}\ket{-},&
f_-&:=\ket{-}\ket{+}.
\end{align}
In this basis,
\begin{align}
\ket{v_+}
&=\frac{\sqrt{1+r}}{2}e_+
 +\frac{\sqrt{1-r}}{2}e_-,
\label{eq: appendix-C-v-plus}\\
\ket{v_-}
&=\frac{t}{2}\bigl(\sqrt{1+r}\,e_+
                  -\sqrt{1-r}\,e_-\bigr)\nonumber\\
&\quad-\frac{s}{2}\bigl(\sqrt{1+r}\,f_+
                       +\sqrt{1-r}\,f_-\bigr).
\label{eq: appendix-C-v-minus}
\end{align}
Using
\begin{equation}
\Gamma^{\cN_2}
=2\delta_p I_{AB}
 +2(1-p)\ket{\Phi_+}\bra{\Phi_+},
\end{equation}
we see that \(A_{\cN_2}\) is block diagonal in the decomposition
\(\operatorname{span}\{e_+,e_-\}\oplus
\operatorname{span}\{f_+\}\oplus\operatorname{span}\{f_-\}\).
On the first block it is represented by
\begin{equation}
B_e=\frac12
\begin{pmatrix}
A_p(1+r)&(1-p)\sqrt{1-r^2}\\
(1-p)\sqrt{1-r^2}&A_p(1-r)
\end{pmatrix},
\label{eq: appendix-C-Be}
\end{equation}
whose eigenvalues are \((A_p\pm R)/2\).  Hence
\begin{equation}
\log B_e
=C_0I+\frac{2}{R}\operatorname{artanh}\frac{R}{A_p}
       \left(B_e-\frac{A_p}{2}I\right),
\label{eq: appendix-C-log-Be}
\end{equation}
where \(C_0\) is a scalar.  The remaining two eigenvalues of
\(A_{\cN_2}\) are \(\delta_p(1+r)\) and \(\delta_p(1-r)\), on
\(f_+\) and \(f_-\), respectively.

The norms of the \(e_\pm\)-block components of \(v_+\) and \(v_-\)
are independent of \(r\), so the scalar term \(C_0I\) in
\eqref{eq: appendix-C-log-Be} does not contribute to the following
derivatives.  Substituting
\eqref{eq: appendix-C-v-plus}--\eqref{eq: appendix-C-log-Be} gives
\begin{align}
\left\langle\frac{dv_+}{dr}\middle|
 \log A_{\cN_2}\middle|v_+\right\rangle
&=\frac{pr}{8R}\operatorname{artanh}\frac{R}{A_p},
\label{eq: appendix-C-v-plus-element}\\
\left\langle\frac{dv_-}{dr}\middle|
 \log A_{\cN_2}\middle|v_-\right\rangle
&=\frac{m_p u r}{2R}\operatorname{artanh}\frac{R}{A_p}\notag\\
&\quad+\frac{1-u}{8}\log\frac{1+r}{1-r}.
\label{eq: appendix-C-v-minus-element}
\end{align}
Here the first term in \eqref{eq: appendix-C-v-minus-element} comes
from the \(e_\pm\) block, while the logarithmic term comes from the
one-dimensional \(f_\pm\) blocks.  Since
\begin{align}
\Tr\left[\log A_{\cN_2}\frac{dA_{\cN_1}}{dr}\right]
={}&4q\left\langle\frac{dv_+}{dr}\middle|
       \log A_{\cN_2}\middle|v_+\right\rangle\nonumber\\
&+4(1-q)\left\langle\frac{dv_-}{dr}\middle|
       \log A_{\cN_2}\middle|v_-\right\rangle,
\end{align}
we obtain
\begin{align}
r\Tr\left[\log A_{\cN_2}\frac{dA_{\cN_1}}{dr}\right]
={}&\frac{pq r^2}{2R}\operatorname{artanh}\frac{R}{A_p}\nonumber\\
&+\frac{2(1-q)m_p u r^2}{R}
 \operatorname{artanh}\frac{R}{A_p}\nonumber\\
&+\frac{1-q}{2}(1-u)r\log\frac{1+r}{1-r}.
\label{eq: appendix-C-N2-contribution}
\end{align}
Combining \eqref{eq: appendix-C-N1-contribution} and
\eqref{eq: appendix-C-N2-contribution} proves
\eqref{eq: close form of Trho}.

Rotational and reflection symmetries about the \(Z\)-axis force the
Bloch components of \(\Omega(\rho)\) to have the form
\eqref{eq: appendix-C-Omega-form}.  Since
\(a^2+b^2=(1-u)r^2\) and \(c^2=ur^2\), comparison with
\eqref{eq: close form of Trho} gives
\eqref{eq: appendix-C-eta-perp}--\eqref{eq: appendix-C-coefficient-decomposition}.
\end{proof}

\subsection{Strict Positivity of \(T(\rho)\)}

\begin{lemma}[Positivity of the Bloch coefficients]
\label{lemma: positivity of eta coefficients}
For
\begin{equation}
p=\frac1{20},\qquad q=\frac25,
\end{equation}
the coefficients in \eqref{eq: appendix-C-eta-perp} and
\eqref{eq: appendix-C-eta-z} satisfy
\begin{equation}
\eta_{\perp}(\rho)>0,\qquad \eta_z(\rho)>0
\end{equation}
for every \(r\in(0,1)\) and \(u\in[0,1]\).
\end{lemma}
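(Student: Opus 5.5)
The plan is to treat the two coefficients separately: $\eta_\perp$ is positive term by term, and $\eta_z$ reduces to a comparison of two values of the increasing function
\[
g(x):=\frac{\operatorname{artanh} x}{x},\qquad x\in(0,1).
\]
Its power series $g(x)=\sum_{n\ge0}x^{2n}/(2n+1)$ shows that $g$ is positive and strictly increasing on $(0,1)$. Throughout I use \eqref{eq: appendix-C-domains}, which gives $R/A_p\in(0,1)$ and $\Delta_u\in(0,1)$.

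\emph{Step 1: $\eta_\perp$.} In \eqref{eq: appendix-C-eta-perp}, the first term is positive because $p,q>0$, $R>0$ and $R/A_p\in(0,1)$. The second term is positive because $1-q>0$ and $\log\frac{1+r}{1-r}>0$ for $r\in(0,1)$. So $\eta_\perp(\rho)>0$ with no restriction on the parameters.

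\emph{Step 2: rewrite $\eta_z$.} I would combine the two $\operatorname{artanh}(R/A_p)$ terms and use $\log\frac{1+\Delta_u}{1-\Delta_u}=2\operatorname{artanh}\Delta_u$. This gives
\[
\eta_z(\rho)=\frac{K}{A_p}\,g\!\left(\frac{R}{A_p}\right)-4q(1-q)\,g(\Delta_u),
\qquad K:=\frac{pq}{2}+2(1-q)m_p .
\]
For $p=1/20$ and $q=2/5$ one has $m_p=77/80$, $A_p=39/40$, $K=1/100+231/200=233/200$, and $4q(1-q)=24/25$. Hence
\[
\frac{K}{A_p}=\frac{233}{195}>1>\frac{24}{25}.
\]
So it suffices to show $g(R/A_p)\ge g(\Delta_u)$, which by monotonicity of $g$ follows from $R/A_p\ge\Delta_u$.

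\emph{Step 3: compare $R/A_p$ with $\Delta_u$ (the main step).} Since $\Delta_u$ is increasing in $u$, it is enough to treat $u=1$. I would compute
\[
\frac{R^2}{A_p^2}-\Delta_1^2=\frac{(1-p)^2}{A_p^2}-(2q-1)^2+r^2\left(\frac{pm_p}{A_p^2}-4q(1-q)\right).
\]
From \eqref{eq: appendix-C-domains} we have $(1-p)^2=A_p^2-pm_p$, so the constant term equals $1-(2q-1)^2-\frac{pm_p}{A_p^2}=4q(1-q)-\frac{pm_p}{A_p^2}$. The whole difference therefore factors as
\[
\frac{R^2}{A_p^2}-\Delta_1^2=\left(4q(1-q)-\frac{pm_p}{A_p^2}\right)(1-r^2).
\]
With the given parameters, $\frac{pm_p}{A_p^2}=\frac{77}{1600}\cdot\frac{1600}{1521}=\frac{77}{1521}<\frac{24}{25}$. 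The bracket is therefore positive, and $R/A_p>\Delta_1\ge\Delta_u$ for every $r\in(0,1)$ and $u\in[0,1]$.

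\emph{Conclusion.} By Step 3 and monotonicity of $g$, $g(R/A_p)>g(\Delta_u)>0$. By Step 2,
\[
\eta_z(\rho)>\left(\frac{233}{195}-\frac{24}{25}\right)g(\Delta_u)>0 .
\]
The only delicate point is the algebra in Step 3: the factorisation makes both the constant and the $r^2$ coefficient multiples of the same bracket, so the comparison is uniform in $r$. The numerical inequalities afterwards are exact rational checks.
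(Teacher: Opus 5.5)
Your proof is correct and follows essentially the same route as the paper. Both show $\eta_\perp>0$ termwise and reduce $\eta_z>0$ to the monotonicity of $\operatorname{artanh}(x)/x$ together with $R/A_p>\Delta_1\ge\Delta_u$; your factorization $\bigl(4q(1-q)-pm_p/A_p^2\bigr)(1-r^2)$ is the symbolic form of the paper's $\frac{34579}{38025}(1-r^2)$, and the only cosmetic difference is that you merge the first $\operatorname{artanh}(R/A_p)$ term into $K$ and compare $233/195$ with $24/25$, whereas the paper drops that positive term and compares $m_p/A_p=77/78$ with $2q=4/5$.
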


\begin{proof}
Every term in \eqref{eq: appendix-C-eta-perp} is positive, so
\(\eta_{\perp}(\rho)>0\).  For \(\eta_z\), the first term in
\eqref{eq: appendix-C-eta-z} is positive.  It remains to show that
the second term dominates the magnitude of the third, or equivalently,
\begin{equation}
\frac{m_p}{R}\operatorname{artanh}\frac{R}{A_p}
>
\frac{2q}{\Delta_u}\operatorname{artanh}\Delta_u.
\label{eq: appendix-C-eta-z-target}
\end{equation}
Define
\begin{equation}
h(x):=\frac{\operatorname{artanh}x}{x},
\qquad 0<x<1.
\end{equation}
The power series
\begin{equation}
h(x)=\sum_{k=0}^{\infty}\frac{x^{2k}}{2k+1}
\end{equation}
shows that \(h\) is strictly increasing on \((0,1)\).

For the chosen parameters,
\begin{equation}
m_p=\frac{77}{80},\qquad A_p=\frac{39}{40},
\end{equation}
and, because \(u\leq1\),
\begin{align}
\Delta_u^2
&\leq\Delta_1^2:=\frac{1+24r^2}{25},\\
\left(\frac{R}{A_p}\right)^2
&=\frac{1444+77r^2}{1521}.
\end{align}
Their difference is
\begin{equation}
\left(\frac{R}{A_p}\right)^2-\Delta_1^2
=\frac{34579}{38025}(1-r^2)>0.
\end{equation}
Thus
\begin{equation}
\frac{R}{A_p}>\Delta_1\geq\Delta_u.
\end{equation}
Monotonicity of \(h\), together with
\begin{equation}
\frac{m_p}{A_p}=\frac{77}{78}>\frac45=2q,
\end{equation}
gives
\begin{equation}
\frac{m_p}{A_p}h\!\left(\frac{R}{A_p}\right)
>2q\,h(\Delta_u),
\end{equation}
which is precisely \eqref{eq: appendix-C-eta-z-target}.  Hence
\(\eta_z(\rho)>0\).
\end{proof}

\begin{proof}[Proof of Theorem~\ref{theorem: convergence without constraint}]
For every \(\rho\neq I/2\), at least one of \(a,b,c\) is nonzero.
Proposition~\ref{proposition: closed form of T} and
Lemma~\ref{lemma: positivity of eta coefficients} therefore give
\begin{equation}
T(\rho)
=\eta_{\perp}(\rho)(a^2+b^2)+\eta_z(\rho)c^2>0.
\end{equation}
We already showed that \(I/2\) is an AB fixed point.  Applying
Lemma~\ref{lemma: scalar fixed-point certificate} with
\(\widehat\rho=I/2\) proves that it is the unique full-rank fixed point.
\end{proof}

\section{Proof of Theorem~\ref{theorem: convergence with constraint}}
\label{proof of theorem convergence with constraint}

Fix \(H\in\{X,Y,Z\}\).  We prove uniqueness for the constrained
problem stated in Theorem~\ref{theorem: convergence with constraint}.
As in the rest of the paper, fixed points are considered in
\(\cM_{++}\); in particular, \(-1<E<1\).  No new spectral calculation
is required.  We reuse the symmetry decomposition and the strictly
positive coefficients established in
Appendix~\ref{proof of theorem convergence without constraint}.

Let
\begin{equation}
\widehat\rho_H:=\frac12(I+EH).
\end{equation}

\begin{proof}[Proof of Theorem~\ref{theorem: convergence with constraint}]
We first verify that \(\widehat\rho_H\) is an AB fixed point.  If
\(E=0\), then \(\widehat\rho_H=I/2\), whose fixed-point property was
established in
Appendix~\ref{proof of theorem convergence without constraint}.
Suppose now that \(E\ne0\).  At \(\widehat\rho_X\), the symmetry form
\eqref{eq: appendix-C-Omega-form} gives
\(\Omega(\widehat\rho_X)\in\operatorname{span}\{I,X\}\).  Similarly,
\begin{align}
\Omega(\widehat\rho_Y)&\in\operatorname{span}\{I,Y\},&
\Omega(\widehat\rho_Z)&\in\operatorname{span}\{I,Z\}.
\end{align}
Thus, in every case,
\(\Omega(\widehat\rho_H)\in N_{\cM}\), and
Lemma~\ref{lemma: fixed point of AB} shows that
\(\widehat\rho_H\) is an AB fixed point.

It remains to establish the strict positivity required by
Lemma~\ref{lemma: scalar fixed-point certificate}.  We use the
representation \eqref{eq: appendix-C-Omega-form} and the positivity
proved in Lemma~\ref{lemma: positivity of eta coefficients}.

If \(H=X\), every feasible state can be written as
\begin{equation}
\rho=\frac12(I+EX+bY+cZ).
\end{equation}
Orthogonality of the Pauli operators gives
\begin{equation}
\Tr[(\rho-\widehat\rho_X)\Omega(\rho)]
=b^2\eta_{\perp}(\rho)+c^2\eta_z(\rho).
\end{equation}
This is strictly positive unless \(b=c=0\), which is exactly the
excluded state \(\widehat\rho_X\).

If \(H=Y\), write
\begin{equation}
\rho=\frac12(I+aX+EY+cZ).
\end{equation}
The same calculation gives
\begin{equation}
\Tr[(\rho-\widehat\rho_Y)\Omega(\rho)]
=a^2\eta_{\perp}(\rho)+c^2\eta_z(\rho)>0
\end{equation}
away from \(\widehat\rho_Y\).

Finally, if \(H=Z\), write
\begin{equation}
\rho=\frac12(I+aX+bY+EZ).
\end{equation}
Then
\begin{equation}
\Tr[(\rho-\widehat\rho_Z)\Omega(\rho)]
=(a^2+b^2)\eta_{\perp}(\rho)>0
\end{equation}
away from \(\widehat\rho_Z\).  Therefore, for each
\(H\in\{X,Y,Z\}\),
\begin{equation}
\Tr[(\rho-\widehat\rho_H)\Omega(\rho)]>0
\qquad
\text{for every }
\rho\in\cM_{++}\setminus\{\widehat\rho_H\}.
\end{equation}
Lemma~\ref{lemma: scalar fixed-point certificate} now proves that
\(\widehat\rho_H\) is the unique fixed point in \(\cM_{++}\).
\end{proof}
\section{Gradient of the Channel-Relative-Entropy Objective}
\label{appendix: calculation of derivative}

This appendix derives the explicit expression for $\nabla f(\rho)$ used in Section~\ref{sec: examples} to compare the AB and MD update directions. Direct differentiation of $f$ is inconvenient
because $\rho$ occurs both under a square root and inside matrix logarithms. We therefore proceed in three stages: we reparametrize $\rho=X^2$, differentiate the resulting function with respect to $X$,
and then recover $\nabla f(\rho)$ from a Sylvester equation.

\subsection{Setting and Proof Strategy}

Let $\rho\succ0$ be a state on $A$, set
\begin{equation}
X:=\rho^{1/2},\qquad \widetilde X:=X\otimes I_B,
\end{equation}
and define
\begin{equation}
A_{\cN_1}:=\widetilde X\Gamma^{\cN_1}\widetilde X,
\qquad
A_{\cN_2}:=\widetilde X\Gamma^{\cN_2}\widetilde X.
\label{eq:appendix-output-operators}
\end{equation}
In the example considered in this paper, $\Gamma^{\cN_1}\succeq0$ and
$\Gamma^{\cN_2}\succ0$. Hence $A_{\cN_2}\succ0$, whereas $A_{\cN_1}$
may be singular. For any positive semidefinite $A$, we use the
support logarithm
\begin{equation}
\log_{+}A:=P\log\!\left(A|_{\operatorname{Ran}P}\right)P,
\qquad P:=\supp A.
\label{eq:appendix-support-log}
\end{equation}
With the convention $0\log 0=0$, the objective in
\eqref{eq: definition of f} becomes
\begin{align}
f(\rho)
&=-\Tr_{AB}\!\left[A_{\cN_1}
  \left(\log_{+}A_{\cN_1}-\log A_{\cN_2}\right)\right].
\label{eq:appendix-f}
\end{align}
Define $g(X):=f(X^2)$. We first calculate $\nabla g(X)$ and then use
the chain rule in the following lemma.

\begin{lemma}[Square-root chain rule]
\label{lem:appendix-square-root-chain-rule}
Let the gradients be defined by
\begin{align}
Df(\rho)[\delta\rho]
  &=\Tr_A[\nabla f(\rho)\delta\rho],\\
Dg(X)[\delta X]
  &=\Tr_A[\nabla g(X)\delta X]
\end{align}
for Hermitian perturbations. Then
\begin{equation}
\nabla g(X)=X\nabla f(\rho)+\nabla f(\rho)X.
\label{eq:appendix-chain-rule}
\end{equation}
Consequently, $\nabla f(\rho)$ is uniquely determined by
$\nabla g(X)$ whenever $X\succ0$.
\end{lemma}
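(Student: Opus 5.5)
The plan is to apply the ordinary chain rule to $g(X)=f(X^2)$ and then use the Hilbert--Schmidt duality that defines the gradients. First I will fix a Hermitian perturbation $\delta X$ and note that $\rho=X^2$ moves along $\delta\rho = X\,\delta X+\delta X\,X + O(\|\delta X\|^2)$. This perturbation is Hermitian, so it lies in the class of perturbations allowed in the definition of $\nabla f(\rho)$. The domain of $g$ consists of Hermitian $X$ near $\rho^{1/2}$, and $X^2$ need not have unit trace. So I will interpret $f$ on a neighbourhood of $\rho$ in $\cB(\cH_A)$, using the formula \eqref{eq:appendix-f}, which makes sense for any positive definite argument. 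The chain rule then gives
\begin{align}
Dg(X)[\delta X]=Df(\rho)[X\delta X+\delta X X]=\Tr_A\bigl[\nabla f(\rho)(X\delta X+\delta X X)\bigr].
\end{align}

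Next I will use cyclicity of the trace to rewrite the right-hand side as $\Tr_A[(\nabla f(\rho)X+X\nabla f(\rho))\delta X]$. The operator $X\nabla f(\rho)+\nabla f(\rho)X$ is Hermitian, being the anticommutator of two Hermitian operators. Since $\delta X$ is arbitrary Hermitian and the Hilbert--Schmidt pairing is nondegenerate on $\cB(\cH_A)$, comparing with the definition of $\nabla g(X)$ yields \eqref{eq:appendix-chain-rule}.

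For uniqueness, I will show that the Sylvester map $S_X(G):=XG+GX$ is injective when $X\succ0$. In an eigenbasis of $X$ with eigenvalues $x_i>0$, this map acts entrywise as $G_{ij}\mapsto (x_i+x_j)G_{ij}$, and $x_i+x_j>0$. Hence $S_X$ is invertible, with $\nabla f(\rho)_{ij}=\nabla g(X)_{ij}/(x_i+x_j)$ in that basis. Equivalently, $\nabla f(\rho)=\int_0^\infty e^{-sX}\nabla g(X)e^{-sX}\,ds$.

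The main point needing care is the domain issue: $f$ must be differentiable in an open set of Hermitian matrices, not only along trace-preserving directions. I would handle this by extending $f$ through \eqref{eq:appendix-f}, which is smooth because $A_{\cN_2}\succ0$ and $A_{\cN_1}$ has constant rank near $\rho$. Alternatively, I would restrict both gradients to their tangent components, in which case they are unique only up to multiples of $I$, and the identity still holds after projection.
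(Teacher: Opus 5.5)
Your proof is correct and follows essentially the paper's argument: differentiate $\rho=X^2$ to get $\delta\rho=X\delta X+\delta X X$, move $\delta X$ to the right by cyclicity, and invert $Z\mapsto XZ+ZX$ using $x_i+x_j>0$; your explicit extension of $f$ to all positive-definite arguments via \eqref{eq:appendix-f} is the ambient setting the paper uses implicitly. The one loose point is your alternative aside: under trace-preserving perturbations the admissible $\delta X$ satisfy $\Tr(X\delta X)=0$, so $\nabla g(X)$ is fixed only modulo multiples of $X$ (not $I$), and the identity then holds modulo $\operatorname{span}\{X\}$ rather than after projection onto trace-zero operators.
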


\begin{proof}
The variation of $\rho=X^2$ is
$\delta\rho=X\delta X+\delta X X$. Therefore,
\begin{align}
Dg(X)[\delta X]
&=Df(\rho)[X\delta X+\delta X X]\nonumber\\
&=\Tr_A\!\left[
  \bigl(\nabla f(\rho)X+X\nabla f(\rho)\bigr)\delta X
  \right],
\end{align}
which proves \eqref{eq:appendix-chain-rule}. Since every eigenvalue
of $X$ is positive, the map $Z\mapsto XZ+ZX$ is invertible on the
Hermitian matrices.
\end{proof}

\subsection{Two Matrix-Differential Identities}

The first identity handles the possibly singular term
$\Tr(A_{\cN_1}\log A_{\cN_1})$. Its hypothesis is satisfied here because
$\widetilde X$ is invertible and thus
\begin{equation}
\operatorname{rank}A_{\cN_1}
=\operatorname{rank}\!\left(\widetilde X\Gamma^{\cN_1}\widetilde X\right)
=\operatorname{rank}\Gamma^{\cN_1}
\label{eq:appendix-constant-rank}
\end{equation}
throughout the positive-definite domain of $X$.

\begin{lemma}[Entropy derivative at constant rank]
\label{lem:constant-rank-entropy}
Let $A(t)\succeq0$ be differentiable and have constant rank near
$t=0$. Write $A:=A(0)$, $\dot A:=A'(0)$, and $P:=\supp A$. Then
\begin{equation}
\left.\frac{d}{dt}\Tr[A(t)\log A(t)]\right|_{t=0}
=\Tr[(\log_{+}A+P)\dot A].
\label{eq:appendix-entropy-derivative}
\end{equation}
\end{lemma}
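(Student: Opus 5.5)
The plan is to use constant rank to write $A(t)$ locally as a smooth family restricted to a smoothly varying support. The support projection is then $P(t)$ with $\Tr[A\log_+A]=\sum_{\lambda_i\neq0}\lambda_i\log\lambda_i$. Because the rank is constant near $t=0$, the nonzero eigenvalues stay bounded away from zero: at $t=0$ they are $\geq\lambda_{\min}^+>0$, and by continuity of the spectrum the $r$ largest eigenvalues stay near them, while the remaining $d-r$ eigenvalues are identically zero. So the spectrum splits into a cluster at $0$ and a cluster in $[\lambda_{\min}^+/2,\infty)$, separated by a fixed gap.

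Choose a contour $\mathcal C$ in the right half-plane enclosing the nonzero cluster but not $0$. Then define
\begin{align}
F(t):=\Tr[A(t)\log_+A(t)]=\frac{1}{2\pi i}\oint_{\mathcal C}z\log z\,\Tr[(z-A(t))^{-1}]\,dz ,
\end{align}
which holds because the zero eigenvalues contribute $0\log0=0$ and lie outside $\mathcal C$. Equivalently, with the Riesz projection $P(t)=\frac{1}{2\pi i}\oint_{\mathcal C}(z-A(t))^{-1}dz$, we have $F(t)=\Tr[\phi(A(t))]$ for any smooth function $\phi$ that agrees with $x\log x$ near the nonzero cluster and vanishes near $0$.

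The key step is to differentiate under the integral, using
\begin{align}
\frac{d}{dt}(z-A)^{-1}=(z-A)^{-1}\dot A(z-A)^{-1}
\end{align}
and cyclicity of the trace:
\begin{align}
F'(0)=\frac{1}{2\pi i}\oint_{\mathcal C}z\log z\,\Tr[(z-A)^{-2}\dot A]\,dz=\Tr[\phi'(A)\dot A].
\end{align}
This is the standard trace-derivative formula $\frac{d}{dt}\Tr\phi(A(t))=\Tr[\phi'(A)\dot A]$, which holds for smooth $\phi$ because the trace kills the off-diagonal divided differences. On the support of $A$, $\phi'(\lambda)=\log\lambda+1$, and on the kernel $\phi'=0$, since $\phi$ vanishes near $0$. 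Hence $\phi'(A)=\log_+A+P$, which gives \eqref{eq:appendix-entropy-derivative}.

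The main obstacle is justifying that the contour (smooth cutoff) representation is legitimate uniformly for $t$ near $0$. This requires that no eigenvalue crosses from $0$ into the cluster or drifts from the cluster toward $0$. Constant rank together with continuity of eigenvalues (Weyl's inequality) gives exactly this: the eigenvalues in the zero group must equal $0$ exactly, since otherwise the rank would increase. The rest is routine differentiation of the resolvent, which is smooth on $\mathcal C$.
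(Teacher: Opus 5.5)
Your proof is correct, but it takes a different route from the paper.

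\textbf{The paper's route.} It never analyzes the zero eigenvalues directly. It chooses a differentiable factorization $A(t)=Y(t)Y(t)^{\dagger}$ with $Y(t)$ having $r$ independent columns. It then passes to the positive-definite $r\times r$ matrix $H(t)=Y(t)^{\dagger}Y(t)$, which has the same nonzero spectrum, so $\Tr[A\log A]=\Tr[H\log H]$. Next it applies the full-rank formula $\frac{d}{dt}\Tr[H\log H]=\Tr[(\log H+I)\dot H]$. Finally it transfers the result back to $A$ via the polar decomposition $Y=UH^{1/2}$ and cyclicity of the trace.

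\textbf{Your route.} You work with $A(t)$ itself. Weyl's inequality together with constant rank forces the eigenvalues near $0$ to be exactly $0$. Hence $\Tr[A(t)\log A(t)]$ equals a Riesz contour integral over a fixed contour separating the positive cluster from $0$. Differentiating the resolvent is justified by uniform convergence of the difference quotient on the compact contour. Cauchy's formula for the derivative then gives $\Tr[(\log_{+}A+P)\dot A]$ directly, so you do not even need to invoke the Daleckii--Krein formula.

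\textbf{What each buys.} The paper's argument is algebraically lighter. However, it takes for granted that a differentiable full-column-rank factorization $Y(t)$ exists, and constructing one requires a constant-rank argument of exactly the kind you supply, such as differentiability of the support projection or of a suitably cut-off square root. Your argument makes the role of constant rank explicit: it is precisely what prevents eigenvalues from emerging from $0$ and contributing terms $\lambda\log\lambda$ with unbounded derivative. It also extends verbatim to $\frac{d}{dt}\sum_{\lambda_i(t)\neq0}\phi(\lambda_i(t))=\Tr[\phi'(A)P\dot A]$ for any $\phi$ holomorphic near the positive spectrum. The price is the use of holomorphic functional calculus in place of elementary matrix algebra.
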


\begin{proof}
Let $r$ be the common rank and choose a differentiable local
factorization $A(t)=Y(t)Y(t)^{\dagger}$, where $Y(t)$ has $r$
linearly independent columns. The positive-definite matrix
$H(t):=Y(t)^{\dagger}Y(t)$ has the same nonzero eigenvalues as $A(t)$.
Hence
\begin{equation}
\Tr[A(t)\log A(t)]=\Tr[H(t)\log H(t)].
\end{equation}
At $t=0$, write the polar decomposition as $Y=UH^{1/2}$. Then
$P=UU^{\dagger}$, $A=UHU^{\dagger}$, and
$\log_{+}A=U(\log H)U^{\dagger}$. Using
\begin{align}
\dot A&=\dot Y Y^{\dagger}+Y\dot Y^{\dagger},&
\dot H&=\dot Y^{\dagger}Y+Y^{\dagger}\dot Y,
\end{align}
cyclicity of the trace gives
\begin{align}
\left.\frac{d}{dt}\Tr[H(t)\log H(t)]\right|_{t=0}
&=\Tr[(\log H+I)\dot H]\nonumber\\
&=\Tr[(\log_{+}A+P)\dot A],
\end{align}
which proves the claim.
\end{proof}

The second identity treats the derivative of $\log A_{\cN_2}$ in the
cross term $\Tr(A_{\cN_1}\log A_{\cN_2})$.

\begin{lemma}[Fr\'echet derivative of the logarithm]
\label{lem:log-frechet}
For $A\succ0$, define the linear map
\begin{equation}
\mathcal T_A(E):=D(\log A)[E]
=\int_0^\infty(A+sI)^{-1}E(A+sI)^{-1}\,ds.
\label{eq:appendix-log-frechet}
\end{equation}
This map is self-adjoint with respect to the Hilbert--Schmidt inner
product:
\begin{equation}
\Tr[C\mathcal T_A(E)]=\Tr[\mathcal T_A(C)E].
\label{eq:appendix-log-self-adjoint}
\end{equation}
If $A=U\operatorname{diag}(m_1,\ldots,m_n)U^{\dagger}$, where
$m_i>0$, then
\begin{equation}
\mathcal T_A(E)=U\bigl(K\odot(U^{\dagger}EU)\bigr)U^{\dagger},
\label{eq:appendix-log-spectral}
\end{equation}
where $\odot$ denotes the Hadamard product and
\begin{equation}
K_{ij}=\begin{cases}
\dfrac{\log m_i-\log m_j}{m_i-m_j},&m_i\ne m_j,\\[2mm]
\dfrac{1}{m_i},&m_i=m_j.
\end{cases}
\end{equation}
\end{lemma}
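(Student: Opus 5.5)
We establish the three claims in turn: the integral formula for $\mathcal T_A(E)$, its Hilbert--Schmidt self-adjointness, and its spectral (Hadamard) form. The natural starting point is the resolvent representation of the logarithm. For $A\succ0$,
\begin{equation*}
\log A=\int_0^\infty\bigl[(1+s)^{-1}I-(A+sI)^{-1}\bigr]\,ds .
\end{equation*}
To verify this, diagonalize $A$ and use the scalar identity $\log m=\int_0^\infty\bigl[(1+s)^{-1}-(m+s)^{-1}\bigr]ds$ for $m>0$.

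\emph{Integral formula.} Differentiate under the integral along $A+tE$. The resolvent identity gives
\begin{equation*}
\frac{d}{dt}(A+tE+sI)^{-1}\Big|_{t=0}=-(A+sI)^{-1}E(A+sI)^{-1},
\end{equation*}
so the minus signs cancel and we obtain \eqref{eq:appendix-log-frechet}. Differentiation under the integral sign is justified by domination. Fix $0<\lambda\le\lambda_{\min}(A)/2$ and take $|t|$ small enough that $A+tE\succeq\lambda I$. Then $\|(A+tE+sI)^{-1}\|\le(\lambda+s)^{-1}$, so the integrand of the derivative is bounded by $\|E\|(\lambda+s)^{-2}$, which is integrable on $[0,\infty)$. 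This also shows that the Fr\'echet derivative exists and is linear in $E$. Hence $\mathcal T_A$ is well defined.

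\emph{Self-adjointness.} Write $R_s:=(A+sI)^{-1}$, which is Hermitian. By cyclicity of the trace,
\begin{equation*}
\Tr[C R_s E R_s]=\Tr[R_s C R_s E].
\end{equation*}
Integrate in $s$; the integrals converge absolutely, so trace and integral may be interchanged. This yields \eqref{eq:appendix-log-self-adjoint}.

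\emph{Spectral form.} Write $A=UMU^\dagger$ with $M=\operatorname{diag}(m_1,\ldots,m_n)$ and set $F:=U^\dagger EU$. Then
\begin{equation*}
R_s=U(M+sI)^{-1}U^\dagger,\qquad \bigl[(M+sI)^{-1}F(M+sI)^{-1}\bigr]_{ij}=\frac{F_{ij}}{(m_i+s)(m_j+s)} .
\end{equation*}
It remains to evaluate the scalar integral $\int_0^\infty\frac{ds}{(m_i+s)(m_j+s)}$.
\begin{itemize}
\item If $m_i\ne m_j$, use partial fractions:
\begin{equation*}
\frac{1}{(m_i+s)(m_j+s)}=\frac{1}{m_j-m_i}\Bigl[\frac{1}{m_i+s}-\frac{1}{m_j+s}\Bigr].
\end{equation*}
Integrating gives $\frac{1}{m_j-m_i}\bigl[\log\frac{m_i+s}{m_j+s}\bigr]_0^\infty=\frac{\log m_j-\log m_i}{m_j-m_i}=K_{ij}$.
\item If $m_i=m_j$, the integral is $\int_0^\infty(m_i+s)^{-2}ds=1/m_i=K_{ij}$.
\end{itemize}
Assembling the entries gives \eqref{eq:appendix-log-spectral}. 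As a consistency check, symmetry of $K$ gives a second proof of self-adjointness.

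The only step needing care is the interchange of limit and integral in the first part, which the uniform bound $(\lambda+s)^{-2}$ handles.
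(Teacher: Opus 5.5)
Your proof is correct and follows the same route as the paper: self-adjointness comes from cyclicity of the trace under the integral, and the Hadamard form from evaluating $\int_0^\infty (m_i+s)^{-1}(m_j+s)^{-1}\,ds$ in an eigenbasis of $A$. The only difference is that you also derive the integral formula for $D(\log A)[E]$ from the resolvent representation of $\log A$, using a domination argument; the paper simply takes that formula as given.
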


\begin{proof}
The integral representation in \eqref{eq:appendix-log-frechet}
immediately implies \eqref{eq:appendix-log-self-adjoint} by cyclicity
of the trace. Evaluating the integral in an eigenbasis of $A$ gives
\eqref{eq:appendix-log-spectral}.
\end{proof}

\subsection{Gradient Formula}

We now state and prove the desired expression. Besides giving the
gradient, the proposition separates the contribution obtained by
differentiating $A_{\cN_1}$ from the contribution obtained by
differentiating $\log A_{\cN_2}$.

\begin{proposition}[Gradient of $f$]
\label{proposition:appendix-gradient-formula}
Let
\begin{align}
P_{\cN_1}&:=\supp A_{\cN_1},\nonumber\\
G&:=\log_{+}A_{\cN_1}+P_{\cN_1}-\log A_{\cN_2},\nonumber\\
L&:=\mathcal T_{A_{\cN_2}}(A_{\cN_1}).
\label{eq:appendix-G-L}
\end{align}
Then
\begin{align}
\nabla g(X)=-\Tr_B\bigl(&
\Gamma^{\cN_1}\widetilde XG+G\widetilde X\Gamma^{\cN_1}
\nonumber\\[-1mm]
&-\Gamma^{\cN_2}\widetilde XL-L\widetilde X\Gamma^{\cN_2}
\bigr).
\label{eq:appendix-gradient-g}
\end{align}
Moreover, $\nabla f(\rho)$ is the unique Hermitian solution of
\begin{equation}
X\nabla f(\rho)+\nabla f(\rho)X=\nabla g(X).
\label{eq:appendix-gradient-f-sylvester}
\end{equation}
Equivalently, if
$X=U\operatorname{diag}(x_1,\ldots,x_d)U^{\dagger}$, then
\begin{equation}
\bigl[U^{\dagger}\nabla f(\rho)U\bigr]_{ij}
=\frac{\bigl[U^{\dagger}\nabla g(X)U\bigr]_{ij}}{x_i+x_j}.
\label{eq:appendix-gradient-f-eigenbasis}
\end{equation}
\end{proposition}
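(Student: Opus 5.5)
We will compute the directional derivative $Dg(X)[\delta X]$ for a Hermitian perturbation $\delta X$ and read off the Hilbert--Schmidt gradient; the Sylvester characterization \eqref{eq:appendix-gradient-f-sylvester} and its eigenbasis form \eqref{eq:appendix-gradient-f-eigenbasis} then follow directly from Lemma~\ref{lem:appendix-square-root-chain-rule}. To see the eigenbasis form, conjugate by $U$: the map $Z\mapsto XZ+ZX$ becomes entrywise multiplication by $x_i+x_j>0$. This also settles uniqueness and Hermiticity.

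Write $\widetilde{\delta X}:=\delta X\otimes I_B$. From \eqref{eq:appendix-output-operators}, the perturbations of the two output operators are
\begin{align*}
\dot A_{\cN_i}=\widetilde{\delta X}\,\Gamma^{\cN_i}\widetilde X+\widetilde X\Gamma^{\cN_i}\widetilde{\delta X},\qquad i=1,2.
\end{align*}
We split \eqref{eq:appendix-f} as $g=-\Tr[A_{\cN_1}\log_+A_{\cN_1}]+\Tr[A_{\cN_1}\log A_{\cN_2}]$.

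For the first term, \eqref{eq:appendix-constant-rank} shows that the rank is constant on the positive-definite domain. Lemma~\ref{lem:constant-rank-entropy} therefore gives the derivative $-\Tr[(\log_+A_{\cN_1}+P_{\cN_1})\dot A_{\cN_1}]$.

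For the cross term, the product rule gives two pieces. Differentiating $A_{\cN_1}$ yields $\Tr[\log A_{\cN_2}\,\dot A_{\cN_1}]$. Differentiating the logarithm yields $\Tr[A_{\cN_1}\mathcal T_{A_{\cN_2}}(\dot A_{\cN_2})]$, which by the self-adjointness \eqref{eq:appendix-log-self-adjoint} equals $\Tr[L\,\dot A_{\cN_2}]$ with $L=\mathcal T_{A_{\cN_2}}(A_{\cN_1})$. Collecting these terms gives
\begin{align*}
Dg(X)[\delta X]=-\Tr[G\,\dot A_{\cN_1}]+\Tr[L\,\dot A_{\cN_2}],
\end{align*}
where $G$ is as in \eqref{eq:appendix-G-L}.

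It remains to convert $\Tr[M(\widetilde{\delta X}\Gamma\widetilde X+\widetilde X\Gamma\widetilde{\delta X})]$ into the form $\Tr_A[(\cdot)\,\delta X]$. By cyclicity this trace equals $\Tr_{AB}[(\Gamma\widetilde X M+M\widetilde X\Gamma)(\delta X\otimes I_B)]$. The identity $\Tr_{AB}[Y(\delta X\otimes I_B)]=\Tr_A[\Tr_B(Y)\,\delta X]$ then produces exactly \eqref{eq:appendix-gradient-g}, with $M=G,\Gamma=\Gamma^{\cN_1}$ carrying the minus sign and $M=L,\Gamma=\Gamma^{\cN_2}$ carrying the plus sign.

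The operator $\Gamma\widetilde XM+M\widetilde X\Gamma$ is Hermitian, because $\Gamma$, $M$ and $\widetilde X$ are Hermitian and the second summand is the adjoint of the first. Its partial trace is therefore Hermitian as well, so $\nabla g(X)$ is a legitimate Hermitian gradient.

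The main delicate point is the singular term $\Tr[A_{\cN_1}\log A_{\cN_1}]$. A naive derivative formula would involve $\log$ of zero eigenvalues. This is handled entirely by the constant-rank Lemma~\ref{lem:constant-rank-entropy}, which applies because $\widetilde X$ is invertible. The only other care needed is to keep the support projector $P_{\cN_1}$ inside $G$, rather than replacing it by the identity.
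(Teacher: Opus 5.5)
Your proposal is correct and follows essentially the same route as the paper's proof: the constant-rank entropy lemma handles the possibly singular term, self-adjointness of the Fr\'echet derivative of the logarithm produces $L$, cyclicity plus the partial-trace identity reads off $\nabla g(X)$, and the square-root chain rule with division by $x_i+x_j>0$ gives the Sylvester form and its eigenbasis version. Your explicit check that $\nabla g(X)$ is Hermitian, so that the Sylvester solution is Hermitian, is a small point the paper leaves implicit.
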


\begin{proof}
For a Hermitian perturbation $\delta X$, set
$\delta\widetilde X:=(\delta X)\otimes I_B$. The corresponding
variations of the two output operators are
\begin{align}
\delta A_{\cN_1}
&=(\delta\widetilde X)\Gamma^{\cN_1}\widetilde X
  +\widetilde X\Gamma^{\cN_1}(\delta\widetilde X),
\label{eq:appendix-delta-N1}\\
\delta A_{\cN_2}
&=(\delta\widetilde X)\Gamma^{\cN_2}\widetilde X
  +\widetilde X\Gamma^{\cN_2}(\delta\widetilde X).
\label{eq:appendix-delta-N2}
\end{align}

By \eqref{eq:appendix-constant-rank} and
Lemma~\ref{lem:constant-rank-entropy},
\begin{equation}
D\Tr[A_{\cN_1}\log A_{\cN_1}][\delta X]
=\Tr[(\log_{+}A_{\cN_1}+P_{\cN_1})\delta A_{\cN_1}].
\label{eq:appendix-first-trace-term}
\end{equation}
For the cross term, the product rule gives
\begin{align}
D\Tr[A_{\cN_1}\log A_{\cN_2}][\delta X]
={}&\Tr[(\delta A_{\cN_1})\log A_{\cN_2}]\nonumber\\
&+\Tr[A_{\cN_1}\mathcal T_{A_{\cN_2}}(\delta A_{\cN_2})].
\label{eq:appendix-cross-term}
\end{align}
The self-adjointness in \eqref{eq:appendix-log-self-adjoint} and the
definition of $L$ imply
\begin{equation}
\Tr[A_{\cN_1}\mathcal T_{A_{\cN_2}}(\delta A_{\cN_2})]
=\Tr[L\delta A_{\cN_2}].
\label{eq:appendix-cross-self-adjoint}
\end{equation}
Combining \eqref{eq:appendix-first-trace-term}--
\eqref{eq:appendix-cross-self-adjoint} with the minus sign in
\eqref{eq:appendix-f}, we obtain
\begin{equation}
Dg(X)[\delta X]
=-\Tr[G\delta A_{\cN_1}]+\Tr[L\delta A_{\cN_2}].
\label{eq:appendix-Dg-output-variations}
\end{equation}

Substitute \eqref{eq:appendix-delta-N1} and
\eqref{eq:appendix-delta-N2} into
\eqref{eq:appendix-Dg-output-variations}, and move
$\delta\widetilde X$ to the left by cyclicity of the trace. This
yields
\begin{align}
Dg(X)[\delta X]= -\Tr_{AB}\Bigl[(\delta\widetilde X)\bigl(&
\Gamma^{\cN_1}\widetilde XG+G\widetilde X\Gamma^{\cN_1}
\nonumber\\[-1mm]
&-\Gamma^{\cN_2}\widetilde XL-L\widetilde X\Gamma^{\cN_2}
\bigr)\Bigr].
\label{eq:appendix-Dg-delta-X}
\end{align}
For every operator $Y$ on $AB$,
\begin{equation}
\Tr_{AB}[((\delta X)\otimes I_B)Y]
=\Tr_A[\delta X\Tr_B(Y)].
\label{eq:appendix-partial-trace-identity}
\end{equation}
Comparing \eqref{eq:appendix-Dg-delta-X} with the definition of
$\nabla g(X)$ proves \eqref{eq:appendix-gradient-g}.

Equation \eqref{eq:appendix-gradient-f-sylvester} follows from
Lemma~\ref{lem:appendix-square-root-chain-rule}. In an eigenbasis of
$X$, its $(i,j)$ entry is
\begin{equation}
(x_i+x_j)\bigl[U^{\dagger}\nabla f(\rho)U\bigr]_{ij}
=\bigl[U^{\dagger}\nabla g(X)U\bigr]_{ij}.
\end{equation}
Because $x_i+x_j>0$, division gives
\eqref{eq:appendix-gradient-f-eigenbasis} and also proves uniqueness.
\end{proof}

Thus, the gradient used in the numerical comparisons is obtained by
computing $L$ from \eqref{eq:appendix-log-spectral}, evaluating
\eqref{eq:appendix-gradient-g}, and solving the Sylvester equation
\eqref{eq:appendix-gradient-f-sylvester}.

\bibliographystyle{IEEEtran}
\bibliography{references}
\end{document}